\documentclass[11pt]{article}
\usepackage{epsfig,amssymb,amsmath,amsthm,graphicx,mathtools,
xcolor,multirow,booktabs,hyperref}

\usepackage[sectionbib,longnamesfirst]{natbib}
\bibliographystyle{chicago}  
\setcitestyle{authoryear}
\shortcites{Heaton2018,Fuglstad2019} 

\hoffset=-35pt
\advance\topmargin by -60pt
\textwidth=650pt
\textheight=680pt
\topmargin -2cm
\oddsidemargin 1.5cm
\evensidemargin 1.5cm

\newcommand{\mcD}{\mathcal{D}}
\newcommand{\bfbeta}{\mbox{\boldmath $\beta$}}
\newcommand{\bftheta}{\mbox{\boldmath $\theta$}}
\newcommand{\bfvartheta}{\boldsymbol{\vartheta}}

\newcommand{\bfh}{\mbox{\boldmath $h$}}

\newcommand{\bfk}{\mbox{\boldmath $k$}}

\newcommand{\bfu}{\mbox{\boldmath $u$}}
\newcommand{\bfg}{\mbox{\boldmath $g$}}

\newcommand{\bfomega}{\mbox{\boldmath $\omega$}}

\newcommand{\mi}{\mathrm{i}} 
\newcommand{\compconj}[1]{\overline{#1}}

\numberwithin{equation}{section}
\newtheorem{lemma}{Lemma}
\newtheorem{thm}{Theorem}
\newtheorem{cor}{Corollary}

\newtheorem{proposition}{Proposition}

\setlength{\textwidth}{6in}

\begin{document}

\begin{center}
\large{\textbf{APPROXIMATE REFERENCE PRIORS FOR GAUSSIAN \\ RANDOM FIELDS}}

\vspace{1cm}

Victor De Oliveira \\
Department of Management Science and Statistics \\
The University of Texas at San Antonio, U.S.A. \\
{\tt victor.deoliveira@utsa.edu}

\vspace{0.5cm}

Zifei Han\footnote{Corresponding author.} \\
School of Statistics \\
University of International Business and Economics, China \\
{\tt zifeihan@uibe.edu.cn}

\vspace{0.3in}
 Jan 7, 2022 \\
(final revision)
\end{center}

\vspace{0.1in}

{\small
\begin{center}
\textbf{Abstract}
\end{center}

\noindent
Reference priors are 
theoretically attractive 
for the analysis of geostatistical data 
since they enable automatic Bayesian analysis and have desirable Bayesian and frequentist properties. 
But their use is hindered by computational hurdles that make their application in practice challenging. 
In this work, we derive 
a new class of default priors that approximate reference priors
for the parameters of some Gaussian random fields. 
It is based on an approximation to the integrated likelihood of the covariance parameters 
derived from the spectral approximation of stationary random fields. 
 This prior depends on the structure of the mean function and the 
spectral density 
of the model evaluated at a set of spectral points associated with an auxiliary regular grid.
In addition to preserving the desirable Bayesian and frequentist properties, 
these approximate reference priors are more stable, and their computations are  
much less onerous than those of exact reference priors. 
Unlike exact reference priors, the marginal approximate reference prior of correlation parameter 
is always proper, regardless of the mean function or the smoothness of the correlation function. 
This property has important consequences for covariance model selection.
An illustration comparing default Bayesian analyses
is provided with a data set of lead pollution in Galicia, Spain.
\vspace{0.2in}

\noindent
{\bf Key words}: 
Bayesian analysis, default prior, geostatistics, spectral representation. 

\noindent
{\bf Running headline}: 
Default Prior for Gaussian Random Fields
}

\vspace{0.2in}

\newpage

\section{Introduction}  \label{sec:intro}

Random fields are ubiquitous for the modeling of spatial data in most natural and 
earth sciences. 
Among these, Gaussian random fields play a prominent role due to their versatility to 
model spatially varying phenomena, and because they serve as building blocks for the
construction of more elaborate models 
\citep{Zimmerman2010,Gelfand2016}.
When the main goal of the data analysis is spatial interpolation,
the Bayesian approach offers some advantages over the frequentist plug--in approach since
it accounts for parameter uncertainty.  
One of the challenges for implementing the Bayesian approach is the specification of 
sensible prior distributions for covariance parameters. 
The early works specified prior distributions in an ad--hoc manner
\citep{Kitanidis1986,Handcock1993,DeOliveira1997}, but these may yield unwanted results, 
including improper posteriors.
Sensible priors for covariance parameters must depend on the scale in the problem, 
for which little subjective information is usually available, and must also guarantee
posterior propriety.

A theoretically sound alternative to ad--hoc and subjective prior specifications 
consists of using information--based default priors, and among these 
reference priors have been the most studied. 
\cite{Berger2001} provided an extensive discussion on foundational issues involving
the formulation of default prior distributions, and initiated work on default (objective) 
Bayesian methods for the analysis of spatial data.
They advocated for the use of reference priors for Bayesian analysis of spatial data 
due to their theoretical guarantees and the empirically observed good frequentist properties 
of inferences based on these priors.
In particular, they showed that these priors overcome several drawbacks of previously 
proposed priors (e.g., they are guaranteed to be proper).
\cite{Berger2001} focused on Gaussian random fields with isotropic correlation functions 
depending on a single range parameter, and
extensions of this methodology have been developed for the analysis of more elaborate models.
\cite{Paulo2005} developed reference priors for separable correlation functions depending on
several range parameters, while \cite{DeOliveira2007} developed reference priors for
isotropic correlation functions with an unknown nugget parameter and a known range parameter. 
\cite{Kazianka2012} and \cite{Ren2012} both developed reference priors for
isotropic correlation functions with unknown range and nugget parameters, while
\cite{Kazianka2013} developed reference priors for geometrically anisotropic correlation functions.
\cite{Ren2013} considered more general mean functions and models with 
separable correlation functions, while \cite{Gu2018} 
established reference posterior propriety 
for separable correlation functions based on more general designs, 
and investigated robustness properties of inferences based on the resulting posteriors; 
\cite{DeOliveira2010} provided a review of Jeffreys and reference priors 
for geostatistical and lattice data models up to 2010.
The above works focus on the derivation of reference priors and the study of their properties,
either for the analysis of geostatistical data or computer emulation data. 
But their 
implementation is hindered by computational challenges that render their use 
prohibitive in large data sets. 
As a result, in spite of their theoretically appealing properties, 
reference priors are seldom used in geostatistical  applications, even for the basic model studied in \cite{Berger2001}, although they have sometimes been used in computer emulation applications.
Computationally scalable approximations that retain the theoretical properties 
can be a better alternative.

In this work, we use the spectral approximation to stationary random fields to derive 
a new class of easy--to--compute default priors that approximate 
reference priors. 
Spectral approximations have been used for likelihood approximation,   
Bayesian inference, and model diagnostics by \cite{Royle2005}, 
\cite{Paciorek2007} and \cite{Bose2018}, among others. 
We use them here for default prior elicitation, but
unlike previous works, we do not assume the {\it sampling design} is regular.
Instead, we approximate the distribution of the random field at an  
{\it auxiliary regular design}, and use this to obtain a default prior for 
the model parameters using the reference prior algorithm.
By tuning the auxiliary design, we obtain a good approximation to the reference prior 
computed from the distribution of the random field at the sampling design. 
The computation and analysis of these approximate reference priors are considerably 
simpler, and their computations are more stable than those of exact reference priors. 
For models with a constant mean function, 
 the simplifications are even more substantial as the resulting approximate reference prior 
has a matrix--free expression.
In addition, for the model considered in this work, the approximate marginal 
reference prior of the correlation parameter is proper, 
regardless of the smoothness of the random field.
This is not the case for the exact marginal  reference prior, 
which has important consequences when using default priors for covariance function selection.
The resulting joint approximate reference posterior of all model parameters 
is proper as well.

The computation of the approximate reference prior relies on the spectral density function 
of the random field rather than on its covariance function.
The proposed methodology assumes the model has an explicit (or easy to compute) spectral density 
that is differentiable w.r.t. the correlation parameter, and has a general form that includes 
many families previously proposed in the literature.
Examples of such spectral densities include the isotropic Mat\'ern model \citep{Stein1999}, 
the model proposed in \cite{Laga2017}, and some of the isotropic models with 
rational spectral densities studied in \cite{Vecchia1985} and \cite{Jones1993}.
 The proposed methodology is illustrated using a data set of lead pollution in Galicia, Spain. 
Some details of theoretical and practical results are given in the Supplementary Materials.

\subsection{The Data and Random Field Model}  \label{subsec:data-models}

Geostatistical data consist of triplets 
$\{({\bf s}_i, \boldsymbol{f}({\bf s}_i), z_i) : i=1,\ldots,n\}$, 
where $\mathcal{S}_n = \{{\bf s}_1,\ldots,{\bf s}_n\}$ is a set of sampling locations in 
the region of interest $\mcD$, called the {\it sampling design}, 
$\boldsymbol{f}({\bf s}_i) = (f_1({\bf s}_i),\ldots, f_p({\bf s}_i))^{\top}$ is a
$p$--dimensional vector with covariates measured at ${\bf s}_i$ 
(usually $f_{1}({\bf s}) \equiv 1$), 
and $z_i \in {\mathbb R}$ is the measurement of the quantity of interest collected at ${\bf s}_i$.
The stochastic approach relies on viewing the set of measurements $\{z_i\}_{i=1}^{n}$ 
as a partial realization of a random field $Z(\cdot)$. 

Let $\{Z({\bf s}) : {\bf s} \in \mcD\}$ be a Gaussian random field with mean function 
$\mu({\bf s})$ and covariance function $C({\bf s},{\bfu})$, 
with $\mcD \subset {\mathbb R}^d$ and $d \geq 1$. 
 It is typically assumed that 
$\mu({\bf s}) =  \sum_{j=1}^{p} \beta_{j} f_{j}({\bf s})$, where
$\bfbeta = (\beta_{1},\ldots,\beta_{p})^{\top} \in {\mathbb R}^{p}$ are 
unknown regression parameters.
Additionally, $C({\bf s},{\bfu})$ is assumed isotropic and belonging to a parametric family, 
$\big\{C_{\boldsymbol{\theta}}({\bf s},{\bfu}) = 
\sigma^2 K_{\boldsymbol{\vartheta}}(||{\bf s} - {\bfu}||) 
: \bftheta = (\sigma^2, \mbox{\boldmath $\vartheta$}) \in (0,\infty) \times \Theta \big\}$, 
$\Theta \subset {\mathbb R}^q$, where $K_{\boldsymbol{\vartheta}}(\cdot)$ is an isotropic 
correlation function in ${\mathbb R}^d$ and $\| \cdot \|$ is the Euclidean norm. 
A widely used model is the Mat\'ern family with the parametrization proposed 
in \cite{Handcock1993} 
\begin{equation} 
\label{eq:matern-cov}
C_{\boldsymbol{\theta}}(r) = \frac{\sigma^2}{2^{\nu - 1} \Gamma(\nu)}
\left( \frac{2\sqrt{\nu}}{\vartheta} r \right)^{\nu} 
\mathcal{K}_{\nu}\left( \frac{2\sqrt{\nu}}{\vartheta} r \right) , \quad\quad r \geq 0 ,
\end{equation}
where $r$ is Euclidean distance,
$\sigma^2 > 0$, $\mbox{\boldmath $\vartheta$} = (\vartheta,\nu) \in (0,\infty)^2$, 
$\Gamma(\cdot)$ is the gamma function and  $\mathcal{K}_{\nu}(\cdot)$ is the modified Bessel function 
of second kind and order $\nu$ \citep{Abramowitz1964}.
It holds that $\sigma^2 = {\rm var}(Z({\bf s}))$, 
$\vartheta$ (mostly) controls how fast $C_{\boldsymbol{\theta}}(r)$ goes to zero when 
$r$ increases, and $\nu$ controls the degree of differentiability of
$C_{\boldsymbol{\theta}}(r)$ at $r = 0$. 
From these interpretations, $\sigma^2$ is called the {\it variance} parameter, 
$\vartheta$ the {\it range} parameter and $\nu$ the {\it smoothness} parameter. 

Sometimes in applications the measurements $z_i$ are corrupted by measurement error,
in which case they are modeled as 
$z_i = Z({\bf s}_i) + \epsilon_i$, for $i=1, \ldots, n$, where
$\epsilon_1,\ldots,\epsilon_n$ are i.i.d. with ${\rm N}(0,\tau^2)$ distribution and 
independent of $Z(\cdot)$; $\tau^2 \geq 0$ is called the {\it nugget} parameter. 
 The Mat\'ern family of covariance functions (\ref{eq:matern-cov}) is featured in this work 
as the primary example, but the proposed methodology also applies to other covariance families with
explicit spectral density functions.

\section{Reference Priors}  \label{sec:ref-priors}

In this work, we develop an approximate reference prior for the basic model studied in 
\cite{Berger2001} that assumes the data have no measurement error ($\tau^2 = 0$) and 
the correlation function depends on a single unknown range parameter; 
any other correlation parameter (e.g., $\nu$ in the Mat\'ern family) is assumed known,
a common assumption in geostatistical applications. 
Hence unless stated otherwise, for the remaining of the article we assume 
$\boldsymbol{\vartheta} = \vartheta$, a single range parameter, so $q=1$ and $\Theta = (0,\infty)$. 
This model provides the starting point to develop the proposed methodology.
Extensions to other models are currently being developed and will be considered elsewhere.

\subsection{Derivation}  \label{sec:derivations}
Below we briefly summarize the development of reference priors in models with 
regression parameters $\bfbeta$ and covariance parameters $\bftheta \;  = (\sigma^2,\vartheta)$.
It involves the following steps.
First, the parameters are classified as either of primary or secondary interest. 
The covariance parameters are typically considered of primary interest, and the regression parameters 
are of secondary interest. 
Second, the prior is factored accordingly as $\pi^{\rm R}(\bfbeta,\bftheta) 
= \pi^{\rm R}(\bfbeta ~|~ \bftheta) \pi^{\rm R}(\bftheta)$.
Third, the conditional Jeffreys prior of the secondary parameters given the primary parameters 
is computed, which for the current model is $\pi^{\rm R}(\bfbeta ~|~ \bftheta) \propto 1$.
Finally, $\pi^{\rm R}(\bftheta)$ is computed using the Jeffreys prior based 
on the `marginal model' defined via the integrated likelihood of $\bftheta$
\begin{eqnarray}
L^{\rm I}(\bftheta; \boldsymbol{z}) & = & \int_{\mathbb{R}^{p}} 
L(\bfbeta, \bftheta; \boldsymbol{z}) \pi^{\rm R}(\bfbeta ~|~ \bftheta) d \bfbeta \nonumber
\label{eq:intlik-org} \\
& \propto & (\sigma^{2})^{-\frac{n-p}{2}} 
| \boldsymbol{\Sigma}_{\vartheta}|^{-\frac{1}{2}} 
|\boldsymbol{X}^{\top} \boldsymbol{\Sigma}_{\vartheta}^{-1}
\boldsymbol{X}|^{- \frac{1}{2}}
\exp\Big\{ -\frac{S^{2}_{\vartheta}}{2 \sigma^2} \Big\} , 
\end{eqnarray}
where $L(\bfbeta, \bftheta; \boldsymbol{z})$ is the Gaussian likelihood 
of all model parameters based on the data $\boldsymbol{z} = (z_1,\ldots,z_n)^{\top}$, 
$S^{2}_{\vartheta} =  
(\boldsymbol{z} - \boldsymbol{X} \hat{\bfbeta}_{\vartheta})^{\top} 
\boldsymbol{\Sigma}_{\vartheta}^{-1} 
(\boldsymbol{z} - \boldsymbol{X} \hat{\bfbeta}_{\vartheta})$, 
$\hat{\bfbeta}_{\vartheta} =
(\boldsymbol{X}^{\top}\boldsymbol{\Sigma}^{-1}_{\vartheta} \boldsymbol{X})^{-1}
\boldsymbol{X}^{\top}\boldsymbol{\Sigma}^{-1}_{\vartheta} \boldsymbol{z}$, 
$\boldsymbol{X}$ is the known $n \times p$ design matrix with entries  
$\boldsymbol{X}_{ij} = f_{j}({\bf s}_i)$, and 
$\boldsymbol{\Sigma}_{\vartheta}$ is the $n \times n$ matrix with entries
$(\boldsymbol{\Sigma}_{\vartheta})_{ij} = 
K_{\vartheta}(\|{\bf s}_i - {\bf s}_j\|)$.

An alternative expression for the above integrated likelihood was used by 
\citet[Lemma 1]{DeOliveira2007} and \cite{Mure2021} to derive an alternative representation 
for reference priors. 
Let $\boldsymbol{W}$ be a full rank $n \times (n-p)$ matrix satisfying 
$\boldsymbol{W}^{\top}\boldsymbol{W} = \boldsymbol{I}_{n-p}$ and
$\boldsymbol{X}^{\top}\boldsymbol{W}$ equals to the $p \times (n-p)$ null matrix, so
the columns of $\boldsymbol{W}$ form an orthonormal basis of the orthogonal complement of the subspace of
$\mathbb R^n$ spanned by the columns of $\boldsymbol{X}$.
Then, it holds that
\begin{equation}
L^{\rm I}(\bftheta; \boldsymbol{z}) \ \propto \ (\sigma^{2})^{-\frac{n-p}{2}} 
|\boldsymbol{W}^{\top} \boldsymbol{\Sigma}_{\vartheta}\boldsymbol{W}|^{-\frac{1}{2}} \;
\exp\left\{ -\frac{\boldsymbol{z}^{\top} 
\boldsymbol{W} \big(\boldsymbol{W}^{\top} \boldsymbol{\Sigma}_{\vartheta}\boldsymbol{W}
\big)^{-1} \boldsymbol{W}^{\top} \boldsymbol{z}}{2 \sigma^2} \right\} .
\label{eq:intlik-var}
\end{equation}
The matrix $\boldsymbol{W}$ always exists, but is not unique. 
One such matrix can be computed from the singular value decomposition of the design matrix,
namely $\boldsymbol{X} = \boldsymbol{U}\boldsymbol{S}\boldsymbol{V}^{\top}$ with
$\boldsymbol{U}$ and $\boldsymbol{V}$ orthogonal matrices of sizes $n \times n$ and $p \times p$, 
respectively, and $\boldsymbol{S}$ an $n \times p$ matrix whose only non--null entries are
on the main diagonal.
Taking $\boldsymbol{W}$ as the last $n-p$ columns of $\boldsymbol{U}$ satisfies 
the requirements \citep{Mure2021}.

\begin{proposition}[Reference Prior]  \label{thm:ref-prior}
The reference prior of $(\bfbeta, \sigma^2, \vartheta)$ is given by 
\begin{equation}
\pi^{\rm R}(\bfbeta, \sigma^2, \vartheta) \propto \frac{\pi^{\rm R}(\vartheta)}{\sigma^2} ,
\label{ref-prior}
\end{equation}
where $\pi^{\rm R}(\vartheta)$ admits the following representations: 

(a)
\begin{equation}
\pi^{\rm R}(\vartheta) \propto 
\left\{ {\rm tr}\left[\left\{ \left(\frac{\partial}{\partial \vartheta} \boldsymbol{\Sigma}_{\vartheta}\right) \boldsymbol{Q}_{\vartheta}\right\}^2 \right] - \frac{1}{n-p} 
\left[{\rm tr}\left\{\left(\frac{\partial}{\partial \vartheta} \boldsymbol{\Sigma}_{\vartheta}
\right) \boldsymbol{Q}_{\vartheta}\right\} \right]^{2} \right\}^{\frac{1}{2}},
\label{eq:reference-prior-1}
\end{equation}
where $\boldsymbol{Q}_{\vartheta} := \boldsymbol{\Sigma}_{\vartheta}^{-1} - 
\boldsymbol{\Sigma}_{\vartheta}^{-1}\boldsymbol{X} (\boldsymbol{X}^{\top}\boldsymbol{\Sigma}_{\vartheta}^{-1}\boldsymbol{X})^{-1} \boldsymbol{X}^{\top}\boldsymbol{\Sigma}_{\vartheta}^{-1}$. 

(b)	
\begin{equation}
\pi^{\rm R}(\vartheta) \propto 
\left\{ {\rm tr}\left[\left\{ \left(\frac{\partial}{\partial \vartheta} \boldsymbol{\Sigma}^{W}_{\vartheta}\right) \left(\boldsymbol{\Sigma}^{W}_{\vartheta} \right)^{-1}
\right\}^2 \right] - \frac{1}{n-p} 
\left[{\rm tr}\left\{\left(\frac{\partial}{\partial \vartheta} \boldsymbol{\Sigma}^{W}_{\vartheta}
\right)  \left(\boldsymbol{\Sigma}^{W}_{\vartheta} \right)^{-1} \right\} \right]^{2} \right\}^{\frac{1}{2}} ,
\label{eq:reference-prior-2}
\end{equation}
where $\boldsymbol{\Sigma}^{W}_{\vartheta} \coloneqq 
\boldsymbol{W}^{\top} \boldsymbol{\Sigma}_{\vartheta}\boldsymbol{W}$.  
\end{proposition}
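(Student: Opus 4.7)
The plan is to obtain $\pi^{\rm R}(\bftheta)$ via the Jeffreys prior of the marginal model (\ref{eq:intlik-var}), combined with the conditional prior $\pi^{\rm R}(\bfbeta \mid \bftheta) \propto 1$ that has already been stated, and to then verify that representations (a) and (b) coincide by means of a standard projector identity.

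First I would observe that since $\boldsymbol{X}^{\top}\boldsymbol{W}=\mathbf{0}$ and $\boldsymbol{W}^{\top}\boldsymbol{W}=\boldsymbol{I}_{n-p}$, the transformed data $\boldsymbol{Y}\coloneqq\boldsymbol{W}^{\top}\boldsymbol{z}$ is ${\rm N}_{n-p}(\mathbf{0},\sigma^{2}\boldsymbol{\Sigma}^{W}_{\vartheta})$, and its density coincides (up to a $\bftheta$-free factor) with the integrated likelihood (\ref{eq:intlik-var}). So I may compute the Jeffreys/reference prior of $(\sigma^{2},\vartheta)$ from the bona fide Gaussian model for $\boldsymbol{Y}$. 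Writing $V(\bftheta)=\sigma^{2}\boldsymbol{\Sigma}^{W}_{\vartheta}$ and setting $\boldsymbol{A}_{\vartheta}\coloneqq(\boldsymbol{\Sigma}^{W}_{\vartheta})^{-1}(\partial_{\vartheta}\boldsymbol{\Sigma}^{W}_{\vartheta})$, the standard Gaussian identity $I_{ij}=\tfrac{1}{2}{\rm tr}(V^{-1}\partial_{i}V\,V^{-1}\partial_{j}V)$ yields
\begin{equation*}
I_{\sigma^{2}\sigma^{2}}=\tfrac{n-p}{2\sigma^{4}},\qquad I_{\sigma^{2}\vartheta}=\tfrac{1}{2\sigma^{2}}{\rm tr}(\boldsymbol{A}_{\vartheta}),\qquad I_{\vartheta\vartheta}=\tfrac{1}{2}{\rm tr}(\boldsymbol{A}_{\vartheta}^{2}).
\end{equation*}

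Second, I would apply the reference prior algorithm with parameter ordering $(\vartheta,\sigma^{2})$ (so that $\vartheta$ is of primary interest and $\sigma^{2}$ is nuisance). The conditional reference prior for $\sigma^{2}$ is $\pi^{\rm R}(\sigma^{2}\mid\vartheta)\propto\sqrt{I_{\sigma^{2}\sigma^{2}}}\propto 1/\sigma^{2}$, and the marginal reference prior for $\vartheta$ is proportional to the square root of the Schur complement $I_{\vartheta\vartheta}-I_{\sigma^{2}\vartheta}^{2}/I_{\sigma^{2}\sigma^{2}}$. A direct substitution gives
\begin{equation*}
I_{\vartheta\vartheta}-\frac{I_{\sigma^{2}\vartheta}^{2}}{I_{\sigma^{2}\sigma^{2}}}=\tfrac{1}{2}\Big\{{\rm tr}(\boldsymbol{A}_{\vartheta}^{2})-\tfrac{1}{n-p}[{\rm tr}(\boldsymbol{A}_{\vartheta})]^{2}\Big\},
\end{equation*}
which is (b) up to a constant. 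Combining with $\pi^{\rm R}(\bfbeta\mid\bftheta)\propto 1$ delivers (\ref{ref-prior}).

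Third, to pass from (b) to (a), I would invoke the projector identity
\begin{equation*}
\boldsymbol{W}(\boldsymbol{W}^{\top}\boldsymbol{\Sigma}_{\vartheta}\boldsymbol{W})^{-1}\boldsymbol{W}^{\top}=\boldsymbol{\Sigma}_{\vartheta}^{-1}-\boldsymbol{\Sigma}_{\vartheta}^{-1}\boldsymbol{X}(\boldsymbol{X}^{\top}\boldsymbol{\Sigma}_{\vartheta}^{-1}\boldsymbol{X})^{-1}\boldsymbol{X}^{\top}\boldsymbol{\Sigma}_{\vartheta}^{-1}=\boldsymbol{Q}_{\vartheta},
\end{equation*}
valid whenever $\boldsymbol{X}^{\top}\boldsymbol{W}=\mathbf{0}$ and $[\boldsymbol{X}\;\boldsymbol{W}]$ has rank $n$; this is a classical generalized-inverse identity that I would verify by checking that both sides agree on the complementary subspaces spanned by $\boldsymbol{X}$ and by $\boldsymbol{\Sigma}_{\vartheta}\boldsymbol{W}$. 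Using the cyclic property of the trace, ${\rm tr}(\boldsymbol{A}_{\vartheta})={\rm tr}\{(\partial_{\vartheta}\boldsymbol{\Sigma}_{\vartheta})\boldsymbol{Q}_{\vartheta}\}$ and ${\rm tr}(\boldsymbol{A}_{\vartheta}^{2})={\rm tr}\{[(\partial_{\vartheta}\boldsymbol{\Sigma}_{\vartheta})\boldsymbol{Q}_{\vartheta}]^{2}\}$, converting (b) into (a). The main technical obstacles are two: (i) justifying that the reference prior algorithm may be applied to the integrated likelihood, which I resolve by exhibiting $\boldsymbol{Y}=\boldsymbol{W}^{\top}\boldsymbol{z}$ as an actual Gaussian sample whose density matches (\ref{eq:intlik-var}) (so no ``integrated-likelihood heuristic'' is required), and (ii) verifying the projector identity, which is routine but is the one place where the specific structure of $\boldsymbol{W}$ is used. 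All remaining computations are trace manipulations with the Gaussian Fisher information.
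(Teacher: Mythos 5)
Your proposal is correct. Note first that the paper itself supplies no proof of this proposition: it cites \citet{Berger2001} for representation (a), derived from the integrated likelihood (\ref{eq:intlik-org}), and \citet[Proposition 2.1]{Mure2021} for representation (b), derived from (\ref{eq:intlik-var}). Your argument essentially reconstructs Mur\'e's route for (b) --- exhibiting $\boldsymbol{W}^{\top}\boldsymbol{z}\sim{\rm N}_{n-p}(\boldsymbol{0},\sigma^{2}\boldsymbol{\Sigma}^{W}_{\vartheta})$ as a genuine Gaussian model whose density is (\ref{eq:intlik-var}), then reading off the Fisher information --- and then obtains (a) as a corollary of (b) via the projector identity $\boldsymbol{W}(\boldsymbol{W}^{\top}\boldsymbol{\Sigma}_{\vartheta}\boldsymbol{W})^{-1}\boldsymbol{W}^{\top}=\boldsymbol{Q}_{\vartheta}$ and cyclicity of the trace, rather than deriving (a) independently from (\ref{eq:intlik-org}) as Berger et al.\ do (which requires computing expectations of derivatives of a log--integrated likelihood that is not itself a Gaussian density, owing to the $|\boldsymbol{X}^{\top}\boldsymbol{\Sigma}_{\vartheta}^{-1}\boldsymbol{X}|^{-1/2}$ factor and the $S^{2}_{\vartheta}$ term). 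Your unified derivation is cleaner and the computations check out: the Schur complement evaluates to $\tfrac12\{{\rm tr}(\boldsymbol{A}_{\vartheta}^{2})-\tfrac{1}{n-p}[{\rm tr}(\boldsymbol{A}_{\vartheta})]^{2}\}$ as claimed, and the projector identity holds because both sides annihilate the columns of $\boldsymbol{X}$ and map $\boldsymbol{\Sigma}_{\vartheta}\boldsymbol{W}$ to $\boldsymbol{W}$, with $[\boldsymbol{X}\;\boldsymbol{\Sigma}_{\vartheta}\boldsymbol{W}]$ of full rank. One small remark: the paper's algorithm takes $\pi^{\rm R}(\bftheta)$ to be the Jeffreys prior of the two--parameter marginal model, i.e.\ $|I(\bftheta)|^{1/2}$, whereas you apply the ordered two--group reference construction within $(\vartheta,\sigma^{2})$; the two coincide here because $I_{\sigma^{2}\sigma^{2}}=(n-p)/(2\sigma^{4})$ is free of $\vartheta$, so the determinant factorizes into $I_{\sigma^{2}\sigma^{2}}$ times the Schur complement and both routes yield $\pi^{\rm R}(\vartheta)/\sigma^{2}$ with the same $\pi^{\rm R}(\vartheta)$. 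It would be worth stating this coincidence explicitly rather than leaving it implicit.
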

The proof of (\ref{eq:reference-prior-1}) was given in 
\citet{Berger2001} which is based on (\ref{eq:intlik-org}),
while the proof of (\ref{eq:reference-prior-2}) was given in 
\citet[Proposition 2.1]{Mure2021}, which is based on (\ref{eq:intlik-var}).

The propriety of the reference posterior distribution derived from the reference prior (\ref{ref-prior}) 
requires that the integral
\begin{equation}
\int_{{\mathbb R}^p \times (0,\infty)^2} L(\bfbeta, \sigma^2 , \vartheta ; \boldsymbol{z}) 
\frac{\pi^{\rm R}(\vartheta)}{\sigma^2} d \bfbeta d\sigma^2 d\vartheta
	\; = \; \int_{(0,\infty)} L^{\rm I}(\vartheta ; \boldsymbol{z}) \pi^{\rm R}(\vartheta) d\vartheta ,
\label{integrability}
\end{equation}
is finite, where $L^{\rm I}(\vartheta ; \boldsymbol{z})$ is the so--called 
{\it integrated likelihood} of $\vartheta$ obtained by integrating the product of  
the likelihood and $1/\sigma^2$  ($= \pi^{\rm R}(\sigma^2 ~|~\vartheta)$)
over $\bfbeta$ and $\sigma^2$. It is given by the following result.

\begin{proposition}[Integrated Likelihood]   \label{thm:prop-intlik}
The integrated likelihood of $\vartheta$ admits the following representations: 

(a)
\begin{equation}
\label{eq:intlik-1}
L^{\rm I}(\vartheta; \boldsymbol{z}) \propto 
|\boldsymbol{\Sigma}_{\vartheta}|^{-\frac{1}{2}} 
|\boldsymbol{X}^{\top}\boldsymbol{\Sigma}_{\vartheta}^{-1}\boldsymbol{X}|^{- \frac{1}{2}}
(S^{2}_{\vartheta})^{-\frac{n-p}{2}} . 
\end{equation}

(b)	
\begin{equation}
\label{eq:intlik-2}
L^{\rm I}(\vartheta; \boldsymbol{z}) \propto 
|\boldsymbol{\Sigma}^{W}_{\vartheta}|^{-\frac{1}{2}}
\left( \big(\boldsymbol{z}^{W}\big)^{\top} 
\big( \boldsymbol{\Sigma}^{W}_{\vartheta}\big)^{-1} \big(\boldsymbol{z}^{W}\big)
\right)^{-\frac{n-p}{2}}, 
\end{equation}
where $\boldsymbol{z}^{W} \coloneqq \boldsymbol{W}^{\top} \boldsymbol{z}$. 
\end{proposition}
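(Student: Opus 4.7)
The plan is to obtain both representations by performing one additional scalar integration on top of the two expressions for $L^{\rm I}(\boldsymbol{\theta};\boldsymbol{z})$ already displayed in the excerpt (i.e.\ the expressions obtained by marginalising only $\boldsymbol{\beta}$). By construction, $L^{\rm I}(\vartheta;\boldsymbol{z})$ is obtained from $L^{\rm I}(\sigma^2,\vartheta;\boldsymbol{z})$ by multiplying by $\pi^{\rm R}(\sigma^2\mid\vartheta)\propto 1/\sigma^2$ and integrating $\sigma^2$ over $(0,\infty)$, so everything reduces to a single inverse-gamma integral.

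For part (a), starting from the first displayed formula for $L^{\rm I}(\sigma^2,\vartheta;\boldsymbol{z})$, I pull the two $\vartheta$-dependent determinants $|\boldsymbol{\Sigma}_\vartheta|^{-1/2}|\boldsymbol{X}^{\top}\boldsymbol{\Sigma}_\vartheta^{-1}\boldsymbol{X}|^{-1/2}$ outside the $\sigma^2$-integral. What remains is
\[
\int_0^\infty (\sigma^2)^{-(n-p)/2-1}\exp\!\left\{-\frac{S^2_\vartheta}{2\sigma^2}\right\}d\sigma^2 ,
\]
which, after the substitution $t=1/\sigma^2$, is a standard gamma integral with value $\Gamma((n-p)/2)(2/S^2_\vartheta)^{(n-p)/2}$. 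The $\vartheta$- and $\boldsymbol{z}$-free factor $\Gamma((n-p)/2)\,2^{(n-p)/2}$ is absorbed into the proportionality constant, leaving $(S^2_\vartheta)^{-(n-p)/2}$; combined with the two determinants this yields (2.3). For part (b), I apply the identical procedure to the alternative form (2.2): the same inverse-gamma integral, with $S^2_\vartheta$ replaced by $(\boldsymbol{z}^{W})^{\top}(\boldsymbol{\Sigma}^{W}_\vartheta)^{-1}\boldsymbol{z}^{W}$ and with the two determinants collapsed to the single factor $|\boldsymbol{\Sigma}^{W}_\vartheta|^{-1/2}$, delivers (2.4) with no further algebra.

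There is essentially no obstacle, because the only nontrivial input — the equivalence of the two pre-integrated forms of $L^{\rm I}(\sigma^2,\vartheta;\boldsymbol{z})$ — is already granted by the excerpt. The one thing worth verifying is convergence of the inverse-gamma integral: the shape $(n-p)/2$ must be positive (the standard full-rank requirement $n>p$ on $\boldsymbol{X}$), and the scale in the exponent must be strictly positive, which holds almost surely since both $S^2_\vartheta$ and $(\boldsymbol{z}^{W})^{\top}(\boldsymbol{\Sigma}^{W}_\vartheta)^{-1}\boldsymbol{z}^{W}$ are positive-definite quadratic forms evaluated at a random vector with an absolutely continuous distribution.
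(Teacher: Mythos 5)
Your proof is correct and follows exactly the route the paper invokes: it cites the ``direct calculation'' of \citet{Berger2001} for (a) and \citet{Mure2021} for (b), and that calculation is precisely the inverse-gamma integral over $\sigma^2$ applied to (\ref{eq:intlik-org}) and (\ref{eq:intlik-var}) that you carry out. The only trivial imprecisions are that (\ref{eq:intlik-var}) already contains a single determinant (nothing is ``collapsed'' at the integration step) and that $S^2_{\vartheta}$ is a positive \emph{semi}-definite quadratic form in $\boldsymbol{z}$, strictly positive almost surely since $\boldsymbol{z}\notin\mathrm{col}(\boldsymbol{X})$ with probability one; neither affects the argument.
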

The expression (\ref{eq:intlik-1}) follows by direct calculation \citep{Berger2001}, 
while the proof of (\ref{eq:intlik-2}) is given in 
\citet[Proposition 2.2]{Mure2021}. 
\cite{Berger2001} stated results on the asymptotic behaviour of $\pi^{\rm R}(\vartheta)$ 
and $L^{\rm I}(\vartheta; \boldsymbol{z})$ as 
$\vartheta \rightarrow 0^+$ and $\vartheta \rightarrow \infty$,
and established results on the propriety of reference posterior distributions for the 
parameters of many families of isotropic covariance functions, including the Mat\'ern family.

\medskip

{\bf Discussion}.
\cite{Mure2021} recently noticed that a key technical assumption in the proofs of
auxiliary results in \citet[Lemmas 1 and 2]{Berger2001} that were used to prove
reference posterior propriety does not hold for smooth families of isotropic correlations 
functions, namely those that are twice continuously differentiable at the origin.
Specifically, it was assumed that the correlation matrix of the data can be expressed as
\begin{equation}
\boldsymbol{\Sigma}_{\vartheta} = 
{\bf 1}_{n}{\bf 1}_{n}^{\top} + q(\vartheta)(\boldsymbol{D} +o(1)) , \quad\quad 
{\rm as} \ \ \vartheta \rightarrow \infty ,
\label{sigma-expansion}
\end{equation}
where ${\bf 1}_{n}$ is the vector of ones, $q(\vartheta)$ is a continuous function
satisfying $\lim_{\vartheta \rightarrow \infty}q(\vartheta) = 0$, and
$\boldsymbol{D}$ is a fixed $n \times n$ {\it non--singular} matrix.
This property of $\boldsymbol{D}$ is used extensively in 
\cite{Berger2001}.
The identity (\ref{sigma-expansion}) follows from the Maclaurin expansion of the 
correlation function $K_{\vartheta}(\cdot)$, where $\boldsymbol{D}$ has entries 
$\boldsymbol{D}_{ij} = \|{\bf s}_i - {\bf s}_j \|^{b}$, 
for some $b > 0$ that depends on $K_{\vartheta}(\cdot)$.
 For non--smooth families of correlation functions (e.g., Mat\'ern families with $\nu < 1$),
$b < 2$ and in this case it indeed holds that $\boldsymbol{D}$ is non--singular \citep{Schoenberg1937};
see \citet[Proposition 3.1 and Table 2]{Mure2021} and \citet[Appendix C]{Berger2001}.
On the other hand, based on a result by \cite{Gower1985}, \cite{Mure2021} showed that
 for smooth families of correlation functions (e.g., Mat\'ern families with $\nu \geq 1$),
$b = 2$ and in this case the rank of $\boldsymbol{D}$ is at most $d+2$.
So the key technical assumption (\ref{sigma-expansion}), with $\boldsymbol{D}$ non--singular,
does not hold when $n > d+2$, which is always the case in geostatistical applications where $d=2$; 
see \citet[Proposition 3.4]{Mure2021}. 
Therefore, the proof of propriety of the reference posterior given in \cite{Berger2001} 
is valid only for non--smooth families of isotropic correlation functions.
Nevertheless, the propriety result still holds more generally
since \citet[Theorem 4.4]{Mure2021} provided a proof for the case of smooth families of 
isotropic correlation functions that do not require $\boldsymbol{D}$ to be non--singular. 
It only requires that the data $\boldsymbol{z}$ do not belong to a certain  hyperplane of 
${\mathbb R}^n$, an assumption that holds with probability one under any of the 
considered models.

As noted in \cite{Mure2021}, the above findings are not limited to the model considered 
in this work, but also have strong bearings on many other spatial models.
With the exception of \cite{DeOliveira2007}, who assumed the range parameter is known, 
virtually all articles that have obtained reference posterior propriety results for other 
stationary covariance functions have used arguments that rely on (\ref{sigma-expansion}) 
or similar assumptions, with $\boldsymbol{D}$ non--singular.
As a result, their proofs are also incomplete and in need of completion for
smooth families of correlation functions.

\subsection{Bayesian and Frequentist Properties}  \label{bayes-frequentist-properties}

\cite{Berger2001} noticed that several ad--hoc automatic priors 
(that do not require subjective elicitation) proposed up to that time yielded improper 
posterior distributions.
Reference priors are also automatic, but it was shown for all models studied in the works 
listed in the Introduction that they yield proper posterior distributions.
Additionally, for several of these models, the marginal reference prior of the 
correlation parameters are proper, which allows the use of Bayes factors for selecting 
the smoothness of the covariance family  \citep{Berger2001}. 
This is a helpful property since the smoothness is often arbitrarily chosen, and few methods
are available for this purpose.

It has also been found that statistical inferences based on reference priors have good 
frequentist properties. For different stationary covariance models, 
\cite{Berger2001}, \cite{Paulo2005}, \cite{Kazianka2012}, \cite{Ren2012} and \cite{Ren2013} carried out
simulation studies showing that, when viewed as confidence intervals, credible intervals 
for range parameters based on reference priors have reasonably good frequentist coverage.
Additional evidence is provided in the Supplementary Materials.
It has also been empirically found that profile likelihoods of range parameters are very flat
for some geostatistical and computer emulation data, and in this case maximum likelihood estimates (MLE) 
tend to be either close to zero (negligible correlation) or very large (unrealistic high correlation).
These behaviours have deleterious effects on the performance of plug--in (kriging) predictors. 
\cite{Gu2018} and \cite{Gu2019} showed that in these cases, the marginal reference posteriors of 
range parameters are better behaved, and the mean square errors of plug--in predictors
based on maximum a posteriori estimates were smaller than those of MLE,
for several parametrizations and simulation scenarios.

\subsection{Practical Limitations}

In spite of their good theoretical properties, reference priors are seldom used 
in geostatistical applications due to several computational challenges. 
First, the evaluation of $\pi^{\rm R}(\vartheta)$ in (\ref{eq:reference-prior-1}) requires 
the computation of the $n \times n$ matrix $\boldsymbol{\Sigma}_{\vartheta}^{-1}$ 
and $p \times p$ matrix 
$(\boldsymbol{X}^{\top}\boldsymbol{\Sigma}_{\vartheta}^{-1}\boldsymbol{X})^{-1}$,
which require $O(n^3)$ operations. 
The evaluation of $\pi^{\rm R}(\vartheta)$ in (\ref{eq:reference-prior-2}) requires 
the computation of $\boldsymbol{W}$ (only once) and the $(n-p) \times (n-p)$ matrix 
$(\boldsymbol{\Sigma}^{W}_{\vartheta})^{-1}$, which require $O((n-p)^3)$ operations.
Second, except for certain degrees of smoothness in the Mat\'ern family, 
computation of $(\partial /{\partial \vartheta}) \boldsymbol{\Sigma}_{\vartheta}$
involves the evaluation of $\mathcal{K}_{\nu}(x)$ and its derivative w.r.t. $x$, 
which is given by $(\partial / \partial x) \mathcal{K}_{\nu}(x) = 
-(\mathcal{K}_{\nu-1}(x) + \mathcal{K}_{\nu+1}(x))/2$  \citep{Abramowitz1964}, so  
$O(n^2)$ evaluations of this Bessel function are needed. 
The same would hold for other families of correlations that involve special functions.
Third, for many families of correlation functions, including the Mat\'ern, 
the matrix $\boldsymbol{\Sigma}_{\vartheta}$ is often nearly singular when either 
$\vartheta$ or $\nu$ are large, so the computation of $\boldsymbol{\Sigma}_{\vartheta}^{-1}$ 
will be unstable or infeasible. 
The same holds for the computation of $(\boldsymbol{\Sigma}^{W}_{\vartheta})^{-1}$ since its
condition number is smaller than that of $\boldsymbol{\Sigma}_{\vartheta}^{-1}$
\citep{Dietrich1994}. 
All of these make the computation of $\pi^{\rm R}(\vartheta)$ either computationally expensive, 
unstable or infeasible, even for geostatistical data sets of moderate size.

We circumvent these challenges by deriving an approximate reference prior
that is more amenable for analysis and computation.
It relies on an approximation to the integrated likelihood of the covariance parameters
that is computed from the spectral representation of the stationary random fields. 
This approximation depends neither on $K_{\vartheta}(\cdot)$ nor on the inverse of the 
large and possibly numerically singular matrix $\boldsymbol{\Sigma}_{\vartheta}$, 
but instead on the spectral density function of the model.

\section{Spectral Approximation to the Integrated Likelihood}  \label{sec:spectral-appr}  

\subsection{Spectral Approximation}  \label{spectral-approximation}

The starting step to obtain a convenient approximation to the integrated likelihood in
(\ref{eq:intlik-org}) (or (\ref{eq:intlik-var})) is the spectral representation of 
stationary random fields.
Such approximation has been described and used for different purposes by 
\cite{Royle2005}, \cite{Paciorek2007} and \cite{Bose2018}. 
Unlike these works, this device is employed here to approximate the random field over 
a set of locations that may or may not be the sampling design. 
Although the basic tenets to construct the approximation are the same regardless of dimension, 
we describe in detail the approximation for random fields in the plane ($d=2$), 
the common scenario where geostatistical data arise.
Let $\mu({\bf s})$ and $\sigma^2 K_{\vartheta}(r)$ be, respectively, the mean and covariance 
functions of the random field $Z(\cdot)$, and $\sigma^2 f_{\vartheta}(\bfomega)$ its 
spectral density function.
For instance, for the Mat\'ern family in (\ref{eq:matern-cov}) we have 
\citep{Stein1999}
\begin{equation}
\label{eq:matern-spden}
\vspace{-0.1cm}	
f_{\vartheta}(\bfomega) = \frac{\Gamma(\nu + 1) (4\nu)^\nu}{\pi \Gamma(\nu) \vartheta^{2\nu}} 
\Big(\|\bfomega\|^2 + \frac{4\nu}{\vartheta^2}\Big)^{-(\nu + 1)},
\quad \bfomega = (\omega_1,\omega_2)^{\top} \in {\mathbb R}^2. 
\end{equation}
In the expression above and in what follows,
$\bfomega$ denotes `angular frequency', as commonly used in statistics, 
rather than `frequency', as used by \cite{Bose2018}.

Let $M_1$, $M_2$ be two positive even integers, $\Delta > 0$ and 
$\mathcal{U}_M = \{{\bfu}_{1,1},{\bfu}_{1,2},\ldots,{\bfu}_{M_1, M_2}\} = 
\{\Delta, \ldots,\Delta M_1\} \times \{\Delta,\ldots,\Delta M_2\}$, with $M \coloneqq M_1 M_2$,
be a set of spatial locations forming a regular rectangular grid in the plane.
The set $\mathcal{U}_M$ does not need to be the sampling design $\mathcal{S}_n$, but
is constructed in a way so that contains the convex hull of the region of interest $\mathcal{D}$. 
Associated with $\mathcal{U}_M$ we define a corresponding set of $M$ spatial frequencies 
(spectral points), also forming a regular rectangular grid in the plane, as
\begin{align*}
\mathcal{W}_{M} &= \left\{ \bfomega_{-\frac{M_1}{2} + 1, -\frac{M_2}{2} + 1}, \ldots, \bfomega_{0,0}, 
\ldots, \bfomega_{\frac{M_1}{2}, \frac{M_2}{2}} \right\}  \\ 
&=\frac{2\pi}{\Delta M_1} \left\{ -\frac{M_1}{2} + 1, \ldots, 0, 1, \ldots, \frac{M_1}{2} \right\} 
\; \times \; \frac{2\pi}{\Delta M_2} \left\{ -\frac{M_2}{2} + 1, \ldots, 0, 1, \ldots, \frac{M_2}{2} \right\} , 
\end{align*}
where $\mathcal{W}_{M} \subset [-\frac{\pi}{\Delta} , \frac{\pi}{\Delta}]^2$
is called the {\it spectral design}; Figure \ref{fi:spec-design-group} provides an example. 

Now, let $Z_{\Delta}({\bfk}) \coloneqq Z(\Delta {\bfk})$, ${\bfk} = (k_1,k_2)^{\top} \in {\mathbb Z}^2$, 
be the discrete index random field defined by sampling the random field $Z(\cdot)$ at the rate $\Delta$.
This random field has mean function $\mu(\Delta {\bfk})$ and covariance function
$\sigma^2 K_{\vartheta}(\Delta||{\bfk} - {\bfk}'||)$, for ${\bfk}, {\bfk}' \in {\mathbb Z}^2$,
while its spectral density function is given by \citep{Yaglom1987}
\begin{equation}
f^{\Delta}_{\vartheta}(\bfomega) = 	
\sum_{\boldsymbol{l} \in {\mathbb Z}^2} 
f_{\vartheta}\Big(\bfomega + \frac{2\pi}{\Delta}\boldsymbol{l}\Big) ,
\quad \quad \bfomega \in \Big[-\frac{\pi}{\Delta},\frac{\pi}{\Delta}\Big]^2 .
\label{eq:spec-den-alias}
\end{equation}
Note that \cite{Paciorek2007} and \cite{Bose2018} 
assumed 
$f^{\Delta}_{\vartheta}(\bfomega) = f_{\vartheta}(\bfomega)$,
effectively ignoring the aliasing effect. 
The spectral representation of stationary random fields in ${\mathbb Z}^2$
states that for any ${\bfu}_{i,j} = \Delta (i,j)^{\top} \in \mathcal{U}_M$ it holds that
\begin{equation*}	
Z({\bfu}_{i,j}) \ = \ Z_{\Delta}\big((i,j)^{\top}\big) \ = \ \mu({\bfu}_{i,j}) + 
\int_{-\frac{\pi}{\Delta}}^{\frac{\pi}{\Delta}} \int_{-\frac{\pi}{\Delta}}^{\frac{\pi}{\Delta}} 
\exp({\rm i} \bfomega^{\top}{\bfu}_{i,j}) U_{\Delta}(d \bfomega)  ,	
\label{eq:spectral-approx}
\end{equation*}
where ${\rm i} = \sqrt{-1}$ and $U_{\Delta}(\cdot)$ is a complex zero--mean random 
orthogonal measure in the plane \citep{Yaglom1987}.

\begin{figure}[t]
\begin{center}
\psfig{figure=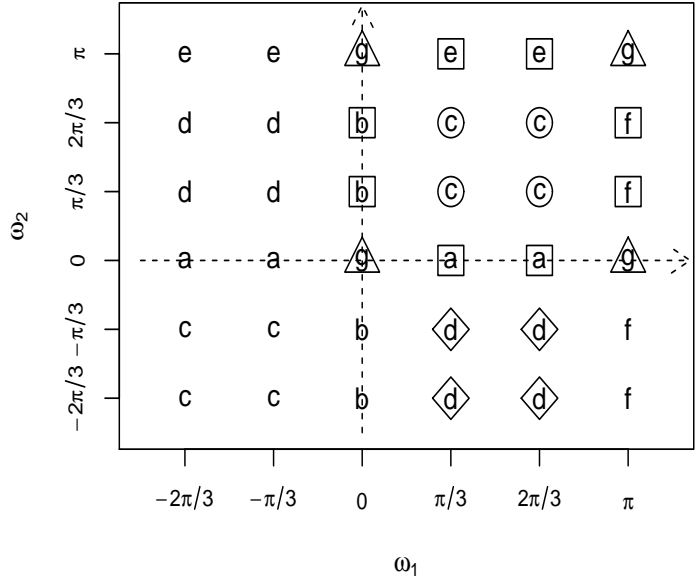, width= 10 cm,height= 10 cm}
\end{center}
\vspace{-0.8 cm}
\caption{Spectral design $\mathcal{W}_M$ when $M_1 = M_2 = 6$ and $\Delta = 1$.
The points $\bfomega = (\omega_1,\omega_2)^\top$ are identified by the letters a--g.
The enclosed letters are the points determined by the indices in $I_{\rm C}, I_{\rm B}, I_{\rm I}$ 
and $I_{\rm E}$ (see main text for the precise mapping).}
\label{fi:spec-design-group}
\end{figure}

This representation motivates the following lemma that provides an approximation to the 
distribution of $\big(Z({\bfu}_{i,j}) : {\bfu}_{i,j} \in \mathcal{U}_M\big)^{\top}$.
Before stating the result, a random object satisfying some assumptions needs to be defined. 
Consider the following sets of indices that determine subsets of the spectral design $\mathcal{W}_{M}$:
\begin{align*}
I_{\rm C} & \coloneqq \Big\{\big(0, 0\big), \Big(\frac{M_1}{2}, 0\Big), \Big(0, \frac{M_2}{2}\Big), \Big(\frac{M_1}{2}, 
\frac{M_2}{2}\Big)\Big\} \hspace{3.5cm} \mbox{(`corner' frequencies)} \\
I_{\rm B} & \coloneqq \Big\{\big(m_1, 0\big), \big(0, m_2\big), \Big(m_1, \frac{M_2}{2}\Big), \Big(\frac{M_1}{2}, m_2\Big) :
m_1 = 1,\ldots,\frac{M_1}{2}-1 ; m_2 = 1,\ldots, \frac{M_2}{2}-1\Big\} \\
& \hspace{10.8cm}  \mbox{(`boundary' frequencies)} \\
I_{\rm I} & \coloneqq \Big\{\big(m_1, m_2\big) : m_1 = 1,\ldots,\frac{M_1}{2}-1 ; m_2 = 1,\ldots,\frac{M_2}{2}-1 \Big\} 
\quad\quad \ \ \mbox{(`interior' frequencies)} \\
I_{\rm E} & \coloneqq \Big\{\big(m_1, m_2\big) : m_1 = 1,\ldots,\frac{M_1}{2}-1 ; m_2 = -\frac{M_2}{2}+1,\ldots, -1\Big\} 
\ \ \ \  \mbox{(`exterior' frequencies)}.
\end{align*}
Also, let $I \coloneqq I_{\rm B} \cup I_{\rm I} \cup I_{\rm E}$ which has $M/2 - 2$ elements.
The labels `corner', `boundary' and `interior' refer to spectral points in the first quadrant 
of the plane, while the label `exterior' refers to spectral points in the fourth quadrant.
For instance, the locations of the letters a--g in Figure \ref{fi:spec-design-group} represent the spectral points 
in $\mathcal{W}_M$ when $M_1 = M_2 = 6$ and $\Delta = 1$.
The indices in $I_{\rm C}$ determine the spectral points in the figure enclosed by triangles.
Likewise, the indices in $I_{\rm B}, I_{\rm I}$ and $I_{\rm E}$ determine the spectral points 
enclosed by, respectively, squares, circles, and diamonds.
(The matching of spectral points by letters, e.g., the points in the first and third quadrants 
labeled as `c', will be used to motivate the proof of Lemma \ref{thm:spec-approx-2d} below).

For any $M_1, M_2$ positive even integers, $\Delta > 0$ and 
$\bfu_{i,j} \in \mathcal{U}_M$ define the random object
\[
T_{M_1, M_2}(\bfu_{i,j}) \coloneqq  
\sum_{m_1 = -\frac{M_1}{2}+1}^{\frac{M_1}{2}} \sum_{m_2 = -\frac{M_2}{2}+1}^{\frac{M_2}{2}}
\exp({\rm i} \bfomega_{m_1,m_2}^{\top} \bfu_{i,j}) U_{m_1,m_2} ,
\]
where $U_{m_1, m_2} =  A_{m_1, m_2} + \, {\rm i} B_{m_1, m_2}$ are complex random variables. 
The real and imaginary parts satisfy the following assumptions, collectively denoted (A0):
\begin{itemize}
\item[(1)] $B_{m_1, m_2} = 0$ for $(m_1,m_2) \in I_{\rm C}$ 

\item[(2)] $U_{0, -m_2} = \bar{U}_{0, m_2}$, $U_{-m_1, 0} = \bar{U}_{m_1, 0}$, 
$U_{\frac{M_1}{2}, -m_2} = \; \bar{U}_{\frac{M_1}{2}, m_2}$ and
$U_{-m_1, \frac{M_2}{2}} = \; \bar{U}_{m_1, \frac{M_2}{2}}$ 
for $(m_1,m_2) \in I_{\rm B}$

\item[(3)] $U_{-m_1, -m_2} = \bar{U}_{m_1, m_2}$ for $(m_1,m_2) \in I_{\rm I} \cup I_{\rm E}$

\item[(4)] For $(m_1,m_2) \in I_{\rm C} \cup I$, $A_{m_1, m_2}$ and $B_{m_1, m_2}$ are 
independent Gaussian variables with means $0$ and variances
\begin{align*}
{\rm var}(A_{m_1, m_2} ) &= 
\frac{c_{\Delta} \sigma^2}{M}  f^{\Delta}_{\vartheta}(\bfomega_{m_1, m_2})  
\quad\quad	{\rm if} \ (m_1, m_2) \in I_{\rm C} \\
{\rm var}(A_{m_1, m_2} ) = {\rm var}(B_{m_1, m_2}) &=
\frac{c_{\Delta} \sigma^2}{2M} f^{\Delta}_{\vartheta}(\bfomega_{m_1, m_2})
\quad\quad {\rm if} \ (m_1, m_2) \in I ,
\end{align*}
where $c_{\Delta} \coloneqq ({2\pi}/{\Delta})^2$ and $f^{\Delta}_{\boldsymbol{\vartheta}}(\bfomega)$
is given in (\ref{eq:spec-den-alias}).
\end{itemize}

\begin{lemma}[Spectral Approximation]  \label{thm:spec-approx-2d}
Consider the random object $T_{M_1, M_2}(\bfu_{i,j})$ defined above satisfying assumption (A0). 
Then, for any $\bfu_{i,j} \in \mathcal{U}_M$

\medskip

(a) $T_{M_1, M_2}(\bfu_{i,j})$
\begin{align}
\nonumber	
&= A_{0,0} + A_{\frac{M_1}{2},0} \cos\big(\bfomega_{\frac{M_1}{2},0}^{\top} {\bfu}_{i,j}\big)
+ A_{0,\frac{M_2}{2}} \cos\big(\bfomega_{0, \frac{M_2}{2}}^{\top} {\bfu}_{i,j}\big)
+ A_{\frac{M_1}{2},\frac{M_2}{2}} \cos\big(\bfomega_{\frac{M_1}{2}, \frac{M_2}{2}}^{\top} {\bfu}_{i,j}\big) \\
& \quad + \ 2 \sum_{(m_1,m_2) \in I} 
\big( A_{m_1,m_2} \cos(\bfomega_{m_1, m_2}^{\top} {\bfu}_{i,j}) 
- B_{m_1,m_2} \sin(\bfomega_{m_1, m_2}^{\top} {\bfu}_{i,j}) \big) ,
\label{eq:spectral-approx2}
\end{align}
and $(T_{M_1, M_2}(\bfu_{i,j}) : \bfu_{i,j} \in \mathcal{U}_M)^{\top}$ has a zero--mean 
real multivariate normal distribution.

\medskip

(b) For any $\bfu_{i,j}, \bfu_{i',j'} \in \mathcal{U}_M$ it holds that as
 $\min\{M_1,M_2\} \rightarrow \infty$
\[
{\rm cov}\{T_{M_1, M_2}(\bfu_{i,j}), T_{M_1, M_2}(\bfu_{i',j'})\} \rightarrow 
\sigma^2 K_{\vartheta}(||\bfu_{i,j} - \bfu_{i',j'}||) .
\]

\medskip

\noindent
Proof. See the Supplementary Materials.
\end{lemma}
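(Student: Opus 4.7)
The plan is to treat the two claims separately, both via a careful conjugate-pairing argument over the spectral design $\mathcal{W}_M$.

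\textbf{Part (a).} I would start from the definition of $T_{M_1, M_2}(\bfu_{i,j})$ as a double sum over the $M$ spectral indices and partition that index set into three pieces: the four corners $I_{\mathrm C}$; the ``independent'' set $I = I_{\mathrm B} \cup I_{\mathrm I} \cup I_{\mathrm E}$ (of size $M/2 - 2$); and the remaining $M/2 - 2$ indices, which are exactly the conjugate partners prescribed by (A0)(2)--(3). A short enumeration confirms this is indeed a partition of the grid. At any corner, $\bfomega_{m_1,m_2}^{\top} \bfu_{i,j}$ is an integer multiple of $\pi$, so the exponential reduces to $\cos(\bfomega_{m_1,m_2}^{\top} \bfu_{i,j})$; combined with $U_{m_1,m_2} = A_{m_1,m_2}$ from (A0)(1), this produces the four terms on the first line of (\ref{eq:spectral-approx2}). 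For each $(m_1,m_2) \in I$, its partner's contribution is the complex conjugate of $U_{m_1,m_2}\exp(\mi\bfomega_{m_1,m_2}^{\top}\bfu_{i,j})$, so the pair sums to $2\,\mathrm{Re}\{U_{m_1,m_2}\exp(\mi\bfomega_{m_1,m_2}^{\top}\bfu_{i,j})\} = 2(A_{m_1,m_2}\cos(\bfomega_{m_1,m_2}^\top\bfu_{i,j}) - B_{m_1,m_2}\sin(\bfomega_{m_1,m_2}^\top\bfu_{i,j}))$, which is the second line. Because the vector $(T_{M_1,M_2}(\bfu_{i,j}) : \bfu_{i,j} \in \mathcal U_M)^\top$ is then a fixed real linear transformation of the centered independent Gaussians supplied by (A0)(4), it is jointly Gaussian with mean zero.

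\textbf{Part (b).} Using independence from (A0)(4) and writing $\bfh = \bfu_{i,j} - \bfu_{i',j'}$, I would compute the covariance contribution term by term. Since $\mathrm{var}(A_{m_1,m_2}) = \mathrm{var}(B_{m_1,m_2})$ for $(m_1,m_2) \in I$, the angle-subtraction identity $\cos A\cos B + \sin A\sin B = \cos(A-B)$ shows that a pair $(m_1,m_2) \in I$ contributes $4\,\mathrm{var}(A_{m_1,m_2})\cos(\bfomega_{m_1,m_2}^\top\bfh) = \frac{2c_\Delta\sigma^2}{M} f^\Delta_\vartheta(\bfomega_{m_1,m_2})\cos(\bfomega_{m_1,m_2}^\top\bfh)$, while each corner contributes $\mathrm{var}(A_{m_1,m_2})\cos(\bfomega^\top\bfu_{i,j})\cos(\bfomega^\top\bfu_{i',j'}) = \frac{c_\Delta\sigma^2}{M} f^\Delta_\vartheta(\bfomega)\cos(\bfomega^\top\bfh)$, using that $\sin(\bfomega^\top\bfu) = 0$ at corners. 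Because $f^\Delta_\vartheta$ and cosine are even in $\bfomega$, the sum over the conjugate partners of $I$ equals the sum over $I$ itself, so the factor-of-two gap between corner and interior variances from (A0)(4) exactly matches the fact that corners are their own partners. This collapses the total covariance into a single Riemann sum
\begin{equation*}
\mathrm{cov}\{T_{M_1,M_2}(\bfu_{i,j}), T_{M_1,M_2}(\bfu_{i',j'})\} = \frac{c_\Delta\sigma^2}{M}\sum_{\bfomega \in \mathcal{W}_M} f^\Delta_\vartheta(\bfomega)\cos(\bfomega^\top\bfh),
\end{equation*}
with cell area $c_\Delta/M = (2\pi/(\Delta M_1))(2\pi/(\Delta M_2))$.

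As $\min\{M_1,M_2\} \to \infty$ with $\Delta$ fixed, continuity and boundedness of $f^\Delta_\vartheta$ on $[-\pi/\Delta,\pi/\Delta]^2$---which follow from absolute convergence of the aliasing series (\ref{eq:spec-den-alias}) for a polynomially decaying spectral density such as (\ref{eq:matern-spden})---yield standard Riemann-sum convergence to $\sigma^2\int_{[-\pi/\Delta,\pi/\Delta]^2} f^\Delta_\vartheta(\bfomega)\cos(\bfomega^\top\bfh)\,d\bfomega$. Since $\bfh \in \Delta\mathbb{Z}^2$, the inverse Fourier identity that defines $f^\Delta_\vartheta$ as the spectral density of the sampled field $Z_\Delta$ evaluates this integral to $\sigma^2 K_\vartheta(\|\bfh\|)$, completing the proof. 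The main obstacle is the bookkeeping in Part (a)---verifying that the conjugate partners of $I$ together with $I$ and $I_{\mathrm C}$ genuinely partition the spectral grid---and, in Part (b), the re-symmetrization step that exploits the factor-of-two variance gap to fold corner and interior contributions into one Riemann sum over all of $\mathcal{W}_M$.
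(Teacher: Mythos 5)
Your proposal is correct and follows essentially the same route as the paper's proof: conjugate pairing of spectral points to obtain the real cosine--sine form in (a), and a term-by-term covariance computation that collapses to a discretization converging to $\sigma^2\int_{[-\pi/\Delta,\pi/\Delta]^2}\cos(\bfomega^{\top}\bfh)f^{\Delta}_{\vartheta}(\bfomega)\,d\bfomega = \sigma^2 K_{\vartheta}(\|\bfh\|)$ in (b), the paper merely organizing the pairing into seven letter-coded groups and using quadrant-wise trapezoidal product rules where you use a single uniform-weight sum over $\mathcal{W}_M$. The only step you assert rather than verify---and the one the paper works out at length---is that for boundary indices such as $(m_1,\tfrac{M_2}{2})$ the partner $(-m_1,\tfrac{M_2}{2})$ is a coordinate reflection rather than a negation of $\bfomega_{m_1,\frac{M_2}{2}}$, so the conjugation $\exp\big(\mi\,\bfomega_{-m_1,\frac{M_2}{2}}^{\top}\bfu_{i,j}\big)=\compconj{\exp\big(\mi\,\bfomega_{m_1,\frac{M_2}{2}}^{\top}\bfu_{i,j}\big)}$ in (a), and the corresponding folding of partner terms onto $I$ in (b), hold only because $\bfu_{i,j}\in\Delta\mathbb{Z}^2$ makes the second coordinate contribute an integer multiple of $\pi$, combined with the coordinate-wise evenness of $f^{\Delta}_{\vartheta}$ inherited from the radial $f_{\vartheta}$.
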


\noindent
From part (a) it follows that 
$\boldsymbol{t} \coloneqq (T_{M_1, M_2}(\bfu_{1,1}),\ldots,T_{M_1, M_2}(\bfu_{M_1,M_2}))^{\top}
= \boldsymbol{H}_1 {\bfg}$ has a joint multivariate normal distribution, where 
$\boldsymbol{H}_1 \coloneqq ({\bf 1}_M,\boldsymbol{H})$, $\boldsymbol{H}$ is the $M \times (M-1)$ matrix 
whose columns are formed by the multiples $1$, $2$ or $-2$ of either cosines or sines evaluated 
at the inner products of appropriate frequencies and locations
(see the Supplementary Materials for details), and 
${\bfg}$ is the $M \times 1$ vector that stacks the variables $A_{m_1, m_2}$ and $B_{m_1, m_2}$ 
appearing in (\ref{eq:spectral-approx2}) in the order
\begin{equation}
{\bfg} = \Big( \big(A_{m_1,m_2} : (m_1,m_2) \in I_{\rm C}\big)^{\top}, 
\big(A_{m_1,m_2} : (m_1,m_2) \in I\big)^{\top}, \big(B_{m_1,m_2} : (m_1,m_2) \in I\big)^{\top} \Big)^{\top}.
\label{g-vector}
\end{equation}
In addition, if $\tilde{\boldsymbol{z}} \coloneqq (Z({\bfu}_{1,1}),\ldots,Z({\bfu}_{M_1, M_2}))^{\top}$,
then from part (b) we have that when $M_1$ and $M_2$ are both large,
$\tilde{\boldsymbol{z}} \; \stackrel{d}{\approx} \; \tilde{\boldsymbol{X}}\bfbeta + \boldsymbol{t}$,
where  $\tilde{\boldsymbol{X}}$ is the $M \times p$ matrix whose entries involve the covariates 
measured at the locations in $\mathcal{U}_M$
(it is assumed the covariates are available at any location, a common situation in geostatistical models, 
e.g., when $\mu({\bf s})$ is a function of the coordinates). Here the notation ${\bf Y}_1 \stackrel{d}{\approx} {\bf Y}_2$ means that random vectors
${\bf Y}_1$ and ${\bf Y}_2$ have approximately the same distribution.
As a result, it holds that
 \[
\tilde{\boldsymbol{z}} \; \stackrel{{\tiny \rm approx}}{\sim} 
{\rm N}\big(\tilde{\boldsymbol{X}}\bfbeta, 
\sigma^2 \boldsymbol{H}_1 \boldsymbol{G}_{\vartheta}\boldsymbol{H}_1^{\top}\big) ,
\]
with
\begin{align*}
\nonumber	
\boldsymbol{G}_{\vartheta} &= 
\frac{c_{\Delta}}{2M} {\rm diag}\Big(
\big(2 f^{\Delta}_{\vartheta}(\bfomega_{m_1,m_2}) : (m_1,m_2) \in I_{\rm C}\big)^{\top}, \
\big(f^{\Delta}_{\vartheta}(\bfomega_{m_1,m_2}) : (m_1,m_2) \in I\big)^{\top}, \\
& \hspace{6cm} \big(f^{\Delta}_{\vartheta}(\bfomega_{m_1,m_2}) : (m_1,m_2) \in I\big)^{\top} \Big) .
\label{eq:G}
\end{align*}
Hence, the covariance matrix of $\tilde{\boldsymbol{z}}$ is approximated by the 
orthogonal basis formed by the columns of $H_1$, which provides the sought approximation to the joint distribution of $Z(\cdot)$
in $\mathcal{U}_M$.
To illustrate the quality of the approximation we consider processes defined
on $\mcD = [0,1]^2$ having isotropic Mat\'ern correlation functions with $\vartheta=0.4$.
Figure \ref{fi:exact-approx-corr} displays the correlation functions of $Z(\cdot)$
(solid black lines) when $\nu = 0.5, 1.5$ and $2.5$ (top, middle and bottom panels),
and the corresponding correlation functions of $T_{M_1, M_2}(\cdot)$ (broken red lines).
To compute the latter we used $\Delta = 0.1$, $M_1 = M_2 = 20$ for the left panels and
$M_1 = M_2 = 30$ for the right panels.
These show that the approximations are quite precise for most distances, except when 
$M_1$ and $M_2$ are not large enough. In this case the approximation is poor for large distances 
due to the periodic nature of the spectral approximation.
But as long as $M_1$ and $M_2$ are chosen large enough, the approximation is excellent for 
all distances relevant to the region $\mcD$.

\begin{figure}[t]
\begin{center}
\psfig{figure=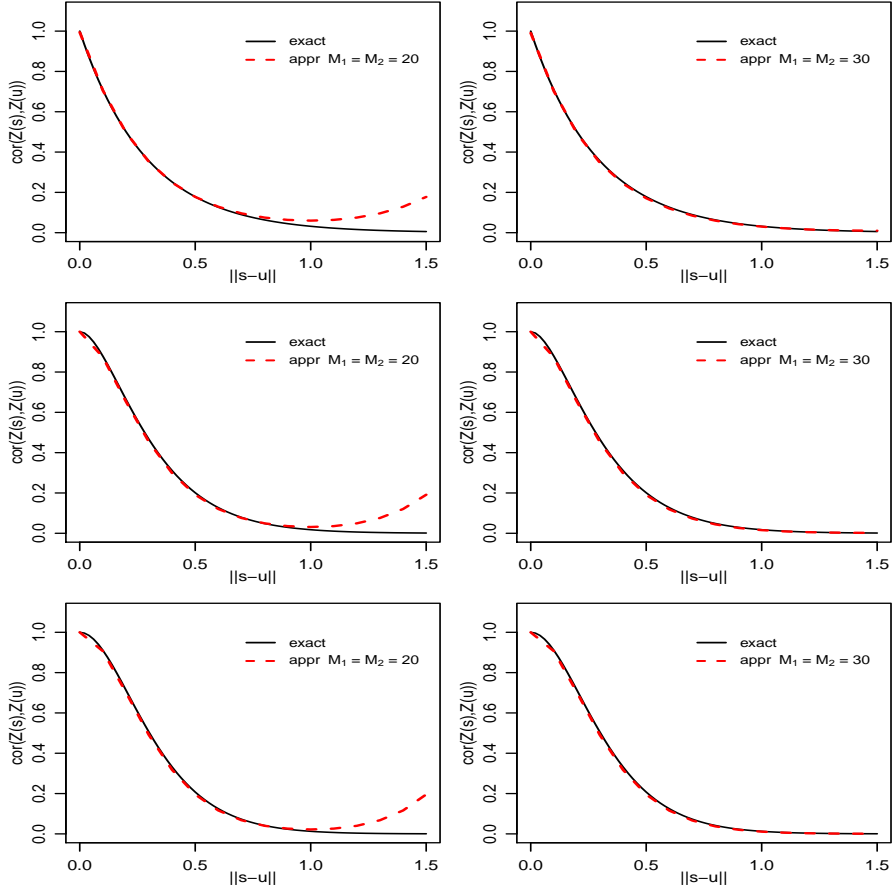, width=12cm, height=12cm} 
\end{center}
\vspace{-0.5 cm}
\caption{Plots of the correlation functions of $Z(\cdot)$ (solid black lines) 
and $T_{M_1, M_2}(\cdot)$ (broken red lines) corresponding to the isotropic Mat\'ern correlation 
functions with $\vartheta=0.4$ and $\nu = 0.5, 1.5$ and $2.5$ (top, middle and bottom panels).
In all $\Delta = 0.1$ was used.}
\label{fi:exact-approx-corr}
\end{figure}

\subsection{Approximate Integrated Likelihood}

Let $\tilde{\boldsymbol{\Sigma}}_{\vartheta} \coloneqq {\rm var}(\tilde{\boldsymbol{z}})$.
Although $\tilde{\boldsymbol{\Sigma}}_{\vartheta} \approx 
\boldsymbol{H}_1 \boldsymbol{G}_{\vartheta}\boldsymbol{H}_1^{\top}$
when $M_1$ and $M_2$ are  both large, 
replacing the former matrix with the latter in (\ref{eq:intlik-org}) or (\ref{eq:intlik-var})
does not generally result in a computationally convenient approximation of the integrated likelihood
of $\bftheta$. 
So we explore the alternative route of computing reference priors from the likelihood of a special 
linear combination of $\tilde{\boldsymbol{z}}$, somewhat similar to what is done for estimation of 
variance components using restricted likelihoods.
In all that follows, the aliased spectral density $f^{\Delta}_{\vartheta}(\bfomega)$ is approximated by 
truncating the series (\ref{eq:spec-den-alias}) so that only the terms for which
$\max\{|l_1|, |l_2|\} \leq T$ are retained, for some $T \in {\mathbb N}$;
this approximation is denoted by $\tilde{f}^{\Delta}_{\vartheta}(\bfomega)$.
Extensive numerical exploration shows that when $T$ is chosen in the range 3--6, 
the contribution of additional terms in (\ref{eq:spec-den-alias}) is negligible, 
so $\tilde{f}^{\Delta}_{\vartheta}(\bfomega)$ is not sensitive to $T$; 
see Section \ref{sec:numerical studies}.

Let $\boldsymbol{L}_1 \coloneqq \boldsymbol{H}_1 (\boldsymbol{H}_1^{\top}\boldsymbol{H}_1)^{-1/2}$, 
where $\boldsymbol{H}_1$ is the matrix defined above, and
$\boldsymbol{V}_1 \coloneqq \boldsymbol{L}_1^{\top} \tilde{\boldsymbol{z}}$.
Because of the regular arrangements of locations ${\bfu}_{i,j}$ and frequencies 
$\bfomega_{m_1, m_2}$, and the orthogonality properties of  cosines and sines, 
it holds that 
\begin{equation}
\boldsymbol{H}^{\top} {\bf 1}_M = {\bf 0}_{M-1} \quad  {\rm and}  \quad
\boldsymbol{H}_1^{\top}\boldsymbol{H}_1 = 
M {\rm diag}\big(1,1,1,1,\underbrace{2,\ldots,2}_{\tiny \mbox{$M - 4$ times}}\big) ;
\label{orthogonality}
\end{equation}
(see for instance \citet[Appendix E]{Bose2018}).
From these facts, direct calculation shows that 
\begin{equation}
\label{eq:spec-approx-covar}
\boldsymbol{V}_1 \stackrel{{\scriptsize \rm approx}}{\sim}  
{\rm N} \left( \boldsymbol{X}_1 \bfbeta, 
\sigma^2 \tilde{\boldsymbol{\Lambda}}_{\vartheta} \right),
\end{equation}
where $\boldsymbol{X}_1 := \boldsymbol{L}_1^{\top}\tilde{\boldsymbol{X}}$ is an $M \times p$ matrix 
with full rank $p$, and 
$\tilde{\boldsymbol{\Lambda}}_{\vartheta}$ is the diagonal matrix 
\begin{align}
\nonumber	
\tilde{\boldsymbol{\Lambda}}_{\vartheta} &= c_{\Delta} {\rm diag}\Big(
\big(\tilde{f}^{\Delta}_{\vartheta}(\bfomega_{m_1,m_2}) : (m_1,m_2) \in I_{\rm C}\big)^{\top}, \
\big(\tilde{f}^{\Delta}_{\vartheta}(\bfomega_{m_1,m_2}) : (m_1,m_2) \in I\big)^{\top}, \\
& \hspace{6cm} \big(\tilde{f}^{\Delta}_{\vartheta}(\bfomega_{m_1,m_2}) : (m_1,m_2) \in I\big)^{\top} \Big) .
\label{eq:Lambda}
\end{align}
Although the components of $\boldsymbol{V}_1$ are not error contrasts in general, its approximate 
covariance matrix is substantially simpler (diagonal) than that of $\tilde{\boldsymbol{z}}$.
So applying the reference prior algorithm described in Section \ref{sec:derivations} based on 
the likelihood of $\boldsymbol{V}_1$ will result in substantial simplifications.
Section \ref{sec:numerical studies} shows that this route delivers close approximations to 
reference priors when $\mathcal{U}_M$ is tuned to the features of the sampling design $\mathcal{S}_n$.

An important special case is that of models with constant mean function, i.e., when $p=1$.
In this case, $\boldsymbol{X}_1 \beta_1 = (\sqrt{M}\beta_1, \; {\bf 0}_{M-1}^{\top})^{\top}$
so the last $M-1$ components of $\boldsymbol{V}_1$ form a set of $M-1$ linearly independent 
error contrasts of $ \tilde{\boldsymbol{z}}$.
As a result, direct calculation from (\ref{eq:spec-approx-covar}) shows that the 
restricted log--likelihood function of $\bftheta$ based on $\tilde{\boldsymbol{z}}$ is, 
up to an additive constant, approximately equal to
\begin{equation}
l^{\rm AI}(\bftheta ; \tilde{\boldsymbol{z}}) \; = \;
-\frac{1}{2} \sum_{j=1}^{M-1} \left(  
\log\Big( c_{\Delta}\sigma^2 \tilde{f}^{\Delta}_{\vartheta}
(\bfomega_j) \Big) 
	+ \frac{V_j^2}{c_{\Delta} \sigma^2 \tilde{f}^{\Delta}_{\vartheta}(\bfomega_j) }\right) ,
\label{eq:approx-restr-loglik}
\end{equation}
where $\bfomega_j$ is a re--indexing of the frequencies $\bfomega_{m_1, m_2}$ appearing 
in (\ref{eq:Lambda}), with $\bfomega_{0,0}$ removed, 
and $V_1,\ldots,V_{M-1}$ are 
the last $M-1$ components of $\boldsymbol{V}_1$. 
In this case, even more substantial simplifications accrue in the computation of 
approximate reference priors since (\ref{eq:approx-restr-loglik}) is a matrix--free expression
and the required expectations are simplified as the $V_j$s are independent with 
$V_j^2 \stackrel{{\tiny \rm approx}}{\sim} 
{\rm Gamma}\big(1/2, 2\sigma^2 c_{\Delta} \tilde{f}^{\Delta}_{\vartheta}(\bfomega_j)\big)$ 
(shape--scale parametrization). 
Additionally, for Mat\'ern correlation functions, differentiation with respect to the 
range parameter $\vartheta$ is simplified since $l^{\rm AI}(\bftheta ; \tilde{\boldsymbol{z}})$ is 
devoid of Bessel functions.
Finally, \cite{Harville1974} showed that the integrated likelihood 
$L^{\rm I}(\bftheta; \tilde{\boldsymbol{z}})$ is proportional to the restricted likelihood function 
of $\bftheta$ based on a set of error contrasts $\boldsymbol{A}^{\top}\tilde{\boldsymbol{z}}$ 
when $\boldsymbol{A}$ satisfies $\boldsymbol{A}^{\top}\boldsymbol{A} 
= \boldsymbol{I}_{M-1}$ and $\boldsymbol{A}\boldsymbol{A}^{\top} = 
\boldsymbol{I}_{M} - \frac{1}{M}{\bf 1}_{M}{\bf 1}_{M}^{\top}$.
Since the ratio of the restricted likelihood functions of $\bftheta$ based on any two
sets of linearly independent error contrasts does not depend on $\bftheta$, it follows that
$\log L^{\rm I}(\bftheta; \tilde{\boldsymbol{z}})$ is, up to an additive constant, 
approximately equal to (\ref{eq:approx-restr-loglik}).

\begin{figure}[t]
\begin{center}
\psfig{figure=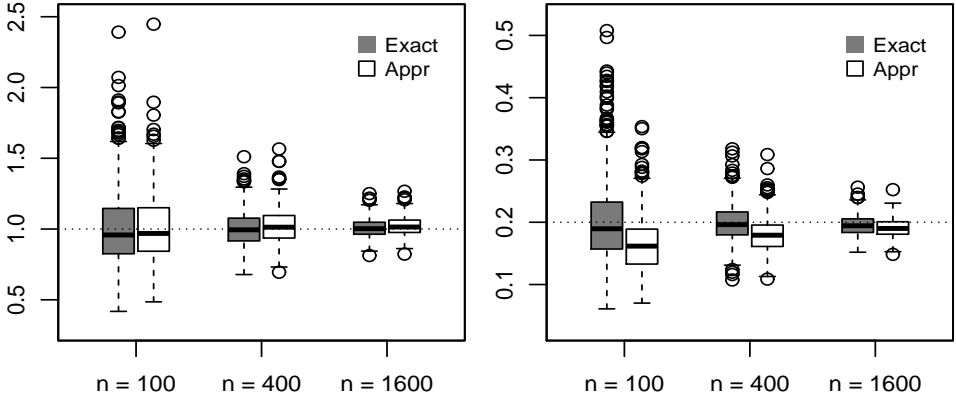, width=13cm, height= 5.5cm} 
\end{center}
\vspace{-0.5 cm}
\caption{Boxplot of exact and approximate REML estimates of $\sigma^2$ (left) and $\vartheta$ (right)
for different sample sizes. The true model is a Gaussian random field with mean 0 and Mat\'ern covariance function
with parameters $\sigma^2 = 1$, $\vartheta = 0.2$ and $\nu = 0.5$.}
\label{fi:reml-sigsqtheta}
\end{figure}

 Consider situations where the sampling design is regular, and the mean function is constant. 
By setting $M=n$ and $\mathcal{U}_M = \mathcal{S}_n$, so $\tilde{\boldsymbol{z}} = \boldsymbol{z}$,
restricted maximum likelihood (REML) estimates of the covariance parameters can be
approximated by maximizing (\ref{eq:approx-restr-loglik}). 
This is appealing when the sample size is large, since in this case the computation of exact REML estimates 
(obtained by maximizing (\ref{eq:intlik-org}) or (\ref{eq:intlik-var}))
may be very time--consuming or even unfeasible.
We ran a small simulation to compare the sampling distributions of exact and approximate REML estimators.
For each sample size $n=100, 400$ and $1600$, the Gaussian random field with mean 0 and 
Mat\'ern covariance function with $\sigma^2 = 1$, $\vartheta = 0.2$ and $\nu = 0.5$ was simulated $500$ times 
over the $\sqrt{n} \; \times \sqrt{n}$ regular lattice with $\Delta = 0.1$, and for each simulated data set 
exact and approximate REML estimates of $\bftheta = (\sigma^2,\vartheta)$ were computed, assuming $\nu$ known.
Figure \ref{fi:reml-sigsqtheta} displays boxplots from the REML estimates of $\sigma^2$ (left) and $\vartheta$ (right).
This suggests the sampling distributions of exact and approximate REML estimators of $\sigma^2$ are close,
even in small samples. 
On the other hand, the sampling distributions of exact and approximate REML estimators of $\vartheta$ 
are close only for large samples. For small samples, approximate REML estimators of $\vartheta$ are
(downward) biased and less variable than their exact counterparts.
The same behaviours were observed for other model settings (not shown).
In terms of computational effort, when $n=1600$ the computation of exact REML estimates took 597 seconds 
on average, while the computation of approximate REML estimates took 1.71 seconds 
(in this work, computation time were reported  
using a MacBook Pro with 2.3 GHz Intel Core i9 processor under the {\tt R} programming language).

\section{Approximate Reference Priors}  \label{sec:appr-ref-prior}

The derivation of the approximate reference prior, to be denoted as $\pi^{\rm AR}(\vartheta)$,
proceeds as follows.
Rather than using the exact integrated likelihood, (\ref{eq:intlik-org}) or (\ref{eq:intlik-var}), 
based on the data $\boldsymbol{z}$ measured at $\mathcal{S}_n$,
we use the approximate integrated likelihood derived 
from the potential summary (\ref{eq:spec-approx-covar}) measured at $\mathcal{U}_M$. 
This summary has a substantially simpler (diagonal) covariance matrix which, for 
the Mat\'ern and other families, is also devoid of special functions.
This makes the evaluation and analysis of the resulting approximate reference prior 
much more manageable than those of the exact reference prior.
In what follows, we state expressions for the approximate reference priors and 
establish the propriety of the corresponding approximate reference posteriors.
In these it is assumed that the covariates, if any, are available everywhere.

\begin{thm}[Approximate Reference Prior]
\label{thm:approx-prior-covar}
The approximate reference prior of 
$(\bfbeta, \sigma^2, \vartheta)$ derived from (\ref{eq:spec-approx-covar}) 
is given by 
$\pi^{\rm AR}(\bfbeta, \sigma^2, \vartheta) \propto \frac{\pi^{\rm AR}(\vartheta)}{\sigma^2}$, where 
\begin{equation}
\label{eq:approx-prior-covar}
\pi^{\rm AR}(\vartheta) \propto 
\left\{ {\rm tr}\left[\left\{ 
\left(\frac{\partial}{\partial \vartheta} \tilde{\boldsymbol{\Lambda}}_{\vartheta}\right) 
\tilde{\boldsymbol{Q}}_{\vartheta}\right\}^2 \right] - \frac{1}{M-p} 
	\left[{\rm tr}\left\{\left(\frac{\partial}{\partial \vartheta} \tilde{\boldsymbol{\Lambda}}_{\vartheta}
\right) \tilde{\boldsymbol{Q}}_{\vartheta}\right\} \right]^{2} \right\}^{\frac{1}{2}},
\end{equation}
with $\tilde{\boldsymbol{\Lambda}}_{\vartheta}$ defined in (\ref{eq:Lambda}) and 
$\tilde{\boldsymbol{Q}}_{\vartheta} := \tilde{\boldsymbol{\Lambda}}_{\vartheta}^{-1} - 
\tilde{\boldsymbol{\Lambda}}_{\vartheta}^{-1}\boldsymbol{X}_1 
(\boldsymbol{X}_1^{\top} \tilde{\boldsymbol{\Lambda}}_{\vartheta}^{-1}
\boldsymbol{X}_1)^{-1} \boldsymbol{X}_1^{\top} \tilde{\boldsymbol{\Lambda}}_{\vartheta}^{-1}$. 
 
\medskip

\noindent
Proof. The result follows from (\ref{eq:reference-prior-1}) by replacing $\boldsymbol{X}$ with 
$\boldsymbol{X}_1$ and $\boldsymbol{\Sigma}_{\vartheta}$ with $\tilde{\boldsymbol{\Lambda}}_{\vartheta}$.
\end{thm}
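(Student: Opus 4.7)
The plan is to observe that the approximate sampling distribution (\ref{eq:spec-approx-covar}) of $\boldsymbol{V}_1$ is a Gaussian linear model with exactly the same algebraic structure as the original data model for $\boldsymbol{z}$: namely, a multivariate normal whose mean is linear in $\bfbeta$ and whose covariance factors as $\sigma^2$ times a matrix depending on $\vartheta$. Since Proposition \ref{thm:ref-prior}(a) derives the representation (\ref{eq:reference-prior-1}) entirely from this structure, the derivation should carry over verbatim under the substitutions $n \leftrightarrow M$, $\boldsymbol{X} \leftrightarrow \boldsymbol{X}_1$, and $\boldsymbol{\Sigma}_\vartheta \leftrightarrow \tilde{\boldsymbol{\Lambda}}_\vartheta$. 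So the proof really amounts to checking that the reference-prior algorithm of Section \ref{sec:derivations} can legitimately be applied to (\ref{eq:spec-approx-covar}).

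First I would verify the two prerequisites for running that algorithm. Condition (i) is that $\boldsymbol{X}_1$ has full column rank $p$, which is asserted in the paragraph defining $\boldsymbol{X}_1$. Condition (ii) is that $\tilde{\boldsymbol{\Lambda}}_\vartheta$ is positive definite and differentiable in $\vartheta$; this is immediate because $\tilde{\boldsymbol{\Lambda}}_\vartheta$ is diagonal with entries $c_\Delta \tilde{f}^\Delta_\vartheta(\bfomega_{m_1,m_2})$ that are strictly positive (the truncated aliased spectral density inherits positivity from $f_\vartheta$) and differentiable in $\vartheta$ under the standing assumption that $f_\vartheta$ is differentiable in $\vartheta$.

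Next I would execute the four steps of the reference-prior recipe on (\ref{eq:spec-approx-covar}). Classify $\bftheta = (\sigma^2, \vartheta)$ as primary and $\bfbeta$ as secondary; then the conditional Jeffreys prior of $\bfbeta$ given $\bftheta$ is flat, $\pi^{\rm AR}(\bfbeta \mid \bftheta) \propto 1$, by the same normal-linear argument used in the exact case. Integrating $\bfbeta$ out of the approximate Gaussian likelihood reproduces the analog of (\ref{eq:intlik-org}),
\[
L^{\rm AI}(\bftheta; \boldsymbol{V}_1) \propto (\sigma^2)^{-(M-p)/2}\,|\tilde{\boldsymbol{\Lambda}}_\vartheta|^{-\frac{1}{2}}\,|\boldsymbol{X}_1^{\top} \tilde{\boldsymbol{\Lambda}}_\vartheta^{-1} \boldsymbol{X}_1|^{-\frac{1}{2}} \exp\!\Big\{-\tfrac{1}{2\sigma^2} \tilde{S}^2_\vartheta\Big\},
\]
with $\tilde{S}^2_\vartheta$ the analog of $S^2_\vartheta$ built from $\boldsymbol{V}_1, \boldsymbol{X}_1, \tilde{\boldsymbol{\Lambda}}_\vartheta$. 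Applying the Jeffreys prior construction to this marginal model over $(\sigma^2,\vartheta)$ yields the factorization $\pi^{\rm AR}(\sigma^2,\vartheta) \propto \pi^{\rm AR}(\vartheta)/\sigma^2$, and the trace-formula manipulation that produced (\ref{eq:reference-prior-1}) from (\ref{eq:intlik-org}) applies identically here, producing (\ref{eq:approx-prior-covar}) with $\tilde{\boldsymbol{Q}}_\vartheta$ the analog of $\boldsymbol{Q}_\vartheta$.

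There is essentially no obstacle to this argument because it is a mechanical re-use of Proposition \ref{thm:ref-prior}(a); the only point requiring any care is confirming the positive-definiteness and smoothness of $\tilde{\boldsymbol{\Lambda}}_\vartheta$ mentioned above, so that $\tilde{\boldsymbol{\Lambda}}_\vartheta^{-1}$ and $(\partial/\partial\vartheta)\tilde{\boldsymbol{\Lambda}}_\vartheta$ appearing inside $\tilde{\boldsymbol{Q}}_\vartheta$ and the trace in (\ref{eq:approx-prior-covar}) are well-defined. All the substantive modeling work has already been done upstream, in constructing the transformation $\boldsymbol{V}_1 = \boldsymbol{L}_1^{\top}\tilde{\boldsymbol{z}}$ via (\ref{orthogonality}) so that the approximate covariance becomes the diagonal matrix $\sigma^2 \tilde{\boldsymbol{\Lambda}}_\vartheta$; after that, the proof of the theorem is a one-line appeal to Proposition \ref{thm:ref-prior}(a).
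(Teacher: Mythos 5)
Your proposal is correct and follows exactly the paper's route: the authors likewise obtain (\ref{eq:approx-prior-covar}) by applying the reference-prior algorithm to the Gaussian model (\ref{eq:spec-approx-covar}) and invoking Proposition \ref{thm:ref-prior}(a) with $\boldsymbol{X}$ replaced by $\boldsymbol{X}_1$ and $\boldsymbol{\Sigma}_{\vartheta}$ by $\tilde{\boldsymbol{\Lambda}}_{\vartheta}$. Your explicit checks of the full column rank of $\boldsymbol{X}_1$ and the positivity and differentiability of $\tilde{\boldsymbol{\Lambda}}_{\vartheta}$ are a slightly more careful spelling-out of the same one-line argument.
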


Note that $\tilde{\boldsymbol{\Lambda}}_{\vartheta}$ is a diagonal matrix so its inverse is 
easy to compute.  
The computation of $\tilde{\boldsymbol{Q}}_{\vartheta}$ only involves the inversion of the (small) 
$p \times p$ matrix  $\boldsymbol{X}_1^{\top}\tilde{\boldsymbol{\Lambda}}_{\vartheta}^{-1}\boldsymbol{X}_1$, 
where the matrix $\boldsymbol{X}_1$ needs to be computed only once since $\boldsymbol{H}_1$ is fixed. 
The diagonal elements of the diagonal matrix 
$(\partial \tilde{\boldsymbol{\Lambda}}_{\vartheta}/ \partial \vartheta)
\tilde{\boldsymbol{\Lambda}}_{\vartheta}^{-1}$, namely
$(\partial/\partial \vartheta)\log \tilde{f}^{\Delta}_{\vartheta}(\bfomega_{m_1,m_2})$, 
have a closed--form expression which for the Mat\'ern and other models 
is devoid of special functions. 
As a result, the computation and analysis of $\pi^{\rm AR}(\vartheta)$ is substantially simpler 
than that of $\pi^{\rm R}(\vartheta)$.
Note $\pi^{\rm AR}(\vartheta)$ could also be obtained from (\ref{eq:reference-prior-2}), but 
the resulting expression does not afford computational savings, so it is omitted. 

An important special case of the above result occurs when the mean function is constant,
in which case the approximate reference prior of $\vartheta$ takes an even simpler 
matrix--free form.

\begin{cor}[Constant Mean Case]
\label{cor:approx-prior-rho-nocov}
Consider models with constant mean function.  
In this case, the approximate reference prior of $(\beta_1, \sigma^2, \vartheta)$ 
is $\pi^{\rm AR}(\beta_1, \sigma^2, \vartheta)\propto \frac{\pi^{\rm AR}(\vartheta)}{\sigma^2}$, 
where
\begin{equation}
\label{eq:approx-prior-rho-nocov}
\pi^{\rm AR}(\vartheta) \propto 
	\left\{\sum_{j = 1}^{M-1} 
\left(\frac{\partial}{\partial \vartheta} \log \tilde{f}^{\Delta}_{\vartheta}(\bfomega_j) \right)^2 
- \frac{1}{M-1} \bigg(\sum_{j = 1}^{M-1} 
\frac{\partial}{\partial \vartheta} \log \tilde{f}^{\Delta}_{\vartheta}(\bfomega_j) \bigg)^2 \right\}^{\frac{1}{2}} ,	
\end{equation}
where $\bfomega_j$ is a re--indexing of the frequencies $\bfomega_{m_1, m_2}$ 
in (\ref{eq:Lambda}), with $\bfomega_{0,0} = (0,0)^{\top}$ removed. 
\medskip

\noindent
Proof. See the Appendix.
This result can also be obtained by applying the last step of the reference prior algorithm to the 
approximate log--integrated likelihood (\ref{eq:approx-restr-loglik}) (not shown). 
\end{cor}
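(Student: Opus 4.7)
The plan is to specialize the general formula \eqref{eq:approx-prior-covar} of Theorem \ref{thm:approx-prior-covar} to the case $p=1$ and exploit the very simple form that $\boldsymbol{X}_1$ takes in that case. The key observation, already noted in the paragraph preceding \eqref{eq:approx-restr-loglik}, is that when the mean function is constant one has $\boldsymbol{X}_1 = \sqrt{M}\, \boldsymbol{e}_1$, where $\boldsymbol{e}_1 = (1,0,\ldots,0)^{\top} \in {\mathbb R}^M$ corresponds to the frequency $\bfomega_{0,0}$. This makes $\boldsymbol{X}_1$ a multiple of a single coordinate vector, so all the matrix algebra in $\tilde{\boldsymbol{Q}}_{\vartheta}$ collapses.

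First I would substitute $\boldsymbol{X}_1 = \sqrt{M}\,\boldsymbol{e}_1$ into the defining expression for $\tilde{\boldsymbol{Q}}_{\vartheta}$ in Theorem \ref{thm:approx-prior-covar}. Writing $\lambda_j := (\tilde{\boldsymbol{\Lambda}}_{\vartheta})_{jj}$, one has $\boldsymbol{X}_1^{\top}\tilde{\boldsymbol{\Lambda}}_{\vartheta}^{-1}\boldsymbol{X}_1 = M/\lambda_1$ and $\tilde{\boldsymbol{\Lambda}}_{\vartheta}^{-1}\boldsymbol{X}_1 = (\sqrt{M}/\lambda_1)\,\boldsymbol{e}_1$, so the rank-one correction reduces to $\lambda_1^{-1}\boldsymbol{e}_1\boldsymbol{e}_1^{\top}$. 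Consequently
\[
\tilde{\boldsymbol{Q}}_{\vartheta} = \tilde{\boldsymbol{\Lambda}}_{\vartheta}^{-1} - \lambda_1^{-1}\boldsymbol{e}_1\boldsymbol{e}_1^{\top} = \mathrm{diag}(0,\lambda_2^{-1},\ldots,\lambda_M^{-1}),
\]
which is diagonal and annihilates the coordinate associated with $\bfomega_{0,0}$.

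Next I would compute the two traces appearing in \eqref{eq:approx-prior-covar}. Since both $\partial \tilde{\boldsymbol{\Lambda}}_{\vartheta}/\partial \vartheta$ and $\tilde{\boldsymbol{Q}}_{\vartheta}$ are diagonal, their product is the diagonal matrix whose $j$-th entry is $\lambda_j^{-1}\,\partial \lambda_j/\partial \vartheta = \partial \log \lambda_j/\partial \vartheta$ for $j \geq 2$ and zero for $j=1$. Recalling from \eqref{eq:Lambda} that $\lambda_j = c_{\Delta}\tilde f^{\Delta}_{\vartheta}(\bfomega_{m_1,m_2})$ (up to the fixed constant $c_{\Delta}$, whose log-derivative vanishes), the nonzero diagonal entries are exactly $(\partial/\partial\vartheta)\log \tilde f^{\Delta}_{\vartheta}(\bfomega_{m_1,m_2})$. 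After re-indexing $\bfomega_{m_1,m_2}$ as $\bfomega_j$ with $\bfomega_{0,0}$ removed, the trace of the product yields $\sum_{j=1}^{M-1}(\partial/\partial\vartheta)\log\tilde f^{\Delta}_{\vartheta}(\bfomega_j)$ and the trace of its square yields $\sum_{j=1}^{M-1}\big((\partial/\partial\vartheta)\log\tilde f^{\Delta}_{\vartheta}(\bfomega_j)\big)^2$. Finally, setting $p=1$ gives the prefactor $1/(M-p) = 1/(M-1)$, and assembling everything gives exactly \eqref{eq:approx-prior-rho-nocov}.

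There is essentially no hard step here; the argument is pure bookkeeping once one notices that $\boldsymbol{X}_1$ is parallel to $\boldsymbol{e}_1$. The only point requiring care is the re-indexing: the sum in \eqref{eq:approx-prior-rho-nocov} runs over $M-1$ frequencies because the $\bfomega_{0,0}$ coordinate has been killed by $\tilde{\boldsymbol{Q}}_{\vartheta}$. I would mention briefly that the identity $\boldsymbol{X}_1 = \sqrt{M}\,\boldsymbol{e}_1$ itself follows from the orthogonality relations \eqref{orthogonality} and the fact that, when $p=1$, the design matrix $\tilde{\boldsymbol{X}} = \boldsymbol{1}_M$, so $\boldsymbol{X}_1 = \boldsymbol{L}_1^{\top}\boldsymbol{1}_M = (\boldsymbol{H}_1^{\top}\boldsymbol{H}_1)^{-1/2}\boldsymbol{H}_1^{\top}\boldsymbol{1}_M$ picks out only the constant column of $\boldsymbol{H}_1$.
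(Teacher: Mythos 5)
Your proposal is correct and follows essentially the same route as the paper's Appendix proof: both exploit $\boldsymbol{X}_1 = (\sqrt{M},\,\boldsymbol{0}_{M-1}^{\top})^{\top}$ to collapse the rank-one correction in $\tilde{\boldsymbol{Q}}_{\vartheta}$ to $\lambda_1^{-1}\boldsymbol{e}_1\boldsymbol{e}_1^{\top}$, so that $(\partial\tilde{\boldsymbol{\Lambda}}_{\vartheta}/\partial\vartheta)\tilde{\boldsymbol{Q}}_{\vartheta}$ becomes the diagonal matrix of log-spectral-density derivatives with the $\bfomega_{0,0}$ entry zeroed out, and the traces reduce to the sums in (\ref{eq:approx-prior-rho-nocov}). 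No gaps.
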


It should be noted that, because of isotropy, the (unnormalized) prior $\pi^{\rm AR}(\vartheta)$
can also be computed by including in the sum (\ref{eq:approx-prior-rho-nocov})
all frequencies in $\mathcal{W}_{M} - \{\bfomega_{0,0}\}$,
which is proportional to the sample standard deviation of the derivative w.r.t. the 
range parameter of the log aliased spectral density evaluated at these frequencies. 

To establish the propriety behaviour of the approximate reference prior and 
posterior, we make the following assumptions about the second--order
structure of $Z(\cdot)$:
\begin{enumerate}

\item[(A1)]
The family of (normalized) spectral densities satisfies 
$\int_{{\mathbb R}^2} f_{\vartheta}(\bfomega) d\bfomega  = 1$ for all $\vartheta > 0$,
and has the form
\begin{equation*}
f_{\vartheta}(\bfomega) = \frac{h_1(\bfomega) h_2(\vartheta)} 
{\big(\|\bfomega\|^2 + u(\vartheta) \big)^{a}} ,
\label{general-spec-den}
\end{equation*}
where 

$\bullet$ $h_1(\bfomega)$ is non--negative and continuous in ${\mathbb R}^2$, and $a$ is a constant.

$\bullet$ $h_2(\vartheta)$ and $u(\vartheta)$ are positive and continuously differentiable functions
on $(0,\infty)$. 
		
$\bullet$ $\lim_{\vartheta \rightarrow 0^+} f_{\vartheta}(\bfomega) = 0$ \;
(this implies $\lim_{\vartheta \rightarrow 0^+} K_{\vartheta}(r) = {\bf 1}\{r = 0\}$ (white noise)). 

\item[(A2)]
The correlation matrix $\boldsymbol{\Sigma}_{\vartheta}$ can be expressed as 
\begin{equation*}
\boldsymbol{\Sigma}_{\vartheta} = 
\sum_{i = 0}^{J}q_{i}(\vartheta) \boldsymbol{D}^{(i)} + \boldsymbol{R}(\vartheta), \quad\quad 
{\rm as} \ \ \vartheta \rightarrow \infty ,
\label{sigma-expansion-mure}
\end{equation*}
where $J \in {\mathbb N}$, the $q_{i}(\vartheta)$s are continuous functions on $(0, \infty)$,
the $\boldsymbol{D}^{(i)}$s are fixed symmetric matrices satisfying 
$\cap_{i = 0}^{J}{\rm Ker}(\boldsymbol{D}^{(i)}) = \{{\bf 0}_n\}$, and 
$\boldsymbol{R}(\vartheta)$ is a function from $(0, \infty)$ to the space of $n \times n$ real matrices.
\end{enumerate}

\begin{thm}[Propriety]  \label{thm:asymp-behav-approx-prior-rho}
Assume the mean function $\mu({\bf s})$ has an intercept (so $f_1({\bf s}) \equiv 1$).

(a) If $f_{\vartheta}(\bfomega)$ satisfies assumption (A1), then the approximate 
marginal reference prior $\pi^{\rm AR}(\vartheta)$ in (\ref{eq:approx-prior-covar}) 
is a continuous function satisfying
 \begin{equation}
  \pi^{\rm AR}(\vartheta) = 
  O\left( 
|u'(\vartheta)| \land \frac{|u'(\vartheta)|}{u^2(\vartheta)} \right),  \; \; \;
\text{as} \;\; \vartheta \to 0^{+} \;\; {\rm and} \;\; \vartheta \to \infty ,
\label{asymptotic-behaviour=pi=vartheta}
 \end{equation}
where $x \land y \coloneqq \min\{x, y\}$.  
So, if 
$|u'(\vartheta)| \land |u'(\vartheta)| u(\vartheta)^{-2}$ is integrable on $(0, \infty)$,
$\pi^{\rm AR}(\vartheta)$ is proper.

(b) If the second--order structure of $Z(\cdot)$ satisfies assumptions (A1) and (A2),
then the approximate reference posterior distribution based on the observed data, 
$\pi^{\rm AR}(\bfbeta, \sigma^2, \vartheta ~|~\boldsymbol{z}) \propto
L(\bfbeta, \sigma^2, \vartheta; \boldsymbol{z})\pi^{\rm AR}(\bfbeta, \sigma^2, \vartheta)$, 
is proper.

\medskip

\noindent
Proof. See the Appendix.
\end{thm}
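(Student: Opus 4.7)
My plan is to dispatch the two claims in order, with the asymptotic bound in (a) feeding into the integrability check of (b). The pivotal algebraic observation is that the bracketed quantity in (\ref{eq:approx-prior-covar}) equals $(M-p)$ times the sample variance of the diagonal entries of $(\partial\tilde{\boldsymbol{\Lambda}}_\vartheta/\partial\vartheta)\tilde{\boldsymbol{Q}}_\vartheta$. Because $\tilde{\boldsymbol{\Lambda}}_\vartheta$ is diagonal, those entries coincide with $\partial_\vartheta\log\tilde{f}^{\Delta}_\vartheta(\bfomega_j)$ exactly in the constant-mean case of Corollary \ref{cor:approx-prior-rho-nocov}, and differ from it only by a rank-$p$ perturbation induced by the $\boldsymbol{X}_1$-projection in general. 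The first step is to check that this perturbation does not change the asymptotic order of $\pi^{\rm AR}(\vartheta)$ at either endpoint, so that the analysis can be carried out as if working in the constant-mean case.

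Assumption (A1) then gives the factorisation
\[
\frac{\partial}{\partial\vartheta}\log \tilde{f}^{\Delta}_\vartheta(\bfomega) \;=\; \frac{h_2'(\vartheta)}{h_2(\vartheta)} \;-\; a\, u'(\vartheta)\, P(\bfomega,\vartheta),
\]
where $P(\bfomega,\vartheta)$ is a convex combination of the quantities $(\|\bfomega + 2\pi\boldsymbol{l}/\Delta\|^2 + u(\vartheta))^{-1}$ across the retained aliasing translates $\boldsymbol{l}$. The $\bfomega$-independent term $h_2'(\vartheta)/h_2(\vartheta)$ cancels inside the sample variance, so $\pi^{\rm AR}(\vartheta)$ is proportional to $|u'(\vartheta)|$ times the sample standard deviation of $\{P(\bfomega_j,\vartheta)\}_j$. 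The two bounds in (\ref{asymptotic-behaviour=pi=vartheta}) then follow from bounding this standard deviation in two complementary ways. A trivial uniform bound $P(\bfomega_j,\vartheta) \leq C$ (valid because the aliased frequencies stay at positive distance from the origin once $\bfomega_{0,0}$ is removed) gives the $O(|u'(\vartheta)|)$ estimate, sharp when $u(\vartheta)\to 0$; while a Taylor expansion shows $P(\bfomega_j,\vartheta) = 1/u(\vartheta) + O(1/u(\vartheta)^2)$ uniformly in $j$ as $u(\vartheta)\to\infty$, with the common leading term $1/u(\vartheta)$ dropping out of the sample standard deviation and yielding the sharper $O(|u'(\vartheta)|/u(\vartheta)^2)$ estimate in that regime. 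Taking the minimum and invoking continuity of $h_1, h_2, u$ and their derivatives closes part (a).

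For part (b), integrating out $\bfbeta$ and $\sigma^2$ against their conditional reference pieces reduces posterior propriety to the finiteness of $\int_0^\infty L^{\rm I}(\vartheta;\boldsymbol{z})\,\pi^{\rm AR}(\vartheta)\,d\vartheta$, in the spirit of (\ref{integrability}). I would bound $L^{\rm I}(\vartheta;\boldsymbol{z})$ at both endpoints using the $\boldsymbol{W}$-representation (\ref{eq:intlik-2}) together with (A2), importing the arguments of \cite{Mure2021} that sidestep the non-singularity assumption on $\boldsymbol{D}$ discussed earlier in the paper. The main obstacle is this last step: producing tail estimates of $L^{\rm I}$ tight enough so that, when multiplied by $\pi^{\rm AR}(\vartheta) = O(|u'(\vartheta)|\land|u'(\vartheta)|/u(\vartheta)^2)$, the product is genuinely integrable on $(0,\infty)$. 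Particular care is needed in the smooth-kernel regime (e.g.\ Mat\'ern with $\nu \geq 1$), where the rank of $\boldsymbol{D}$ is only $d+2$ and the classical \cite{Berger2001} argument breaks down; here (A2) is precisely what makes the argument go through, along the template of Theorem 4.4 in \cite{Mure2021}.
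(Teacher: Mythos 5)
Your treatment of the constant-mean case of (a) is essentially the paper's argument: the same factorisation $\partial_\vartheta\log\tilde f^{\Delta}_\vartheta(\bfomega)=h_2'/h_2-a\,u'(\vartheta)P(\bfomega,\vartheta)$, the same cancellation of the $\bfomega$-free term inside the sample variance, and two complementary bounds on the resulting spread (the paper gets the $|u'|/u^2$ rate by bounding pairwise differences via the ratio-of-sums inequality of Lemma \ref{th:lemma-inequality} and $g\geq u^2(\vartheta)$, where you propose a uniform Taylor expansion $P=1/u+O(1/u^2)$ whose leading term drops out of the standard deviation; both routes work). However, there are two genuine gaps. First, your opening identification of the bracket in (\ref{eq:approx-prior-covar}) with ``$(M-p)$ times the sample variance of the diagonal entries of $(\partial\tilde{\boldsymbol{\Lambda}}_\vartheta/\partial\vartheta)\tilde{\boldsymbol{Q}}_\vartheta$'' is false outside the constant-mean case: writing $\boldsymbol{\Psi}_\vartheta={\rm diag}(\boldsymbol{\gamma}_\vartheta)(\boldsymbol{I}_M-\boldsymbol{P}_\vartheta)$, one has ${\rm tr}(\boldsymbol{\Psi}_\vartheta^2)=\sum_i\gamma_i^2(1-\boldsymbol{P}_{ii})^2+\sum_{i\neq j}\gamma_i\gamma_j\boldsymbol{P}_{ij}\boldsymbol{P}_{ji}$, which is not a function of the diagonal entries alone. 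The reduction to the constant-mean case is precisely the hard part of (a), and you assert it (``the first step is to check that this perturbation does not change the asymptotic order'') without an argument. The paper needs here that $\boldsymbol{P}_{ij}\boldsymbol{P}_{ji}\geq0$ and $0\leq\boldsymbol{P}_{ii}\leq1$, that $\boldsymbol{P}_{11}=1$ because the intercept makes the first column of $\boldsymbol{X}_1$ equal to $(\sqrt{M},{\bf 0}_{M-1}^{\top})^{\top}$ (Lemma \ref{th:lemma-hatmatrix}, via Chipman's lemma), and then a Cauchy--Schwarz step followed by Sedrakyan's inequality to conclude $\pi^{\rm AR}(\vartheta)\leq C\,\pi^{0,\rm AR}(\vartheta)$. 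Your proposal never indicates where the intercept hypothesis enters, yet it is essential to this step.

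Second, you leave part (b) as an acknowledged ``main obstacle'' rather than resolving it, and you frame it as needing tail estimates of $L^{\rm I}$ ``tight enough'' to compensate. In fact no delicate balancing is required: the point is that $L^{\rm I}(\vartheta;\boldsymbol{z})$ is \emph{bounded} on $(0,\infty)$ --- it is continuous, converges to the positive finite limit $|\boldsymbol{X}^{\top}\boldsymbol{X}|^{-1/2}(S_0^2)^{-(n-p)/2}$ as $\vartheta\to0^+$ under (A1), and is $O(1)$ as $\vartheta\to\infty$ by \cite{Berger2001} in the non-smooth case and by \cite{Mure2021} under (A2) in the smooth case (the eigenvalue bound $\{\prod_i O(v_{n-p})/v_i\}^{1/2}$ with $v_{n-p}\leq v_i$). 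Boundedness of $L^{\rm I}$ together with the integrability of $\pi^{\rm AR}(\vartheta)$ already established in (a) immediately gives finiteness of $\int_0^\infty L^{\rm I}(\vartheta;\boldsymbol{z})\pi^{\rm AR}(\vartheta)\,d\vartheta$, hence posterior propriety. Your instinct to import Mur\'e's Theorem 4.4 machinery is the right citation, but the step you flag as open is closed by this simple observation.
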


\begin{cor}[Propriety for the Mat\'ern Family]  \label{cor:propriety-matern}
Consider a model determined by a mean function $\mu({\bf s})$ with an intercept and the Mat\'ern family 
of (normalized) spectral densities (\ref{eq:matern-spden}).
Then, $\pi^{\rm AR}(\vartheta)$ is integrable on $(0,\infty)$ and 
$\pi^{\rm AR}(\bfbeta, \sigma^2, \vartheta ~|~\boldsymbol{z})$ is proper.

\medskip

\noindent
Proof.
The Mat\'ern family (\ref{eq:matern-spden}) clearly satisfies (A1),
with $h_1(\bfomega) = \Gamma(\nu + 1) (4\nu)^\nu/\pi \Gamma(\nu)$, $h_2(\vartheta) = \vartheta^{-2\nu}$,
$u(\vartheta) = 4\nu/{\vartheta^2}$ and $a = \nu + 1$. Then
\[
|u'(\vartheta)| \land \frac{|u'(\vartheta)|}{u^2(\vartheta)} 
\ = \ \frac{8\nu}{\vartheta^3} \land \frac{\vartheta}{2\nu}
\ = \
\begin{cases}
o(1) , & \text{as} \ \vartheta \to 0^{+} \\
O(\vartheta^{-3}), & \text{as}\ \vartheta \to \infty  
\end{cases} ,
\]
is integrable on $(0,\infty)$, so by Theorem \ref{thm:asymp-behav-approx-prior-rho}(a)
$\pi^{\rm AR}(\vartheta)$ is proper.
Also, it was shown in 
\cite{Mure2021} that the Mat\'ern family (\ref{eq:matern-spden}) satisfies (A2), 
so the propriety of $\pi^{\rm AR}(\bfbeta, \sigma^2, \vartheta ~|~\boldsymbol{z})$ follows from
Theorem \ref{thm:asymp-behav-approx-prior-rho}(b).
\end{cor}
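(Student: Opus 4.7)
The plan is to deduce this corollary directly from Theorem 2 by verifying its hypotheses (A1) and (A2) for the Mat\'ern spectral density given in (\ref{eq:matern-spden}). First I would parse the Mat\'ern form into the template of assumption (A1): identify the multiplicative ``frequency part'' $h_1(\bfomega) = \Gamma(\nu + 1) (4\nu)^\nu/(\pi \Gamma(\nu))$ (which is in fact constant in $\bfomega$, hence non-negative and continuous), the ``range part'' $h_2(\vartheta) = \vartheta^{-2\nu}$, and $u(\vartheta) = 4\nu/\vartheta^2$ with exponent $a = \nu + 1$. The first two conditions of (A1) are immediate. For the third, I would verify $\lim_{\vartheta \to 0^+} f_{\vartheta}(\bfomega) = 0$ by observing that $u(\vartheta) \to \infty$ dominates the denominator, giving $f_{\vartheta}(\bfomega) \sim C\, \vartheta^{-2\nu} u(\vartheta)^{-(\nu+1)} = O(\vartheta^2) \to 0$.

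Next I would compute the governing quantity in Theorem 2(a): $u'(\vartheta) = -8\nu/\vartheta^3$, so $|u'(\vartheta)| = 8\nu/\vartheta^3$ and $|u'(\vartheta)|/u(\vartheta)^2 = \vartheta/(2\nu)$. Taking the minimum of the two, the first branch is the smaller one for large $\vartheta$ while the second branch is smaller for small $\vartheta$ (they cross at $\vartheta = 4\nu$). Thus the minimum is $O(\vartheta)$ as $\vartheta \to 0^+$ (in particular $o(1)$, so integrable near $0$) and $O(\vartheta^{-3})$ as $\vartheta \to \infty$ (integrable at infinity). Combined with continuity on the compact middle region, this yields integrability of the bound over $(0,\infty)$, so Theorem 2(a) gives propriety of $\pi^{\rm AR}(\vartheta)$.

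For the posterior propriety in Theorem 2(b), I still need assumption (A2) on the expansion of $\boldsymbol{\Sigma}_{\vartheta}$ as $\vartheta \to \infty$. Since the excerpt's discussion after Proposition 1 summarizes the Mur\'e (2021) analysis, I would simply cite that reference, noting that the Mat\'ern correlation function admits the required representation $\boldsymbol{\Sigma}_{\vartheta} = \sum_{i=0}^{J} q_i(\vartheta) \boldsymbol{D}^{(i)} + \boldsymbol{R}(\vartheta)$ with the intersection-of-kernels condition holding once enough terms of the Maclaurin expansion are retained. With (A1) and (A2) both verified, Theorem 2(b) delivers propriety of $\pi^{\rm AR}(\bfbeta, \sigma^2, \vartheta \mid \boldsymbol{z})$.

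I anticipate the only real obstacle is the bookkeeping in (A2): the classical Berger--De Oliveira--Sansó expansion uses a single non-singular $\boldsymbol{D}$ that fails for smooth Mat\'ern ($\nu \geq 1$), so one must invoke the more refined multi-term expansion from Mur\'e (2021) whose kernel intersection is trivial. Everything else is a mechanical check of (A1) and a short asymptotic estimate of $|u'(\vartheta)| \land |u'(\vartheta)|/u(\vartheta)^2$, both of which are straightforward given the explicit form of the Mat\'ern spectral density.
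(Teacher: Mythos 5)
Your proposal is correct and follows essentially the same route as the paper: identify $h_1$, $h_2$, $u$, and $a$ for the Mat\'ern spectral density, check that $|u'(\vartheta)| \land |u'(\vartheta)|/u^2(\vartheta) = 8\nu/\vartheta^3 \land \vartheta/(2\nu)$ is integrable on $(0,\infty)$ to invoke Theorem~\ref{thm:asymp-behav-approx-prior-rho}(a), and cite Mur\'e (2021) for (A2) to invoke part (b). The extra details you supply (verifying the limit condition in (A1), locating the crossover at $\vartheta = 4\nu$) are harmless elaborations of the same argument.
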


The preceding results provide a theoretical justification for using the approximate reference prior 
under either constant mean or non--constant mean model with a common intercept. 
Note that, in general, $\pi^{\rm AR}(\vartheta)$ is proper and its tail rate 
as $\vartheta \to \infty$ is the same regardless of the mean function and degree of smoothness 
of the random field.
On the other hand, \cite{Mure2021} showed that, in general, 
$\pi^{\rm R}(\vartheta) = O(1/\vartheta)$ as $\vartheta \rightarrow \infty$, and 
this tail behaviour is sharp for some models.
Mur\'e also showed that for some special models, other tail behaviours hold; 
\citep[Appendix B]{Mure2021}.
Consequently, $\pi^{\rm R}(\vartheta)$ is not always proper and the proposed use
of (exact) reference priors and Bayes factors discussed in \citet[Section 6]{Berger2001} 
for selecting smoothness in correlation families are not valid.
In contrast, $\pi^{\rm AR}(\vartheta)$ can always be used for this purpose; 
this is illustrated in Section \ref{sec:example}.

The marginal prior $\pi^{\rm AR}(\vartheta)$ depends on the tuning constants $M_1$, $M_2$ 
and $\Delta$ that need to be tuned to the sampling design $\mathcal{S}_n$. 
Since these have specific interpretations in terms of the spectral approximation, 
their selection is more straightforward than using a subjectively chosen  prior, for example, 
an inverse gamma prior, since it is unclear how to select the hyperparameters; 
this is discussed in Section \ref{sec:numerical studies}.

\medskip

{\bf Discussion}. 
Assumption (A1) is satisfied by several families of spectral densities proposed in the literature, 
after a reparametrization if needed. 
In addition to the Mat\'ern family, the family proposed by \cite{Laga2017} 
(assuming their parameters b and $\xi$ are known),
is of this form with $h_1(\bfomega) = ({\rm b}^2 + \|\bfomega\|^2)^\xi$, $h_2(\vartheta) \propto 1$, 
$u(\vartheta) = 1/\vartheta^2$ and $a = \nu + 1$.
Also, some of the families of spectral densities studied in 
\cite{Vecchia1985} and \cite{Jones1993} 
are of this form, after they are suitably parametrized.

Assumption (A2) is a more general expansion than that in (\ref{sigma-expansion}).
The latter occurs when $J = 1$, $q_{0}(\vartheta) = 1$, $\boldsymbol{D}^{(0)} = 
{\bf 1}_{n}{\bf 1}_{n}^{\top}$, 
$q_{1}(\vartheta) = q(\vartheta)$, $\boldsymbol{D}^{(1)} = \boldsymbol{D}$ and 
$\boldsymbol{R}(\vartheta) = o(q(\vartheta))$, with $q(\vartheta)$ and $\boldsymbol{D}$
defined circa (\ref{sigma-expansion}).
When $\boldsymbol{D}$ is non--singular, 
${\rm Ker}({\bf 1}_{n}{\bf 1}_{n}^{\top}) \cap {\rm Ker}(\boldsymbol{D}) = \{{\bf 0}_n\}$ 
clearly holds. 
Likewise, a sufficient (but not necessary) condition for 
$\cap_{i = 0}^{J}{\rm Ker}(\boldsymbol{D}^{(i)}) = \{{\bf 0}_n\}$ to hold is that 
at least one matrix $\boldsymbol{D}^{(i)}$ is non--singular.
\cite{Mure2021} checked that assumption (A2) holds for several commonly used families of 
covariance functions, including the Mat\'ern family.

\section{Numerical Studies}  \label{sec:numerical studies}

In this section, we conduct numerical studies to explore how close the marginal priors 
$\pi^{\rm R}(\vartheta)$ and $\pi^{\rm AR}(\vartheta)$ are for various sampling designs and 
model features, and provide empirical guidelines for the selection of the tuning constants 
$M_1, M_2$ and $\Delta$.
Additionally, we also compare the computational efforts for their computation.

We consider two regular designs, a $10 \times 10$ equally spaced grid in $[0, 1]^2$
and a $20 \times 20$ equally spaced grid in $[0, 2]^2$, as well as three irregular sampling designs
in $[0, 1]^2$ of size $n = 100$, to be described below.
For the mean function we consider $\mu({\bf s}) = 1$ and 
$\mu({\bf s}) = 0.15-0.65x-0.1y+0.9x^2-xy+1.2y^2$, with ${\bf s} = (x, y)$,
and for the covariance function we consider the isotropic Mat\'ern model (\ref{eq:matern-cov})
with $\nu = 0.5, 1.5$ and $2.5$.
In all cases the approximate reference priors are computed with $\tilde{f}^{\Delta}_{\vartheta}(\bfomega_j)$ 
obtained by truncating the series (\ref{eq:spec-den-alias}) so that only the terms with 
$\max\{|l_1|, |l_2|\} \leq 5$
are retained.
These approximate reference priors show no sensitivity to the truncation point.

\begin{figure}[t!]
\begin{center}
\psfig{figure=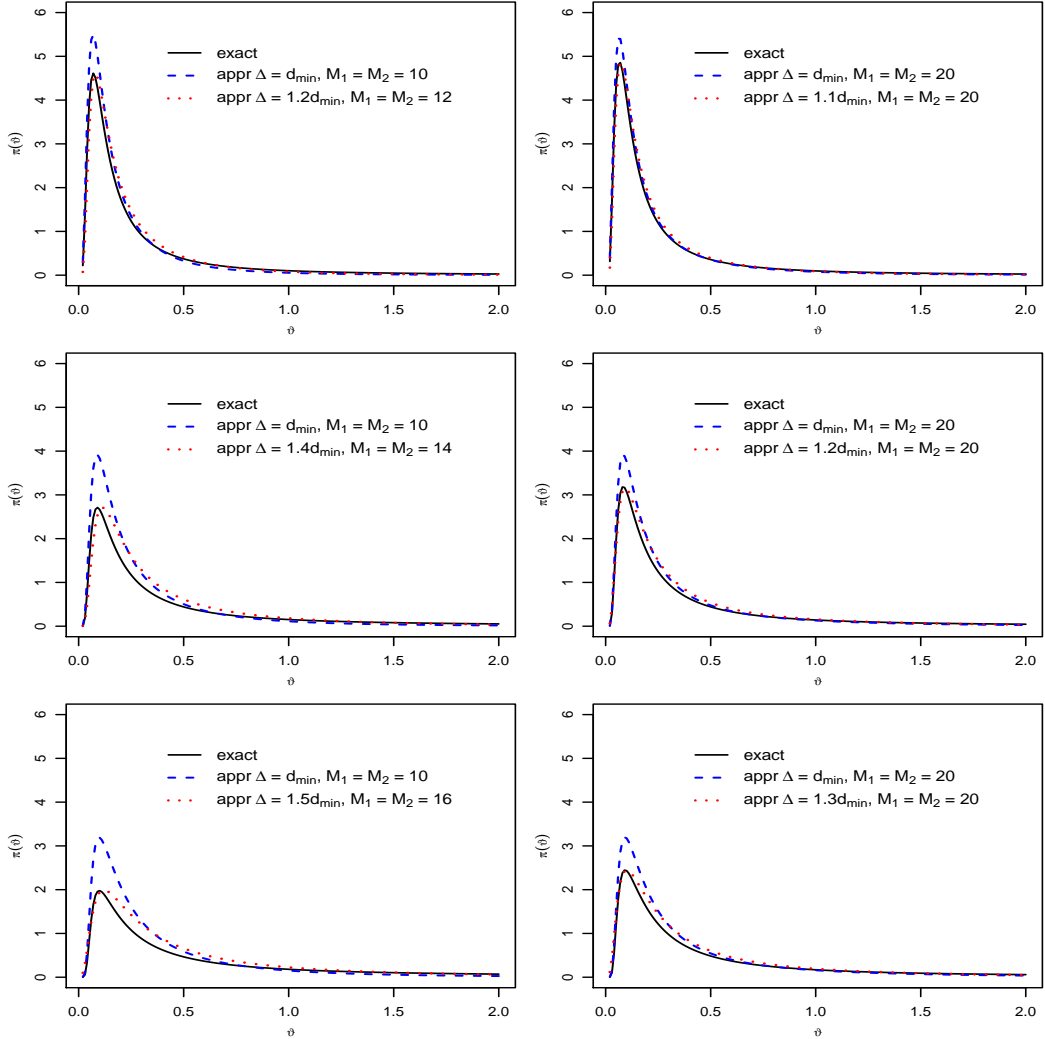, width=14cm,height=14cm} 
\end{center}
\vspace{-0.5 cm}
\caption{Marginal densities of the exact and approximate reference priors of $\vartheta$ for the
constant mean models under different sample designs and covariance smoothness.  
Left: $10\times 10$ equally spaced grid in $[0, 1]^2$.
Right: $20 \times 20$ equally spaced grid in $[0, 2]^2$.
From top to bottom: $\nu = 0.5, 1.5$ and $2.5$.}
\label{fi:prior-rho-nc}
\end{figure}

\begin{figure}[t!]
\begin{center}
\psfig{figure=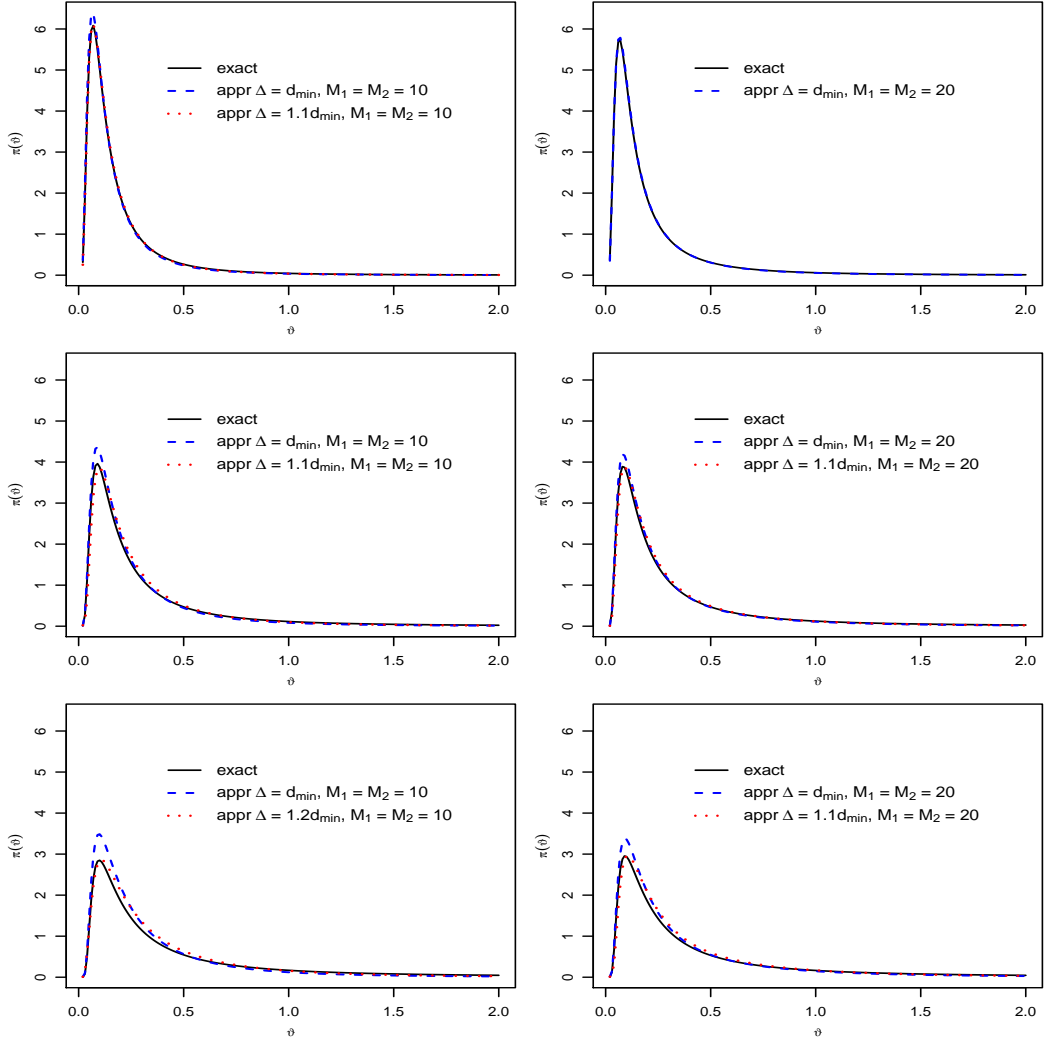, width=14cm,height=14cm} 
\end{center}
\vspace{-0.5 cm}
\caption{Marginal densities of the exact and approximate reference priors of $\vartheta$ for the
non--constant mean models under different sample designs and covariance smoothness.  
Left: $10\times 10$ equally spaced grid in $[0, 1]^2$.
Right: $20 \times 20$ equally spaced grid in $[0, 2]^2$.
From top to bottom: $\nu = 0.5, 1.5$ and $2.5$.}
\label{fi:prior-rho-c}
\end{figure}

Figure \ref{fi:prior-rho-nc} displays the (normalized) reference priors of $\vartheta$ 
based on the regular designs for models with constant mean. 
The left panels are the priors based on the grid in $[0, 1]^2$, the right panels
are the priors based on the grid in $[0, 2]^2$, and the top, middle and bottom panels 
are the priors obtained when $\nu = 0.5, 1.5$ and $2.5$, respectively.
The solid black curves are exact reference priors, and the broken colored curves are approximate reference priors.
As the default choice to compute $\pi^{\rm AR}(\vartheta)$ we use $\mathcal{U}_M = \mathcal{S}_n$, 
i.e., we set $M_1 = M_2 = \sqrt{n}$ and $\Delta = d_{\min}$, 
the distance between adjacent sampling locations.
The resulting approximate reference priors (broken blue curves) display the same shapes as the exact
reference priors, both having about the same mode, but do not provide a very close approximation
in general.
But the approximation improves substantially when $M_1$, $M_2$ and/or $\Delta$ are tuned.
Figure \ref{fi:prior-rho-nc} also displays approximate reference priors (broken red curves)
obtained by setting $M_1 = M_2 \geq \sqrt{n}$ and $\Delta > d_{\min}$ at values indicated in the legends.
Now the approximate reference priors provide close approximations.
Less tuning is needed for the larger sample size, as using the default $M_1 = M_2 = \sqrt{n}$ 
and only tuning $\Delta$ results in good approximations; 
the required tuned value of $\Delta$ increases with the smoothness.
Additionally, and in agreement with part (b) of Lemma \ref{thm:spec-approx-2d}, 
the approximations in the right panels are closer to their exact counterparts than the ones in the 
left panels since $M_1$ and $M_2$ are larger for the former.

Figure \ref{fi:prior-rho-c} displays the (normalized) reference priors of $\vartheta$ based on 
the regular designs for models with non--constant mean, with the same layout used in
Figure \ref{fi:prior-rho-nc}.
The behaviours and conclusions are essentially the same as those in Figure \ref{fi:prior-rho-nc}.
But now the approximate reference priors are even closer to their exact counterparts, and
the default choice provides even closer approximations.
Moreover, setting $M_1 = M_2 = \sqrt{n}$ and only tuning $\Delta$ results in good approximations
also for the small sample size, and even no adjustment at all 
(i.e., also setting $\Delta = d_{\min}$) may provide a good approximation when the sample size is large and the process is not smooth.

The results in Figures \ref{fi:prior-rho-nc} and \ref{fi:prior-rho-c}, 
as well as additional numerical explorations (not shown), 
suggest that, as a rule of thumb, for sampling designs that are small regular grids,  
the adjustment involves setting $M_1$ (for simplicity $M_2 = M_1$) and $\Delta$ to values 10--40\% larger than $\sqrt{n}$ 
and $d_{\min}$, respectively.
For larger  regular grids it may suffice to set $M_1 = \sqrt{n}$ and only adjust $\Delta$ 
to a value about 10\% larger than $d_{\min}$. 
Overall, the approximate reference priors are more sensitive to $\Delta$ than to $M_1$ (the Supplementary Materials provide an illustration of this fact).
Finally, for regular grids with similar $d_{\min}$, both the exact and the approximate 
reference priors of $\vartheta$ are more sensitivity to $\nu$ than to $n$ or $M_1$, and 
the approximations seem to be closer for non--smooth random fields.

\begin{figure}[ht!]
\begin{center}
\psfig{figure=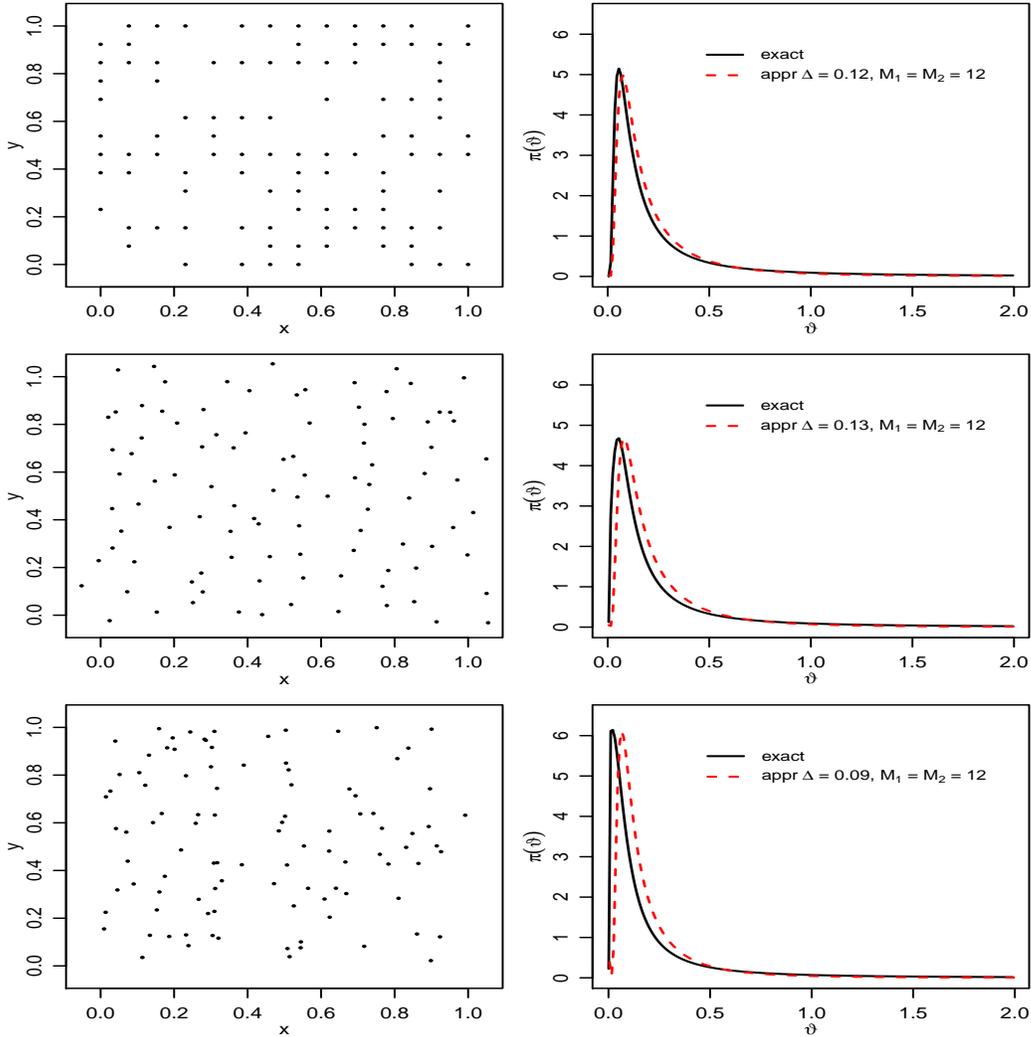,width=14cm,height=14cm} 
\end{center}
\vspace{-0.5 cm}
\caption{Left: Three different sampling designs in $[0,1]^2$. 
Right: Corresponding marginal densities of the exact and approximate reference priors
of $\vartheta$ under constant mean model when $\nu = 0.5$.}
\label{fi:prior-aref-ir}
\end{figure}

Next we consider three irregular sampling designs in $[0, 1]^2$ of size $n = 100$:
an incomplete $14\times 14$ regular grid, a hybrid design generated by the method proposed in 
\cite{Bachoc2014}, with $\epsilon = 0.499$, and a random sample from the ${\rm unif}((0, 1)^2)$ 
distribution. These designs are displayed in Figure \ref{fi:prior-aref-ir} (left panels).
Figure \ref{fi:prior-aref-ir} (right panels) displays the exact and approximate reference priors
of $\vartheta$ (solid black and broken red curves, respectively) based on these irregular designs
for the model with constant mean and Mat\'ern correlation with $\nu = 0.5$.
The approximate reference priors can still provide satisfactory approximations for practical purposes, although the discrepancy between the two priors increases with the degree of irregularity of the design.
The results for models with a non--constant mean and other degrees of smoothness displayed
similar behaviours (not shown).
The tuning of $M_1$ and $M_2$ was done similarly as described above for regular designs, but
now $\Delta$ is selected based on the distances to the nearest neighbors, 
$d_{i} = \min\{\|{\bf s}_{i} - {\bf s}_{j} \| : j \neq i \}$. 
It was empirically found that setting $\Delta$ at a value between the 75 to 95 percentiles of 
$\{d_{i}\}_{i = 1}^{n}$ provides reasonable approximations under the above designs. 
Overall, the numerical explorations reported in 
Figures \ref{fi:prior-rho-nc}--\ref{fi:prior-aref-ir} indicate that approximate 
reference priors, after properly tuned, provide satisfactory approximations to
exact reference priors for a variety of sampling designs and models. 
For large sample sizes when an approximation is most needed,
the tuning simplifies as we can set $M_1 = M_2 \approx \sqrt{n}$ and only select $\Delta$ 
using the aformentioned guideline.
The Supplementary Materials provide a more detailed comparison of the tail behaviour
of these priors, showing that approximate reference priors tend to have lighter tails than
their exact counterparts
(the text after Corollary \ref{cor:propriety-matern} explains the reason for this behaviour).

\begin{table}[t]
\centering
\caption{Computational time (in seconds) for $500$ evaluations of the exact and approximate 
reference priors of $\vartheta$ for regular grid sampling designs of different sample sizes and several model features.}
\label{ta:prior-time-2}
\begin{tabular}{lllcccccc}
\toprule
$\nu$ & $p$ & Reference Prior & & $n = 100$   & $n = 400 $ & $n = 1600$ & $n = 10000$ \\
 \cline{1-3} \cline{5-8}
\multirow{4}{*}{$0.5$} & 
\multirow{2}{*}{$1$} & 
Exact &   & $1.18$   &  $41.56$  & $2652.75$  & -- \\
& & Approximate  &   & $0.21$   & $0.87$   & $2.05$   & $10.21$   \\
 \cline{2-3} \cline{5-8}
 & 
\multirow{2}{*}{$6$} & 
Exact &   &  $2.48$    & $61.49$ & $4098.96$   & --  \\
& & Approximate  &   & $0.68$   & $2.62$  
& $32.08$   & $712.09$  \\
\midrule
\multirow{4}{*}{$1$} & 
\multirow{2}{*}{$1$} & 
Exact &   & $3.80$   &  $81.46$  & $3459.14$    & -- \\
& & Approximate  &   & $0.14$   & $0.77$   & $2.10$   & $9.69$   \\
 \cline{2-3} \cline{5-8}
 & 
\multirow{2}{*}{$6$} & 
Exact &   & $8.01$    & $104.44$  & $5047.18$  & -- \\
& & Approximate  &   & $0.81$   & $3.04$   & $29.94$  & $694.33$  \\
\bottomrule
\end{tabular}
\end{table}

To discuss the computational complexity of exact and approximate reference priors,
we consider for simplicity a regular grid $\sqrt{n} \; \times \sqrt{n}$ sampling design and use $M = n$.
The computation of the exact reference prior $\pi^{\rm R}(\vartheta)$ in (\ref{eq:reference-prior-1}) 
requires $O(M^3)$ operations due to the need of numerically invert $\boldsymbol{\Sigma}_{\vartheta}$. 
On the other hand, for processes with constant mean the computation of the approximate reference prior 
$\pi^{\rm AR}(\vartheta)$ in (\ref{eq:approx-prior-rho-nocov}) only requires $O(M)$ operations.
For processes with a non--constant mean function the computation of $\pi^{\rm AR}(\vartheta)$ in
(\ref{eq:approx-prior-covar}) requires $O(M^2)$ operations (and does not involve the evaluation of 
special functions). 
This is so due to the need to compute the matrix $\boldsymbol{X}_1$ (only once), 
with $O(M^2)$ computational complexity, and then computing 
$\boldsymbol{X}_1 (\boldsymbol{X}_1^{\top} \tilde{\boldsymbol{\Lambda}}_{\vartheta}^{-1}
\boldsymbol{X}_1)^{-1} \boldsymbol{X}_1^{\top} \tilde{\boldsymbol{\Lambda}}_{\vartheta}^{-1}$,
which also has $O(M^2)$ computational complexity.

Table \ref{ta:prior-time-2} reports the timings for 500 evaluations of both marginal reference priors 
of $\vartheta$ under regular $\sqrt{n} \; \times \sqrt{n}$ sampling designs for models with constant and
non--constant mean functions and Mat\'ern covariance functions with $\nu = 0.5$ and $1$. 
For the evaluation of the approximate reference prior we used $\Delta = 0.1$ and $M=n$.
The evaluation of approximate reference priors is between one and two orders of magnitude faster than 
that of exact reference priors, and the computational time gap increases substantially with sample size.
In particular, the computation of exact reference priors becomes computationally unfeasible 
when $n=10000$. 

The Supplementary Materials report results from a simulation study to compare frequentist properties 
of Bayesian procedures based on approximate and exact reference priors (under two types of the sampling designs), as well as frequentist properties of 
a purely likelihood--based procedure (under 
the regular lattice design for illustrative purposes).
The results suggest that the credible intervals for the covariance parameters based on these two priors 
have similar and satisfactory frequentist coverage, and their expected lengths are also about the same in most case scenarios. 
In addition, the mean absolute errors of the Bayesian estimators of the range parameter based on these two priors 
are about the same, and these are smaller than the mean absolute error of maximum likelihood estimators.

\section{Example}  \label{sec:example}

We illustrate the application of default Bayesian analysis based on exact and approximate reference priors
with a data set analyzed by \cite{Diggle2010}, available in the {\tt R} package {\tt PrevMap}. 
The data set, which came about in the monitoring of lead pollution in Galicia, northern Spain, 
consists of measurements of lead concentrations in moss samples (in micrograms per gram dry weight). 
Data from two survey times were analyzed by \cite{Diggle2010}, 
one in October 1997 and the other in July 2000. 
Here we use the July 2000 data, since the 1997 data were collected using a preferential sampling design. 
The analysis uses the log--transformation of the original measurements to eliminate their 
variance--mean relationship, which renders the homoscedastic Gaussian assumption appropriate. 
A summary of the data is plotted in Figure \ref{fi:galicia-locs} (left),  showing
$132$ sampling locations where the unit of distance is 100 km. 

There are no covariates available and an exploratory analysis reveals no apparent spatial trend,
so the mean function is assumed constant.
Figure \ref{fi:galicia-locs} (right) displays the empirical semivariogram and the fitted 
(by least squares) semivariogram function $0.21\big(1 - \exp(-\sqrt{2}r/0.26)\big)$.
This corresponds the Mat\'ern covariance function (\ref{eq:matern-cov}) with $\nu=0.5$ which,
except for a slight reparametrization, is the exponential semivariogram model used in \cite{Diggle2010}.
The fit appears appropriate for the data and suggests the data contain no measurement error (no nugget).

The sampling locations are close to form a regular grid, but they are not strictly aligned. 
For most sites, the distances to their nearest neighbors are similar 
(the $90$ percentile is about $15$ km and the maximum is $21$ km). 
To compute the approximate reference prior we set $M_1 = M_2 = 16$ and $\Delta = 0.2$, and $\tilde{f}^{\Delta}_{\vartheta}(\bfomega_j)$ is obtained by setting $\max\{|l_1|, |l_2|\} \leq 5$. 
Two Bayesian analyses were carried out based on the exact and approximate reference priors,
where samples of size $10^4$ from the corresponding posteriors  of $(\beta_1,\sigma^2,\vartheta)$
were simulated using the Monte Carlo algorithm described in the Supplementary Materials. 
The acceptance rate in the ratio--of--uniforms step was about $75\%$.  

\begin{figure}[t!]
\begin{center}
\psfig{figure=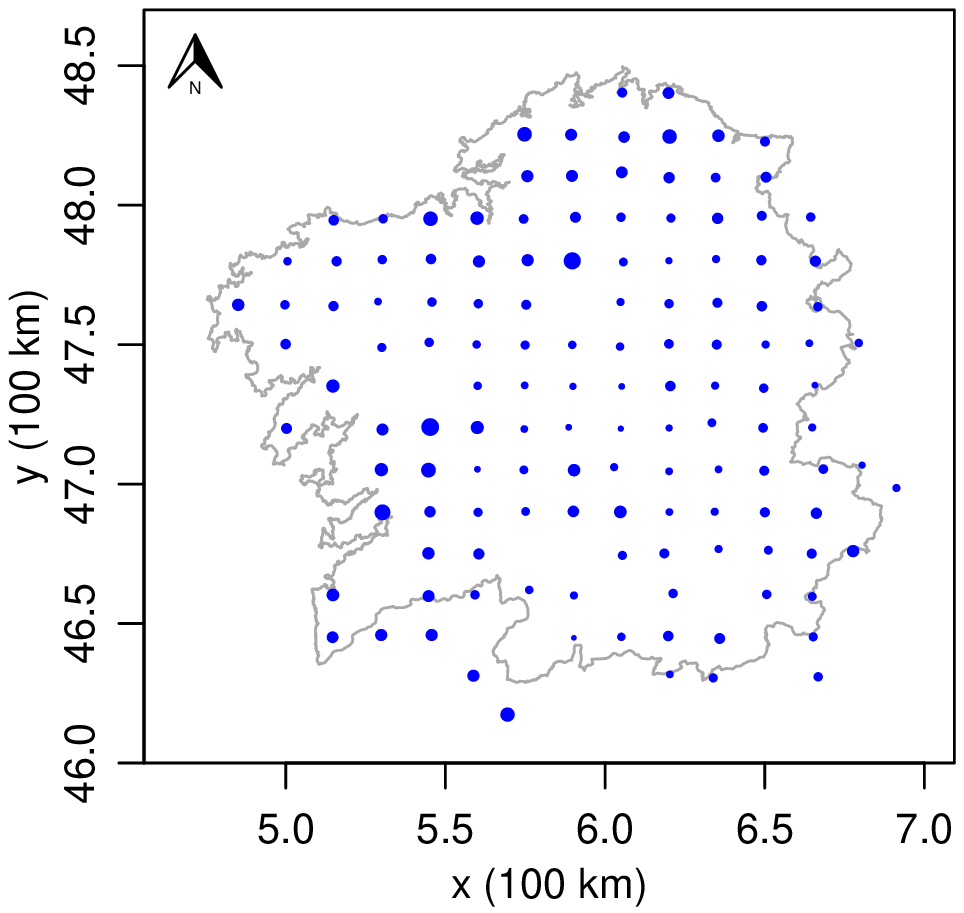, width=7cm,height=6cm} 
\psfig{figure=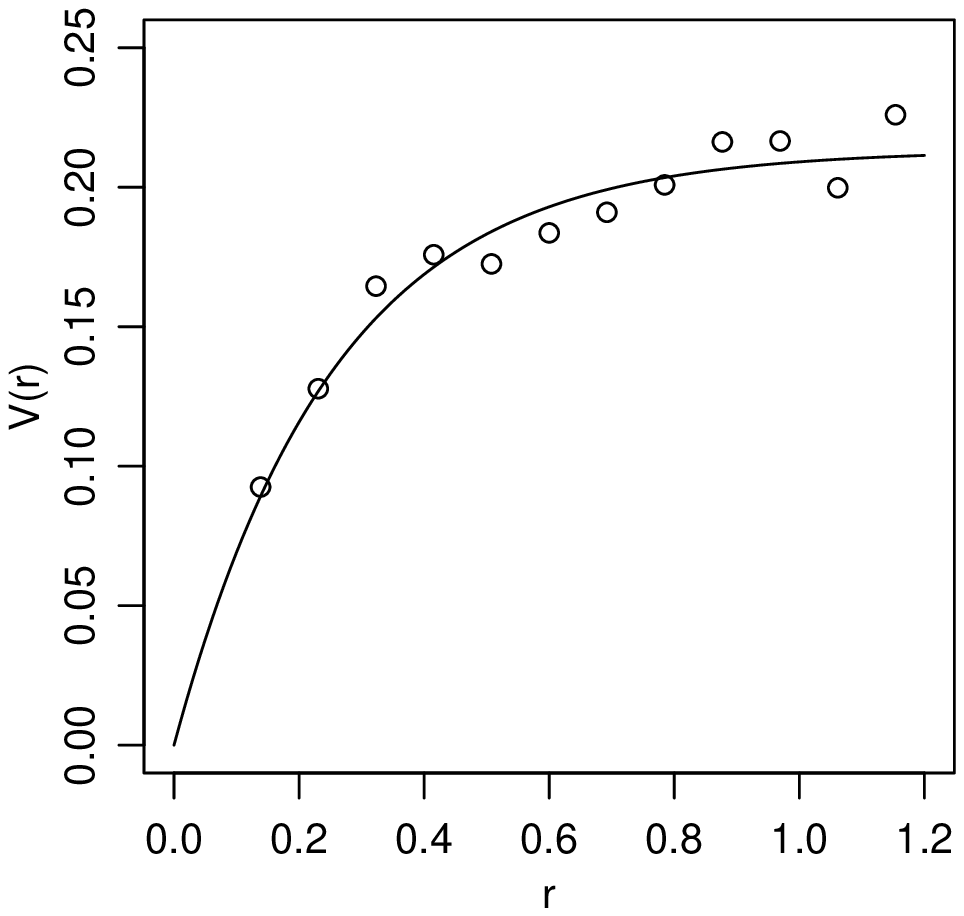, width=7cm,height=6cm} 
\end{center}
\caption{Left: Sampling locations of the lead concentration data in year 2000 and response bubble plot. 
Right: empirical semivariogram of the log--transformed lead concentration data and its least squares fit.}
\label{fi:galicia-locs}
\end{figure}

Figure \ref{fi:galicia-prior-post-int-lik-nu} (left) displays the normalized exact and approximate 
reference priors of $\vartheta$, $\pi^{\rm R}(\vartheta)$ and $\pi^{\rm AR}(\vartheta)$, 
as well as their corresponding marginal posteriors. Both posterior distributions are quite close.
Table \ref{ta:solar-exp} reports the Bayesian estimators of the model parameters and their 
corresponding $95\%$ highest posterior density (HPD) credible intervals based on both posteriors, 
showing that both inferences are essentially the same, as expected from the findings in 
Figure \ref{fi:galicia-prior-post-int-lik-nu} (left). 
The modes of the two posteriors of $\vartheta$ are almost indistinguishable, and 
the estimates of $\beta_1$ and $\sigma^2$ are also very close. 
The analyses suggest that the approximate reference posterior has slightly lighter tails than 
the exact reference posterior, and as a result, the credible intervals from the former are slightly narrower.

\begin{figure}[t!]
\begin{center}
\psfig{figure=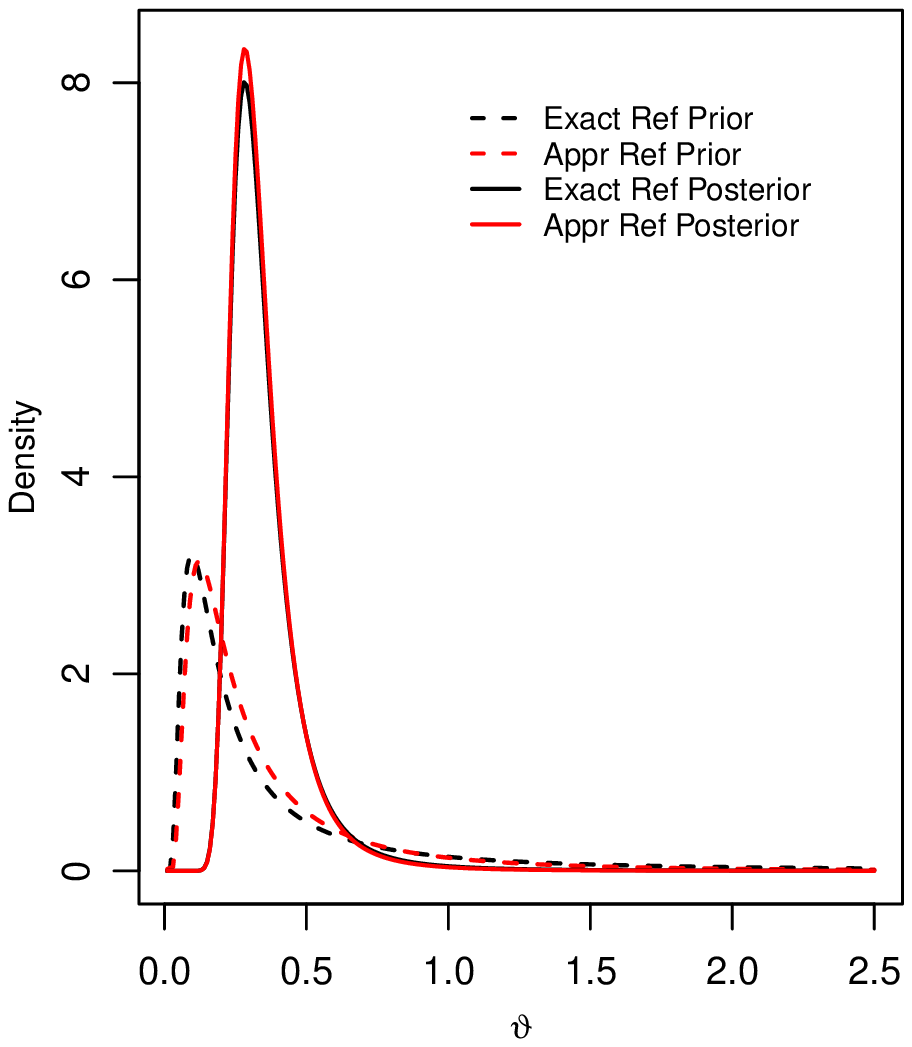, width=7cm,height=6cm} 
\psfig{figure=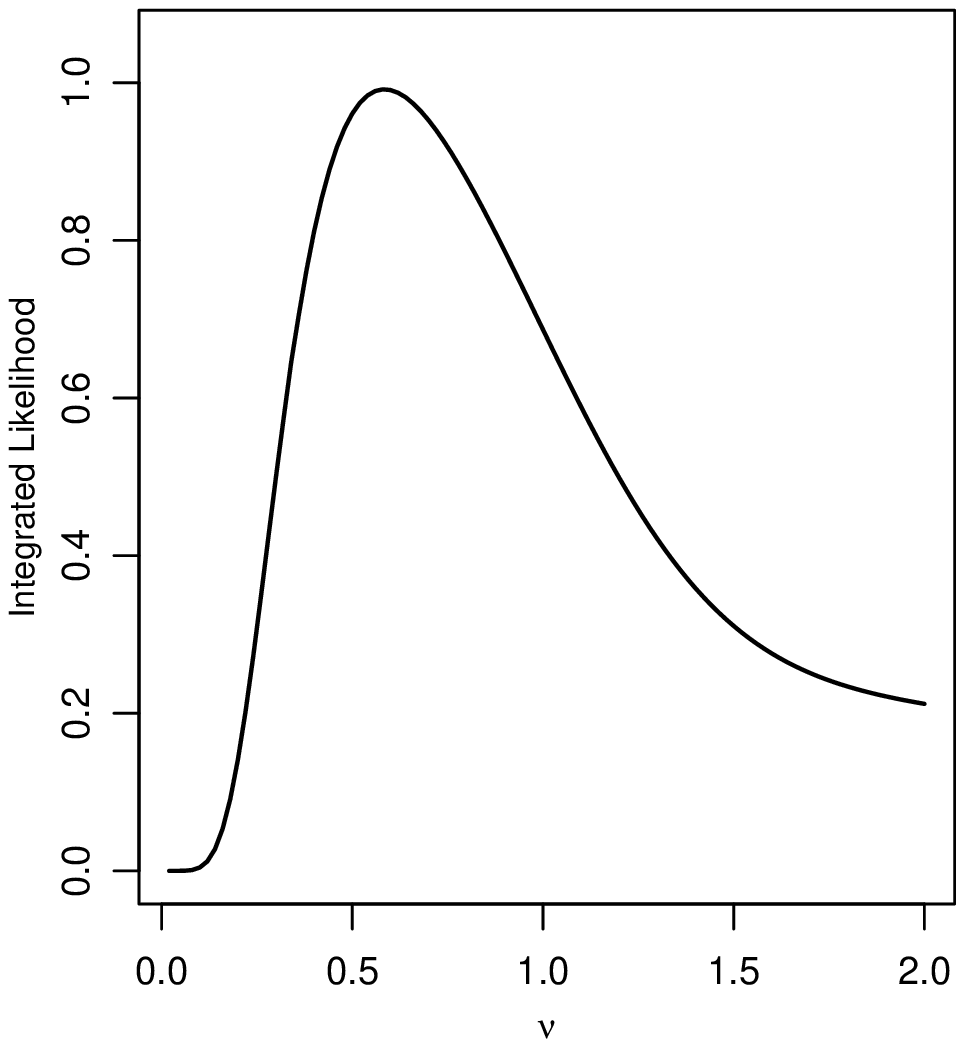, width=7cm,height=6cm}
\end{center}
\caption{Left: Densities of exact and approximate marginal reference priors and 
posteriors of $\vartheta$ for the lead concentration data.
Right: Integrated likelihood of $\nu$ for the lead concentration data.}
\label{fi:galicia-prior-post-int-lik-nu}
\end{figure}

\begin{table}[b]
\begin{center}
\caption{Parameter estimates from the lead concentration data using exact and approximate reference priors. 
The estimate $\hat{\vartheta}$ is the posterior mode,  
$\hat{\sigma}^2$ is the posterior median and $\hat{\beta}_1$ is the posterior mean. 
The 95\% credible intervals are the HPD.}
\label{ta:solar-exp}
\begin{tabular}{lccc}
\toprule
\multirow{2}{*}{Prior} & $\hat{\beta}_1$ & $\hat{\sigma}^2$ & $\hat{\vartheta}$ \\ 
 &  $(95\%~{\rm CI})$  & $(95\%~{\rm CI})$   & $(95\%~{\rm CI})$ \\
 \midrule
\multirow{2}{*}{Exact Reference Prior} 
& $0.734$ & $0.233$ & $0.283$  \\ 
 & $(0.462, 1.013)$ & $(0.133, 0.368)$ & $(0.168, 0.613)$ \\ \midrule
\multirow{2}{*}{Approximate Reference Prior} & $0.732$   & $0.228$ &  $0.283$ \\ 
 & $(0.465, 0.999)$ &  $(0.135, 0.359)$ & $(0.177, 0.603)$ \\ \bottomrule
\end{tabular}
\end{center}
\end{table}

Note that the evaluation of the exponential covariance function and its derivative w.r.t.
$\vartheta$ is devoid of Bessel functions.  
The computation time to draw $10^4$ posterior samples based on the 
approximate reference prior was about $78$ seconds, while the time to do the same task 
based on exact reference prior was $122$ seconds.
In both the exact likelihood was used so the time difference is due to prior evaluations.
For a Mat\'ern model with $\nu \neq m + 1/2$, where $m$ is a non--negative integer,
the evaluation of the covariance function and its derivative w.r.t. $\vartheta$ involve
Bessel functions, and in this case the computation time to draw a posterior sample 
of the same size jumps to $277$ seconds.

It is worth pointing out that the selection of the family of covariance functions is in general 
a difficult problem, and this is so in particular for the selection of the smoothness of 
the covariance family. 
Graphical summaries such as the one reported in 
Figure \ref{fi:galicia-locs} (right) are often used to aid in this task, but they are of limited value 
due to the lack of measurements separated by small distances. 
\cite{Berger2001} suggested choosing the smoothness by inspecting its integrated likelihood.
When the smoothness parameter is not assumed known, the (conditional) approximate reference prior 
is written as
\[
\pi^{\rm AR}(\bfbeta, \sigma^2, \vartheta ~|~\nu) = \frac{C(\nu)\pi^{\rm AR}(\vartheta ~|~\nu)}{\sigma^2} ,
\]
where $\pi^{\rm AR}(\vartheta ~|~\nu)$ is given in (\ref{eq:approx-prior-covar}), with the dependence 
on the smoothness parameter now being explicit, and
$C(\nu) \coloneqq \big( \int_{0}^{\infty} \pi^{\rm AR}(\vartheta ~|~\nu) d \vartheta\big)^{-1}$ is the
normalizing constant.
If $\bfvartheta = (\vartheta, \nu)$ denotes the correlation parameters of the Mat\'ern model,
then the integrated likelihood of $\nu$ is given by
\begin{align*}
m(\boldsymbol{z} ~|~\nu) &= \int_{{\mathbb R}^p \times (0,\infty)^2}
L(\bfbeta,\sigma^2,\vartheta,\nu ; \boldsymbol{z}) \pi^{\rm AR}(\bfbeta, \sigma^2, \vartheta ~|~\nu)	
d \bfbeta d\sigma^2 d\vartheta \\
& \propto \int_{0}^{\infty} |\boldsymbol{\Sigma}_{\bfvartheta}|^{-\frac{1}{2}} 
|\boldsymbol{X}^{\top}\boldsymbol{\Sigma}_{\bfvartheta}^{-1}\boldsymbol{X}|^{- \frac{1}{2}}
(S^{2}_{\bfvartheta})^{-\frac{n-p}{2}}
C(\nu)\pi^{\rm AR}(\vartheta ~|~\nu) d \vartheta .
\end{align*}
Note that this integrated likelihood is not well defined when the exact reference prior 
$\pi^{\rm R}(\vartheta ~|~\nu)$ is used, since $C(\nu)$ does not exist for some $\nu \geq 1$,
while it is well defined for the approximate reference prior; see Section \ref{sec:appr-ref-prior}.
The smoothness parameter can now be chosen as the value that maximizes $m(\boldsymbol{z} ~|~\nu)$.
Figure \ref{fi:galicia-prior-post-int-lik-nu} (right) displays the integrated likelihood of $\nu$ 
for the Galicia lead concentration data, showing that the choice $\nu=0.5$ was about right 
(the maximum occurs at $\nu=0.58$).

The Supplementary Materials report the results of data analysis for a (simulated) 
data set with different features from those of the above lead concentration data:
an irregular sampling design of larger size and data from a  smoother model.
The results of the exact and approximate reference analyses were also in this case practically equivalent.

\section{Conclusions}  \label{sec:conclusions}

This work has derived and studied approximate reference priors for a class of geostatistical models, 
namely for isotropic Gaussian random fields whose covariance functions depend on unknown 
variance and range parameters.
The methodology relies on a spectral approximation to the integrated likelihood of the 
covariance parameters, which produces close approximations to exact reference priors 
for a variety of sampling designs and model features.

The approximate reference priors derived in this work have a number of beneficial features 
that make them attractive for practical use.
First, they can be evaluated in a fraction of the time required to evaluate exact reference priors
because they do not involve inversion of large or ill--conditioned matrices nor 
the evaluation of special functions. 
For random fields with constant mean, the approximate reference prior has a simple matrix--free expression.
Second, for many families of correlation functions, including the widely used Mat\'ern family,
the approximate marginal reference prior for the correlation parameter is proper, which is not
always the case for models with smooth covariance function 
in the reference prior. 
This enables the use of approximate reference priors for covariance function selection 
using Bayes factors, as described in \cite{Berger2001}. 
This is a very helpful property since very few tools are available for this purpose, and 
covariance selection is often done casually.
Finally, results from simulation experiments reported in the Supplementary Materials 
show that inferences based on these approximate reference posteriors have satisfactory frequentist properties that are as good as those based on exact reference posteriors, and sometimes better than those based on purely likelihood--based inferences.

The proposed approximate reference prior depends on an auxiliary regular grid set up by the user.
Tuning this grid allows the attainment of close approximations to exact reference priors for 
a variety of sampling designs and model features. 
The numerical studies in Section \ref{sec:numerical studies} provide useful guidelines for 
the setting of the tuning constants, and a default way for their determination will be investigated in future work. 
The approximation can be computed for random field models with explicit (normalized) 
spectral density that have the general form stated in Section \ref{sec:appr-ref-prior}.
The isotropic Mat\'ern family of correlation functions was used for illustration, but the 
methodology is equally applicable for other families, such as some families in \cite{Vecchia1985}, 
\cite{Jones1993} and \cite{Laga2017}, possibly after a reparametrization and once 
some of their parameters are fixed.

It should be noted that the approximate reference priors derived here do not seem 
to converge to their exact counterparts in any strict mathematical sense.
One reason is that exact reference priors depend on the sample size $n$, while approximate reference priors 
depend on $\Delta$, $M_1$ and $M_2$ that are in principle unrelated to $n$.
Another reason is that, although it holds that 
${\rm cov}\{T_{M_1, M_2}({\bf u}_{i,j}), T_{M_1, M_2}({\bf u}_{i',j'})\} \rightarrow 
\sigma^2 K_{\vartheta}(||{\bf u}_{i,j} - {\bf u}_{i',j'}||)$ as $\min\{M_1,M_2\} \to \infty$, 
a comparable result in the spectral domain may not hold due to the aliasing effect ($\Delta > 0$ is fixed).
Nevertheless, approximate reference priors provide useful working approximations, since they share 
the main properties of exact reference priors, and can be computed much faster in situations where 
the latter cannot.

The proposed methodology could be extended to models with more complex correlation functions.
One possible extension is to approximate the reference priors derived in 
\cite{Kazianka2012} and \cite{Ren2012} for isotropic correlation functions with unknown 
range and nugget parameters, which would describe situations when geostatistical data contain 
measurement error.
Another possible extension is to non--isotropic separable correlation functions that 
depend on several range parameters, which are commonly used in the analysis of data from 
computer experiments \citep{Paulo2005}.
As pointed out by a reviewer, when the number of range parameters is large, 
the spectral approximation to the random field may not be as accurate for these models compared to the 
isotropic models, or may require a much larger $M$ to achieve a good approximation. 
For these models, \cite{Gu2019} proposed an approximation to the (joint) reference prior of  
the range parameters aimed at matching the tail behaviours of their exact counterparts. 
A benefit of the latter approximation is that it can perform input selection, 
in the sense of identifying `inert inputs'.
These extensions are currently being developed and will be reported elsewhere.

Finally, in recent years several methods have been proposed in the literature to approximate Gaussian 
likelihoods that include, but are not limited to, spectral approximations \citep{Paciorek2007}, 
composite likelihood approximations \citep{Varin2011},
low--rank approximations \citep{Heaton2018} and Vecchia approximations
\citep{Katzfuss2021}.
The combination of one of these with the approximate reference prior developed in this work 
would make it feasible to carry out default Bayesian analyses of large geostatistical data sets. 
This will be explored in future work.




\bigskip
\bigskip


\section*{Appendix: Proofs of the Main Results in Section \ref{sec:appr-ref-prior}}

\noindent To prove the main results in Section \ref{sec:appr-ref-prior}, we use the following lemmas. 

\begin{lemma}\label{th:lemma-inequality}
Let $\{x_i\}_{i = 1}^{\infty}$ and $\{y_i\}_{i = 1}^{\infty}$ be sequences of
positive real numbers. 
If $\max\limits_{i} \left\{{x_i}/{y_i} \right\} = u < \infty$, then 
$\left(\sum_{i=1}^{\infty} x_i\right)/\left(\sum_{i=1}^{\infty} y_i\right) \leq u$.
\end{lemma}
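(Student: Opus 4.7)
The plan is to prove this by a term-by-term bounding argument, which is essentially the content of the hypothesis after rewriting. Since $\max_i\{x_i/y_i\} = u$, it follows that for every $i \geq 1$ one has $x_i/y_i \leq u$, and because $y_i > 0$ this is equivalent to $x_i \leq u\, y_i$.

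Summing this inequality over all $i$ yields $\sum_{i=1}^{\infty} x_i \leq u \sum_{i=1}^{\infty} y_i$, where both sides are well-defined elements of $(0,\infty]$ because all terms are positive. I would then divide through by $\sum_{i=1}^{\infty} y_i$, which is strictly positive (indeed, each $y_i > 0$), to obtain the desired bound $\left(\sum_{i=1}^{\infty} x_i\right)/\left(\sum_{i=1}^{\infty} y_i\right) \leq u$.

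The only mild subtlety is to handle the case in which the series diverge. If $\sum_{i=1}^{\infty} y_i = \infty$, then either $\sum_{i=1}^{\infty} x_i < \infty$, in which case the ratio equals $0 \leq u$, or $\sum_{i=1}^{\infty} x_i = \infty$, in which case the bound $\sum x_i \leq u \sum y_i$ in $[0,\infty]$ still implies the ratio, interpreted as a limit of partial sums ratios, is at most $u$. Concretely one can pass to partial sums: for every finite $N$, $\sum_{i=1}^{N} x_i \leq u \sum_{i=1}^{N} y_i$, so $\left(\sum_{i=1}^{N} x_i\right)/\left(\sum_{i=1}^{N} y_i\right) \leq u$, and letting $N \to \infty$ preserves the inequality.

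There is no genuine obstacle here; the lemma is an elementary consequence of the term-wise bound $x_i \leq u y_i$, and the whole content lies in observing that the supremum of a ratio dominates the ratio of sums whenever all quantities are positive. The result will be used later in the proof of Theorem \ref{thm:asymp-behav-approx-prior-rho} to bound ratios of sums built out of evaluations of the spectral density and its derivatives at the grid $\mathcal{W}_M$.
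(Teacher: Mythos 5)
Your proof is correct and takes essentially the same route as the paper: from $\max_i\{x_i/y_i\}=u$ deduce the term-wise bound $x_i \le u\,y_i$, sum over $i$, and divide by $\sum_i y_i > 0$. The paper's proof is a one-liner doing exactly this; your extra care with the divergent case via partial sums is fine but not needed for how the lemma is used.
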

\begin{proof}
From the assumption follows that $x_i \leq u y_i$ for all $i$, so summing over all $i$
provides the result.
\end{proof}

\begin{lemma}
\label{th:lemma-Chipman}
Let $\boldsymbol{A}$ be an $n \times p$ real--valued matrix $(n > p)$ 
with rank $p-m > 0$ and $m > 0$, and 
$\boldsymbol{B}$ an $m \times p$ real--valued matrix with rank $m$ 
whose rows are linearly independent of the rows in  $\boldsymbol{A}$. 
Then 
$\boldsymbol{B}(\boldsymbol{A}^{\top} \boldsymbol{A} + 
\boldsymbol{B}^{\top} \boldsymbol{B})^{-1} \boldsymbol{B}^{\top} = 
\boldsymbol{I}_{m}$
\end{lemma}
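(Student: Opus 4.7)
The plan is to recognize this as a statement about the $(2,2)$ block of the hat matrix for a stacked regression. Set $\boldsymbol{C} \coloneqq \bigl(\begin{smallmatrix}\boldsymbol{A}\\ \boldsymbol{B}\end{smallmatrix}\bigr)$, an $(n+m)\times p$ matrix. First I would verify that the hypotheses (rank $\boldsymbol{A} = p-m$, rank $\boldsymbol{B} = m$, and rows of $\boldsymbol{B}$ linearly independent of those of $\boldsymbol{A}$) imply $\mathrm{rank}(\boldsymbol{C}) = p$, so that $\boldsymbol{C}^{\top}\boldsymbol{C} = \boldsymbol{A}^{\top}\boldsymbol{A} + \boldsymbol{B}^{\top}\boldsymbol{B}$ is positive definite and invertible. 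Then $\boldsymbol{H} \coloneqq \boldsymbol{C}(\boldsymbol{C}^{\top}\boldsymbol{C})^{-1}\boldsymbol{C}^{\top}$ is the orthogonal projection of $\mathbb{R}^{n+m}$ onto $\mathrm{col}(\boldsymbol{C})$, and its block decomposition exhibits $\boldsymbol{B}(\boldsymbol{A}^{\top}\boldsymbol{A} + \boldsymbol{B}^{\top}\boldsymbol{B})^{-1}\boldsymbol{B}^{\top}$ as its lower-right $m \times m$ block. Proving the lemma thus reduces to showing this block equals $\boldsymbol{I}_m$.

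The key reduction is this: if I can show that the canonical vectors $\bigl(\begin{smallmatrix}\boldsymbol{0}_n\\ \boldsymbol{e}_i\end{smallmatrix}\bigr)$, $i=1,\dots,m$, all lie in $\mathrm{col}(\boldsymbol{C})$, then $\boldsymbol{H}$ fixes each of them, and reading off the last $m$ coordinates of the identity $\boldsymbol{H}\bigl(\begin{smallmatrix}\boldsymbol{0}\\ \boldsymbol{e}_i\end{smallmatrix}\bigr) = \bigl(\begin{smallmatrix}\boldsymbol{0}\\ \boldsymbol{e}_i\end{smallmatrix}\bigr)$ immediately gives $\boldsymbol{B}(\boldsymbol{A}^{\top}\boldsymbol{A}+\boldsymbol{B}^{\top}\boldsymbol{B})^{-1}\boldsymbol{B}^{\top}\boldsymbol{e}_i = \boldsymbol{e}_i$, as desired. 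So it suffices to exhibit, for each $i$, a vector $\boldsymbol{v}_i \in \mathbb{R}^p$ with $\boldsymbol{A}\boldsymbol{v}_i = \boldsymbol{0}$ and $\boldsymbol{B}\boldsymbol{v}_i = \boldsymbol{e}_i$.

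For this, I would argue that the restriction $\boldsymbol{B}|_{\ker(\boldsymbol{A})} : \ker(\boldsymbol{A}) \to \mathbb{R}^m$ is a linear isomorphism. The domain has dimension $p - \mathrm{rank}(\boldsymbol{A}) = m$, matching the codomain, so it is enough to show injectivity. But if $\boldsymbol{v} \in \ker(\boldsymbol{A})$ satisfies $\boldsymbol{B}\boldsymbol{v} = \boldsymbol{0}$, then $\boldsymbol{C}\boldsymbol{v} = \boldsymbol{0}$, which forces $\boldsymbol{v} = \boldsymbol{0}$ by the full column rank of $\boldsymbol{C}$. Hence $\boldsymbol{B}|_{\ker(\boldsymbol{A})}$ is bijective and the required preimages $\boldsymbol{v}_i$ exist.

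The only genuine obstacle is pinning down the precise meaning of ``rows of $\boldsymbol{B}$ linearly independent of the rows in $\boldsymbol{A}$'' and showing it indeed yields $\mathrm{rank}(\boldsymbol{C}) = p$; once that is clarified, everything else is a transparent projection argument. All remaining steps — the block decomposition of $\boldsymbol{H}$, the dimension count for $\ker(\boldsymbol{A})$, and the injectivity check — are routine.
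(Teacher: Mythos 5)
Your proof is correct. The paper itself does not prove this lemma at all --- its ``proof'' is a pointer to \citet[Theorem 1.1]{Chipman1964} and \citet[Lemma 1.1]{Mohammadi2016} --- so you have supplied a self-contained argument where the paper only cites. The projection route is clean: with $\boldsymbol{C} = \bigl(\begin{smallmatrix}\boldsymbol{A}\\ \boldsymbol{B}\end{smallmatrix}\bigr)$ and $\boldsymbol{M} = \boldsymbol{C}^{\top}\boldsymbol{C}$, the hypothesis that the rows of $\boldsymbol{B}$ are linearly independent of those of $\boldsymbol{A}$ is exactly the statement that the row space of $\boldsymbol{C}$ has dimension $(p-m)+m = p$, so $\boldsymbol{M}$ is positive definite; the dimension count $\dim\ker(\boldsymbol{A}) = m$ together with injectivity of $\boldsymbol{B}$ on $\ker(\boldsymbol{A})$ (forced by $\ker(\boldsymbol{C}) = \{\boldsymbol{0}\}$) produces the preimages $\boldsymbol{v}_i$; and reading off the last $m$ coordinates of $\boldsymbol{H}\bigl(\begin{smallmatrix}\boldsymbol{0}\\ \boldsymbol{e}_i\end{smallmatrix}\bigr) = \bigl(\begin{smallmatrix}\boldsymbol{0}\\ \boldsymbol{e}_i\end{smallmatrix}\bigr)$ gives the identity. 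As a bonus, the first $n$ coordinates of the same equation give $\boldsymbol{A}\boldsymbol{M}^{-1}\boldsymbol{B}^{\top} = \boldsymbol{0}$, which is the companion fact in Chipman's treatment. Nothing in your outline would fail; the only item genuinely left to the reader is the interpretation of the linear-independence hypothesis, and the reading you propose is the standard (and intended) one.
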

\begin{proof}
See \citet[Theorem 1.1]{Chipman1964}
or \citet[Lemma 1.1]{Mohammadi2016}. 
\end{proof}

\begin{lemma}
\label{th:lemma-hatmatrix}
Let $\boldsymbol{\Lambda}$ be a diagonal matrix with positive diagonal entries,
$\boldsymbol{X}$ an $n \times p$ real--valued matrix with rank $p$ and $n > p$,
and $\boldsymbol{P} = \boldsymbol{X} 
(\boldsymbol{X}^{\top} \boldsymbol{\Lambda}^{-1} \boldsymbol{X})^{-1}
\boldsymbol{X}^{\top} \boldsymbol{\Lambda}^{-1}$
a projection matrix.	
If $\boldsymbol{P}_{ij}$ and $\boldsymbol{X}_{ij}$ are the $(i, j)^{th}$ entries of 
$\boldsymbol{P}$ and $\boldsymbol{X}$, respectively, then 
 
\medskip
 
(1) $\boldsymbol{P}_{ij} \boldsymbol{P}_{ji} \geq 0$
for all $i, j  = 1, \ldots, n$. 

(2) The diagonal elements of $\boldsymbol{P}$ 
satisfies $0 \leq \boldsymbol{P}_{ii} \leq 1$
for all $i = 1, \ldots, n$. 

(3) Let $\boldsymbol{X}_{(-1)}$ denote the matrix $\boldsymbol{X}$ with its first row removed and
$\boldsymbol{0}_{n-1}$ the zero column vector of length $n-1$.
If $\boldsymbol{0}_{n-1}$ is a column of $\boldsymbol{X}_{(-1)}$ and $\boldsymbol{X}_{11} \neq 0$, 
then $\boldsymbol{P}_{11} = 1$. 
\end{lemma}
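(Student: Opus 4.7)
The plan is to reduce parts (1) and (2) to well-known properties of a standard symmetric orthogonal projection matrix, and to handle (3) directly from the defining identity $\boldsymbol{P}\boldsymbol{X}=\boldsymbol{X}$. Define $\boldsymbol{Y}\coloneqq\boldsymbol{\Lambda}^{-1/2}\boldsymbol{X}$, which has full column rank $p$ since $\boldsymbol{\Lambda}$ is nonsingular and $\boldsymbol{X}$ has rank $p$, and set $\boldsymbol{H}\coloneqq\boldsymbol{Y}(\boldsymbol{Y}^{\top}\boldsymbol{Y})^{-1}\boldsymbol{Y}^{\top}$, the usual hat matrix associated with $\boldsymbol{Y}$. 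A one-line computation yields the similarity
\[
\boldsymbol{P}\;=\;\boldsymbol{\Lambda}^{1/2}\boldsymbol{H}\boldsymbol{\Lambda}^{-1/2},
\]
so that if $\lambda_1,\ldots,\lambda_n>0$ are the diagonal entries of $\boldsymbol{\Lambda}$, then $\boldsymbol{P}_{ij}=(\lambda_i/\lambda_j)^{1/2}\boldsymbol{H}_{ij}$. In particular $\boldsymbol{P}_{ii}=\boldsymbol{H}_{ii}$ and $\boldsymbol{P}_{ij}\boldsymbol{P}_{ji}=\boldsymbol{H}_{ij}\boldsymbol{H}_{ji}=\boldsymbol{H}_{ij}^{2}$ by symmetry of $\boldsymbol{H}$.

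From here parts (1) and (2) are immediate. Part (1) follows because $\boldsymbol{P}_{ij}\boldsymbol{P}_{ji}=\boldsymbol{H}_{ij}^{2}\geq 0$. For part (2), the identity $\boldsymbol{H}=\boldsymbol{H}^{2}$ together with the symmetry of $\boldsymbol{H}$ gives $\boldsymbol{H}_{ii}=\sum_{k}\boldsymbol{H}_{ik}^{2}\geq \boldsymbol{H}_{ii}^{2}\geq 0$, which forces $0\leq\boldsymbol{H}_{ii}\leq 1$ and hence $0\leq\boldsymbol{P}_{ii}\leq 1$.

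For part (3), the spectral reduction is not needed; the point is that $\boldsymbol{P}\boldsymbol{X}=\boldsymbol{X}$ holds by construction, so $\boldsymbol{P}$ fixes every vector in the column space of $\boldsymbol{X}$. Under the hypothesis, $\boldsymbol{X}$ has a column whose entries vanish outside the first coordinate and whose first coordinate equals $\boldsymbol{X}_{11}\neq 0$, i.e., that column equals $\boldsymbol{X}_{11}\boldsymbol{e}_1$. Consequently $\boldsymbol{e}_1$ lies in the column span of $\boldsymbol{X}$, so $\boldsymbol{P}\boldsymbol{e}_1=\boldsymbol{e}_1$, and reading off the first coordinate gives $\boldsymbol{P}_{11}=1$. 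The main (and very mild) obstacle is a notational one in (3): correctly identifying that the combined hypotheses ``$\boldsymbol{0}_{n-1}$ is a column of $\boldsymbol{X}_{(-1)}$'' and ``$\boldsymbol{X}_{11}\neq 0$'' are precisely what is needed to place $\boldsymbol{e}_1$ in the range of $\boldsymbol{X}$; once this is seen, the proof collapses to a one-line invocation of $\boldsymbol{P}\boldsymbol{X}=\boldsymbol{X}$.
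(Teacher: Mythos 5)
Your proof is correct. For parts (1) and (2) your route through the similarity $\boldsymbol{P}=\boldsymbol{\Lambda}^{1/2}\boldsymbol{H}\boldsymbol{\Lambda}^{-1/2}$ with the symmetric hat matrix $\boldsymbol{H}$ of $\boldsymbol{\Lambda}^{-1/2}\boldsymbol{X}$ is essentially equivalent to the paper's, which instead argues directly on $\boldsymbol{P}$: it gets (1) from the symmetry of $\boldsymbol{X}(\boldsymbol{X}^{\top}\boldsymbol{\Lambda}^{-1}\boldsymbol{X})^{-1}\boldsymbol{X}^{\top}$ combined with the positive diagonal of $\boldsymbol{\Lambda}^{-1}$, and (2) from positive definiteness plus the idempotency identity $\boldsymbol{P}_{ii}(1-\boldsymbol{P}_{ii})\geq 0$; your symmetrization unifies the two and is marginally cleaner. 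The genuine divergence is in part (3): the paper also passes to $\boldsymbol{X}^{*}=\boldsymbol{\Lambda}^{-1/2}\boldsymbol{X}$, verifies ${\rm rank}(\boldsymbol{X}^{*})=p$ and ${\rm rank}(\boldsymbol{X}^{*}_{(-1)})=p-1$, and then invokes Chipman's theorem (the paper's Lemma~\ref{th:lemma-Chipman}) to conclude $\boldsymbol{x}^{*}_1(\boldsymbol{X}^{*\top}_{(-1)}\boldsymbol{X}^{*}_{(-1)}+\boldsymbol{x}^{*\top}_1\boldsymbol{x}^{*}_1)^{-1}\boldsymbol{x}^{*\top}_1=1$, whereas you observe that the hypotheses put $\boldsymbol{e}_1$ in the column space of $\boldsymbol{X}$ and that $\boldsymbol{P}\boldsymbol{X}=\boldsymbol{X}$ forces $\boldsymbol{P}\boldsymbol{e}_1=\boldsymbol{e}_1$, hence $\boldsymbol{P}_{11}=1$. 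Your argument is more elementary and makes the auxiliary Chipman lemma unnecessary for this purpose; the paper's version buys nothing extra here beyond citing an existing result. The only point to be careful about is the one you already flag: the statement's hypothesis ``$\boldsymbol{0}_{n-1}$ is a column of $\boldsymbol{X}_{(-1)}$ and $\boldsymbol{X}_{11}\neq 0$'' should be read as saying the column in question is the first column of $\boldsymbol{X}$ (this is exactly how the lemma is used in the proof of Theorem~\ref{thm:asymp-behav-approx-prior-rho}, where the first column of $\boldsymbol{X}_1$ is $(\sqrt{M},\,\boldsymbol{0}_{M-1}^{\top})^{\top}$), and under that reading your identification of the column with $\boldsymbol{X}_{11}\boldsymbol{e}_1$ is exactly right.
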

\begin{proof}
(1) This follows by noting that 
$\boldsymbol{X} (\boldsymbol{X}^{\top} \boldsymbol{\Lambda}^{-1} \boldsymbol{X})^{-1}\boldsymbol{X}^{\top}$ 
is symmetric and $\boldsymbol{\Lambda}^{-1}$ diagonal matrix with positive diagonal entries. 
(2) $\boldsymbol{P}_{ii} \geq 0$ follows from the fact that
$(\boldsymbol{X}^{\top} \boldsymbol{\Lambda}^{-1} \boldsymbol{X})^{-1}$ is positive definite.
And from $\boldsymbol{P}^2 = \boldsymbol{P}$ and (1) we have that $\boldsymbol{P}_{ii}(1-\boldsymbol{P}_{ii}) \geq 0$,
so the result follows. 
(3) Let $\boldsymbol{X}^{*} = \label{derivations} \boldsymbol{\Lambda}^{-\frac{1}{2}} \boldsymbol{X}$ and 
$\boldsymbol{P}^{*} = \boldsymbol{X}^{*} 
(\boldsymbol{X}^{*\top} \boldsymbol{X}^{*})^{-1} \boldsymbol{X}^{*\top}$, so it holds that
$\boldsymbol{P}_{11} = \boldsymbol{P}^{*}_{11}$. 
From the inequalities 
${\rm rank}(\boldsymbol{X}^{*}) \leq \min \{{\rm rank}(\boldsymbol{\Lambda}^{-\frac{1}{2}}), 
{\rm rank}(\boldsymbol{X})\} \leq p$ and 
${\rm rank}(\boldsymbol{X}^{*}) \geq {\rm rank}(\boldsymbol{\Lambda}^{-\frac{1}{2}}) + 
{\rm rank}(\boldsymbol{X}) - n \geq p$ follow that 
${\rm rank}(\boldsymbol{X}^{*}) = p$.  
When $\boldsymbol{0}_{n-1}$ is a column of $\boldsymbol{X}_{(-1)}$, 
it is also a column of $\boldsymbol{X}^{*}_{(-1)}$, 
so ${\rm rank}(\boldsymbol{X}^{*}_{(-1)}) = p-1$.  
Let $\boldsymbol{x}^{*}_{1}$ be the first row of $\boldsymbol{X}^{*}$, 
which is clearly linearly independent from the rows of $\boldsymbol{X}^{*}_{(-1)}$. 
Then by Lemma \ref{th:lemma-Chipman}, 
$\boldsymbol{P}_{11} = \boldsymbol{P}_{11}^{*} = 
\boldsymbol{x}^{*}_1 (\boldsymbol{X}_{(-1)}^{*\top} \boldsymbol{X}_{(-1)}^{*} + 
\boldsymbol{x}^{*\top}_1 \boldsymbol{x}^{*}_1)^{-1} \boldsymbol{x}^{*\top}_1 = 1$.
\end{proof}

\begin{proof}[\textbf{Proof of Corollary \ref{cor:approx-prior-rho-nocov}}]
It follows from (\ref{orthogonality}) and direct calculation that
\begin{align*}
\tilde{\boldsymbol{Q}}_{\vartheta} &=
\frac{1}{c_{\Delta}} {\rm diag}\Big(
	\big(1/\tilde{f}^{\Delta}_{\vartheta}(\bfomega_{m_1,m_2}) : (m_1,m_2) \in I_{\rm C}\big)^{\top}, \
\big(1/\tilde{f}^{\Delta}_{\vartheta}(\bfomega_{m_1,m_2}) : (m_1,m_2) \in I\big)^{\top}, \\
& \hspace{6.5cm} 
\big(1/\tilde{f}^{\Delta}_{\vartheta}(\bfomega_{m_1,m_2}) : (m_1,m_2) \in I\big)^{\top} \Big) \\
& \hspace{2cm} - \
\frac{\sqrt{M}}{c_{\Delta}} \Big(1/\tilde{f}^{\Delta}_{\vartheta}(\bfomega_{0,0}), \; {\bf 0}_{M-1}^{\top}\Big)^{\top}
\!  \cdot \frac{c_{\Delta} \tilde{f}^{\Delta}_{\vartheta}(\bfomega_{0,0})}{M} \cdot
\frac{\sqrt{M}}{c_{\Delta}} \Big(1/\tilde{f}^{\Delta}_{\vartheta}(\bfomega_{0,0}), \; {\bf 0}_{M-1}^{\top}\Big) \\
&= \frac{1}{c_{\Delta}} {\rm diag}\Big(0,
	\big(1/\tilde{f}^{\Delta}_{\vartheta}(\bfomega_{m_1,m_2}) : (m_1,m_2) \in I_{\rm C} - \{(0,0)^\top\}\big)^{\top}, \\
&  \hspace{2cm} \big(1/\tilde{f}^{\Delta}_{\vartheta}(\bfomega_{m_1,m_2}) : (m_1,m_2) \in I\big)^{\top}, \
\big(1/\tilde{f}^{\Delta}_{\vartheta}(\bfomega_{m_1,m_2}) : (m_1,m_2) \in I\big)^{\top} \Big) ,
\end{align*}
and hence
\begin{align*}
\left(\frac{\partial}{\partial \vartheta} \tilde{\boldsymbol{\Lambda}}_{\vartheta}\right) 
\tilde{\boldsymbol{Q}}_{\vartheta} &= 
{\rm diag}\Big(0,  
\Big(\frac{\partial}{\partial \vartheta} \log \tilde{f}^{\Delta}_{\vartheta}(\bfomega_{m_1,m_2}) : 
	(m_1,m_2) \in I_{\rm C} - \{(0,0)^\top\}\Big)^{\top}, \\
& \quad
\Big(\frac{\partial}{\partial \vartheta} \log \tilde{f}^{\Delta}_{\vartheta}(\bfomega_{m_1,m_2}) : 
(m_1,m_2) \in I\Big)^{\top}, \
\Big(\frac{\partial}{\partial \vartheta} \log \tilde{f}^{\Delta}_{\vartheta}(\bfomega_{m_1,m_2}) : 
(m_1,m_2) \in I\Big)^{\top} \Big) .
\end{align*}
Then the first term in (\ref{eq:approx-prior-covar}) becomes
\[
{\rm tr}\left[\left\{ \left(\frac{\partial}{\partial \vartheta} \tilde{\boldsymbol{\Lambda}}_{\vartheta}\right) 
\tilde{\boldsymbol{Q}}_{\vartheta}\right\}^2 \right] = 
\sum_{(m_1,m_2)}^{} 
\left(\frac{\partial }{\partial \vartheta} \log \tilde{f}^{\Delta}_{\vartheta}(\bfomega_{m_1,m_2})\right)^2 ,
\]
where $(m_1,m_2) \in (I_{\rm C} \cup I) - \{(0,0)^\top\}$ and each frequency corresponding to an index in $I$
appears twice in the sum. 
The second term in (\ref{eq:approx-prior-covar}) is computed similarly and the result follows.
\end{proof}

\begin{proof}[\textbf{Proof of Theorem \ref{thm:asymp-behav-approx-prior-rho}}]
(a) 
We first show that $\pi^{\rm AR}(\vartheta)$ is integrable on $(0,\infty)$
when the mean function of $Z(\cdot)$ is constant. 
From (\ref{eq:approx-prior-rho-nocov}) we have in this case that 
$(\pi^{\rm AR}(\vartheta))^2$ is proportional to the sample variance of 
$\big\{\frac{\partial}{\partial \vartheta} \log \tilde{f}^{\Delta}_{\vartheta}(\bfomega_j) \big\}_{j = 1}^{M-1}$,
which can be alternatively written as
\begin{equation}
\label{eq:approx-prior-rho2}
(\pi^{\rm AR}(\vartheta))^2 \ \propto \
\sum\limits_{j = 1}^{M-1}\sum\limits_{j' =1}^{M-1}
\left(
\frac{\partial}{\partial \vartheta} \log \tilde{f}^{\Delta}_{\vartheta}(\bfomega_j)
- \frac{\partial}{\partial \vartheta}  \log \tilde{f}^{\Delta}_{\vartheta}(\bfomega_{j'})\right)^{2} .
\end{equation}
From (\ref{eq:spec-den-alias}), direct calculation shows that for $j=1,\ldots,M-1$
\begin{equation*}
\frac{\partial}{\partial \vartheta} \log \tilde{f}^{\Delta}_{\vartheta}(\bfomega_j) 
= \frac{h'_2(\vartheta)}{h_2(\vartheta)} - a \cdot u'(\vartheta) 
\left(\frac{\sum_{\boldsymbol{l}\in {{\mathcal T}_2}} 
\frac{h_1(\bfomega_{j(l_1, l_2)})}
{\big(\|\bfomega_{j(l_1, l_2)}\|^2 \; + \; u(\vartheta) \big)^{a + 1}}}
{\sum_{\boldsymbol{l}\in {{\mathcal T}_2}} \frac{h_1(\bfomega_{j(l_1, l_2)})}
{\big(\|\bfomega_{j(l_1, l_2)}\|^2 \; + \; u(\vartheta) \big)^{a}}} \right) ,
\label{der-log-spden-ali}
\end{equation*}
where $\boldsymbol{l} = (l_1, l_2)$,  
$\mathcal{T}_2 = [-T, T]^2 \cap {\mathbb Z}^2$ and
$\bfomega_{j(l_1, l_2)} \coloneqq \bfomega_{j} + \frac{2\pi}{\Delta}\boldsymbol{l}$. 
Then for any fixed $\bfomega_j$,
$\frac{\partial}{\partial \vartheta} \log \tilde{f}^{\Delta}_{\vartheta}(\bfomega_j)$
is a continuous function in $(0,\infty)$.
After some expansion and simplification we have that
\begin{align}
& \quad\quad \Big|\frac{\partial}{\partial \vartheta}  \log \tilde{f}^{\Delta}_{\vartheta}(\bfomega_j)
- \frac{\partial}{\partial \vartheta}  \log \tilde{f}^{\Delta}_{\vartheta}(\bfomega_{j'}) \Big| \nonumber \\
& \ \leq \  a \cdot u'(\vartheta) 
\frac{\sum_{\boldsymbol{k} \in {{\mathcal T}_4}} 
\frac{h_1(\bfomega_{j(k_1, k_2)}) h_1(\bfomega_{j'(k_3, k_4)})
\big| \|\bfomega_{j'(k_3, k_4)}\|^2 - \|\bfomega_{j(k_1, k_2)}\|^2 \big|}
{\big(g(\vartheta, \bfomega_{j(k_1, k_2)}, \bfomega_{j'(k_3, k_4)})\big)^{a + 1}}}
{\sum\limits_{\boldsymbol{k} \in {{\mathcal T}_4}} 
\frac{h_1(\bfomega_{j(k_1, k_2)}) h_1(\bfomega_{j'(k_3, k_4)})}
{\big(g(\vartheta, \bfomega_{j(k_1, k_2)}, \bfomega_{j'(k_3, k_4)})\big)^{a}}}
\nonumber \\
& \ \leq \ C \cdot a \cdot u'(\vartheta) 
\frac{\sum_{\boldsymbol{k} \in {{\mathcal T}_4}} 
\frac{h_1(\bfomega_{j(k_1, k_2)}) h_1(\bfomega_{j'(k_3, k_4)})
}{\big(g(\vartheta, \bfomega_{j(k_1, k_2)}, \bfomega_{j'(k_3, k_4)})\big)^{a + 1}}}
{\sum\limits_{\boldsymbol{k} \in {{\mathcal T}_4}} 
\frac{h_1(\bfomega_{j(k_1, k_2)}) h_1(\bfomega_{j'(k_3, k_4)})}
{\big(g(\vartheta, \bfomega_{j(k_1, k_2)}, \bfomega_{j'(k_3, k_4)})\big)^{a}}},
\label{eq:logfj-logfk-rho}
\end{align}
where $\boldsymbol{k} = (k_1, k_2, k_3, k_4)$, 
$\mathcal{T}_4 = [-T, T]^4 \cap {\mathbb Z}^4$, $C > 0$ is a constant and 
\begin{equation*}
\label{eq:g-omega}
g(\vartheta, \bfomega_{j(k_1, k_2)}, \bfomega_{j'(k_3, k_4)}) = 
\left(\|\bfomega_{j(k_1, k_2)}\|^2 + u(\vartheta) \right)
\left(\|\bfomega_{j'(k_3, k_4)}\|^2 +  u(\vartheta) \right).
\end{equation*}
To bound (\ref{eq:logfj-logfk-rho}) 
we note that the maximum of the ratios of the general
terms in the numerator and denominator sums is 
\begin{align*}
\max_{\boldsymbol{k} \in \mathcal{T}_4}
\frac{1}{g(\vartheta, \bfomega_{j(k_1, k_2)}, \bfomega_{j'(k_3, k_4)})} 
& \leq \frac{1} {\left(\|\bfomega_{j}\|^2 + u(\vartheta) \right) 
\left(\|\bfomega_{j'}\|^2 +  u(\vartheta) \right)} \\
& \leq \min\left\{\frac{1}{u^2(\vartheta)}, 
\frac{1}{\|\bfomega_{j}\|^2 \|\bfomega_{j'}\|^2}\right\} ,
\end{align*}
where the first inequality holds because for 
any $(k_1,k_2) \in {\mathcal T}_{2}$,
$\|\bfomega_{j(k_1,k_2)}\| \geq \|\bfomega_{j(0,0)}\| = \|\bfomega_{j}\|$, and the 
second inequality holds because $\|\bfomega_{j}\|^2 > 0$ 
for $j = 1, \ldots, M-1$ and $u(\vartheta) > 0$. 
By Lemma \ref{th:lemma-inequality} we have
\[
\left( \frac{\partial}{\partial \vartheta}  \log \tilde{f}^{\Delta}_{\vartheta}(\bfomega_j)
- \frac{\partial}{\partial \vartheta}  \log \tilde{f}^{\Delta}_{\vartheta}(\bfomega_{j'}) \right)^2
\leq C_{jj'} \min \left\{ (u'(\vartheta))^2, \frac{(u'(\vartheta))^2}{u^4(\vartheta)}\right\},
\]
for some $C_{jj'} > 0$, and replacing this in (\ref{eq:approx-prior-rho2}) we obtain 
(\ref{asymptotic-behaviour=pi=vartheta}).

Now we use the above to prove the result for the case of non--constant mean functions.
Recall 
$\tilde{\boldsymbol{X}}$ is the $M \times p$ matrix whose entries involve the covariates 
measured at the locations in $\mathcal{U}_M$. 
Since $\boldsymbol{H}_1^{\top} {\bf 1}_{M} = (M, \; {\bf 0}_{M-1}^{\top})^{\top}$, 
$\boldsymbol{H}_1^{\top}\boldsymbol{H}_1$ is an $M \times M$ diagonal matrix with 
its first diagonal element $M$, and clearly 
$\boldsymbol{L}_1^{\top}{\bf 1}_{M} = (\sqrt{M}, \; {\bf 0}_{M-1}^{\top})^{\top}$. 
Since the first column of $\tilde{\boldsymbol{X}}$ is ${\bf 1}_{M}$, the first column of 
$\boldsymbol{X}_1$ is $(\sqrt{M}, \; {\bf 0}_{M-1}^{\top})^{\top}$. 
Let $\boldsymbol{P}_{\vartheta} = \boldsymbol{X}_1 (\boldsymbol{X}_1^{\top}  
\tilde{\boldsymbol{\Lambda}}_{\vartheta}^{-1} \boldsymbol{X}_1)^{-1}\boldsymbol{X}_1^{\top} 
\tilde{\boldsymbol{\Lambda}}^{-1}_{\vartheta}$, 
which is the projection matrix under a weighted least square  setting, and let $\boldsymbol{P}_{ij}$ denote 
the $(i, j)^{th}$ element in $\boldsymbol{P}_{\vartheta}$, where the dependence on $\vartheta$ is suppressed 
to simplify the notation. 
Recall that $\tilde{\boldsymbol{Q}}_{\vartheta} = 
\tilde{\boldsymbol{\Lambda}}_{\vartheta}^{-1} (\boldsymbol{I}_{M} - \boldsymbol{P}_{\vartheta})$, and let  
\[
\boldsymbol{\Psi}_{\vartheta} :=  
\left(\frac{\partial}{\partial \vartheta} \tilde{\boldsymbol{\Lambda}}_{\vartheta}\right) \tilde{\boldsymbol{Q}}_{\vartheta} 
= {\rm diag}(\boldsymbol{\gamma}_{\vartheta}) (\boldsymbol{I}_{M} - \boldsymbol{P}_{\vartheta}), 
\]
where $\boldsymbol{\gamma}_{\vartheta}$ is a length $M$ 
vector of the diagonal elements in   
$(\partial/\partial \vartheta)\log \tilde{\boldsymbol{\Lambda}}_{\vartheta}$,  
and $\gamma_i$ is the $i^{th}$ component in $\boldsymbol{\gamma}_{\vartheta}$. Then 
\[
{\rm tr}(\boldsymbol{\Psi}_{\vartheta}) = \sum_{i = 1}^{M} \gamma_i (1-\boldsymbol{P}_{ii}) ,
\]
and
\begin{align*}
{\rm tr}(\boldsymbol{\Psi}_{\vartheta}^2) &= {\bf 1}_{M}^{\top}\left(\boldsymbol{\Psi}_{\vartheta} \circ \boldsymbol{\Psi}_{\vartheta}^{\top}\right){\bf 1}_{M}  
= \sum_{i = 1}^{M} \gamma_i^2 (1-\boldsymbol{P}_{ii})^2 + \sum_{i=1}^{M}\sum_{j \neq i}^{M} \gamma_i \gamma_j \boldsymbol{P}_{ij} \boldsymbol{P}_{ji} \\
& \leq \sum_{i = 1}^{M} \gamma_i^2 \bigg[ (1-\boldsymbol{P}_{ii})^2 + \sum_{j \neq i}^{M}  \boldsymbol{P}_{ij} \boldsymbol{P}_{ji} \bigg] 
 = \sum_{i = 1}^{M} \gamma_i^2  (1-\boldsymbol{P}_{ii}), 
\end{align*}
where $\circ$ is the Hadamard product. 
The inequality holds because 
$\sum_{i=1}^{M}\sum_{j \neq i}^{M} (\gamma_i^2 - \gamma_i \gamma_j )  \boldsymbol{P}_{ij} \boldsymbol{P}_{ji} =
\sum_{i = 1}^{M} \sum_{j > i}^{M} (\gamma_i - \gamma_j)^2 \boldsymbol{P}_{ij} \boldsymbol{P}_{ji}$, which 
is non--negative because of Lemma \ref{th:lemma-hatmatrix}(1), and
the last equality follows from the fact that $\boldsymbol{P}_{\vartheta}$ is idempotent. 
Then the approximate reference prior $\pi^{\rm AR}(\vartheta)$ for non--constant mean case satisfies 
\[
\big(\pi^{\rm AR}(\vartheta)\big)^2 \propto {\rm tr}(\boldsymbol{\Psi}_{\vartheta}^2) 
- \frac{{\rm tr}^2(\boldsymbol{\Psi}_{\vartheta})}{M-p} 
\ \leq \ \sum_{i = 1}^{M} \gamma_i^2  (1-\boldsymbol{P}_{ii}) 
- \frac{\Big(\sum_{i = 1}^{M} \gamma_i (1-\boldsymbol{P}_{ii})\Big)^2}{M-p} .
\]
Let $\pi^{0,\rm AR}(\vartheta) :=  
\left[\sum_{i = 2}^{M} \gamma_{i}^{2} - \frac{\left( \sum_{i = 2}^{M} \gamma_i \right)^2}{M-1}\right]^{\frac{1}{2}}$. 
Note that this expression is proportional to (\ref{eq:approx-prior-rho-nocov}), since it does not involve
$\gamma_{1} = (\partial/\partial \vartheta)\log \tilde{f}_{\vartheta}(\bfomega_{0,0})$, so
from the first part of the proof we have that $\pi^{0,\rm AR}(\vartheta)$ is integrable in $(0,\infty)$.
We now proceed to show that
$\pi^{\rm AR}(\vartheta) \leq C \pi^{0,\rm AR}(\vartheta)$ for some positive constant $C$. 
According to Lemma \ref{th:lemma-hatmatrix}(3),
$\boldsymbol{P}_{11} = 1$. It is sufficient to show
\begin{equation}
\label{eq:proof-condition}
\sum_{i = 2}^{M} \gamma_{i}^{2} - \frac{\left( \sum_{i = 2}^{M} \gamma_i \right)^2}{M-1} - 
\sum_{i = 2}^{M} \gamma_i^2  (1-\boldsymbol{P}_{ii}) + \frac{\left(\sum_{i = 2}^{M} \gamma_i (1-\boldsymbol{P}_{ii})\right)^2}{M-p} \geq 0. 
\end{equation}
Based on the standard properties of the projection matrix and Lemma \ref{th:lemma-hatmatrix}(3), 
${\rm tr}(\boldsymbol{P}_{\vartheta}) = p$ and $\sum_{i = 2}^{M} \boldsymbol{P}_{ii} = p-1$. 
Since Lemma \ref{th:lemma-hatmatrix}(2) guarantees $P_{ii} \geq 0$ ($i = 2, \ldots, M$),
Cauchy--Schwartz inequality is applicable, which results in 
\begin{equation}
\label{eq:cauchy-schwartz}
\sum_{i = 2}^{M} \gamma_{i}^{2}\boldsymbol{P}_{ii}  =   \frac{ \big(\sum_{i=2}^{M}\boldsymbol{P}_{ii}\big) \big(\sum_{i=2}^{M}\boldsymbol{P}_{ii} \gamma_{i}^2\big) }{p-1} 
 \geq \frac{ \big( \sum_{i =2}^{M} \boldsymbol{P}_{ii} \gamma_i\big)^2}{p-1}.
\end{equation}
Furthermore, applying the Sedrakyan's inequality 
\citep{Sedrakyan1997} we have 
\begin{equation}
\label{eq:sedrakyan}
\frac{ \big( \sum_{i =2}^{M} \boldsymbol{P}_{ii} \gamma_i\big)^2}{p-1} + 
\frac{\left(\sum_{i = 2}^{M} \gamma_i (1-\boldsymbol{P}_{ii})\right)^2}{M-p} -
\frac{ \big( \sum_{i =2}^{M} \gamma_i\big)^2}{M-1} \geq 0.
\end{equation}
Plugging (\ref{eq:cauchy-schwartz}) into (\ref{eq:proof-condition}),  then using (\ref{eq:sedrakyan}), 
it is easy to verify the condition (\ref{eq:proof-condition}) holds. 
Therefore, there exists a constant $C > 0$ such that 
$\pi^{\rm AR}(\vartheta) \leq C \pi^{0,AR}(\vartheta)$.
As a result, $\pi^{\rm AR}(\vartheta)$ is also proper and has the same limiting behaviour as 
the approximate reference prior for the constant--mean case. 

\medskip

\noindent
(b) 
The integrated likelihood (\ref{eq:intlik-1}) is clearly a continuous function on $(0,\infty)$.
Assumption (A1) implies that $\lim_{\vartheta \rightarrow 0^+} K_{\vartheta}(r) = {\bf 1}\{r = 0\}$,
so as $\vartheta \rightarrow 0^+$, 
$\boldsymbol{\Sigma}_{\vartheta} \rightarrow \boldsymbol{I}_n$ and  
$L^{\rm I}(\vartheta; \boldsymbol{z}) \rightarrow 
|\boldsymbol{X}^{\top}\boldsymbol{X}|^{- \frac{1}{2}} (S^{2}_{0})^{-\frac{n-p}{2}} > 0$, where
$S^{2}_{0} = (\boldsymbol{z} - \boldsymbol{X} \hat{\bfbeta}_0)^{\top} 
(\boldsymbol{z} - \boldsymbol{X} \hat{\bfbeta}_0)$  and
$\hat{\bfbeta}_0 = (\boldsymbol{X}^{\top} \boldsymbol{X})^{-1} \boldsymbol{X}^{\top} \boldsymbol{z}$. 
\cite{Berger2001} showed that when the mean function includes an intercept and the 
correlation function satisfies (\ref{sigma-expansion}) (non--smooth covariance models), 
$L^{\rm I}(\vartheta; \boldsymbol{z}) = O(1)$ as $\vartheta \rightarrow \infty$.
For smooth covariance  models, \cite{Mure2021} showed that if 
$v_{1}(\vartheta) \geq \ldots  \geq v_{n-p}(\vartheta) > 0$ are the ordered eigenvalues of 
$\boldsymbol{\Sigma}^{W}_{\vartheta}$ and assumption (A2) holds, then
\begin{equation*}
L^{\rm I}(\vartheta; \boldsymbol{z}) = 
\left\{ \prod_{i=1}^{n-p} \frac{O(v_{n-p}(\vartheta))}{v_{i}(\vartheta)} \right\}^{\frac{1}{2}} ,
\quad\quad {\rm as} \ \vartheta \rightarrow \infty .
\end{equation*}
So $L^{\rm I}(\vartheta; \boldsymbol{z})$ is bounded on $(0,\infty)$ in both cases. 
Combining this with the result in (a) imply that the integral (\ref{integrability}) is finite
when $\pi^{\rm R}(\vartheta)$ is replaced with $\pi^{\rm AR}(\vartheta)$, and therefore
$\pi^{\rm AR}(\bfbeta, \sigma^2, \vartheta ~|~\boldsymbol{z})$ is proper.
\end{proof}

\newpage 

\setcounter{table}{0}
\setcounter{figure}{0}
\setcounter{equation}{0}

\renewcommand{\thefigure}{S\arabic{figure}}
\renewcommand{\thetable}{S\arabic{table}}
\renewcommand{\thetable}{S\arabic{table}}
\renewcommand{\theequation}{S\arabic{equation}}

\begin{center}
\Large{\textbf{Supplementary Materials for the Manuscript: \\ 
Approximate Reference Priors for Gaussian Random Fields \\
by Victor De Oliveira and Zifei Han }}
\end{center}

\medskip

This document provides the proofs of results stated in the manuscript indicated in the title,
as well as some additional numerical results. It consists of seven parts:
\begin{enumerate}
\item[S1]
Proof of Lemma 1.

\item[S2]
Details of the construction of the matrix $\boldsymbol{H}_1$.

\item[S3]
 Numerical comparison of the tail behaviours of exact and approximate 
reference priors of $\vartheta$.

\item[S4]
Sensitivity of the approximate reference prior to the tuning constants.

\item[S5]
Comparison of frequentist properties of Bayesian inferences based on
		several default priors and MLE.

\item[S6] Analysis of a data set simulated on an irregular sampling design.

\item[S7] Additional references.
\end{enumerate}

\noindent{\bf S1. Proof of Lemma \ref{thm:spec-approx-2d}} 

\medskip

\noindent
(a) 
To guide the finding of conditions on $A_{m_1,m_2}$ and $B_{m_1,m_2}$ for $T_{M_1, M_2}(\bfu_{i, j})$ 
to be real--valued, the terms of the sum are grouped based on the spatial frequencies as
\begin{equation}
T_{a}(\bfu_{i, j}) + T_{b}(\bfu_{i, j}) + T_{c}(\bfu_{i, j}) + T_{d}(\bfu_{i, j}) 
+ T _{e}(\bfu_{i, j}) + T_{f}(\bfu_{i, j}) + T_{g}(\bfu_{i, j}) ,
\label{T-split}
\end{equation}
where 
\begin{align*}
T_{a}(\bfu_{i, j}) & =  
\sum_{m_1 = 1}^{\frac{M_1}{2}-1} \exp\big( \mi \bfomega_{m_1, 0}^{\top} \bfu_{i,j} \big) U_{m_1, 0} \ + 
\sum_{m_1 = -\frac{M_1}{2}+1}^{-1} \exp\big( \mi \bfomega_{m_1, 0}^{\top} \bfu_{i,j} \big) U_{m_1, 0}
\\
T_{b}(\bfu_{i, j})    & =  
  \sum_{m_2 = 1}^{\frac{M_2}{2}-1} \exp\big( \mi \bfomega_{0, m_2}^{\top} \bfu_{i,j} \big) U_{0, m_2} \ +
  \sum_{m_2 = -\frac{M_2}{2}+1}^{-1} \exp\big( \mi \bfomega_{0, m_2}^{\top} \bfu_{i,j} \big) U_{0, m_2}
 \\
T_{c}(\bfu_{i, j})   & =  
  \sum_{m_1 = 1}^{\frac{M_1}{2}-1}   \sum_{m_2 = 1}^{\frac{M_2}{2}-1} 
\exp\big( \mi \bfomega_{m_1, m_2}^{\top} \bfu_{i,j} \big) U_{m_1, m_2} \ +
\sum_{m_1 = -\frac{M_1}{2}+1}^{-1} \sum_{m_2 = -\frac{M_2}{2}+1}^{-1}  
 \exp\big( \mi \bfomega_{m_1, m_2}^{\top} \bfu_{i,j} \big) U_{m_1, m_2} 
 \\
T_{d}(\bfu_{i, j})   & =  \sum_{m_1 = 1}^{\frac{M_1}{2}-1} \sum_{m_2 = -\frac{M_2}{2}+1}^{-1}  
 \exp\big( \mi \bfomega_{m_1, m_2}^{\top} \bfu_{i,j} \big) U_{m_1, m_2} \ + 
  \sum_{m_1 = -\frac{M_1}{2}+1}^{-1}   \sum_{m_2 = 1}^{\frac{M_2}{2}-1} 
\exp\big( \mi \bfomega_{m_1, m_2}^{\top} \bfu_{i,j} \big) U_{m_1, m_2}  
\\
T_{e}(\bfu_{i, j})   & =   
\sum_{m_1 = 1}^{\frac{M_1}{2}-1} \exp\big( \mi \bfomega_{m_1, \frac{M_2}{2}}^{\top} \bfu_{i,j} \big) U_{m_1, \frac{M_2}{2}} 
\ + 
\sum_{m_1 = -\frac{M_1}{2}+1}^{-1} \exp\big( \mi \bfomega_{m_1, \frac{M_2}{2}}^{\top} \bfu_{i,j} \big) U_{m_1, \frac{M_2}{2}} 
  \\
T_{f}(\bfu_{i, j})   &=   
  \sum_{m_2 = 1}^{\frac{M_2}{2}-1} \exp\big( \mi \bfomega_{\frac{M_1}{2}, m_2}^{\top} \bfu_{i,j} \big) U_{\frac{M_1}{2}, m_2} \ + 
  \sum_{m_2 = -\frac{M_2}{2}+1}^{-1} \exp\big( \mi \bfomega_{\frac{M_1}{2}, m_2}^{\top} \bfu_{i,j} \big) U_{\frac{M_1}{2}, m_2}
  \\
T_{g}(\bfu_{i, j})   & =   U_{0,0}+  
\exp\big( \mi \bfomega_{\frac{M_1}{2}, 0}^{\top} \bfu_{i,j} \big) U_{\frac{M_1}{2}, 0}+  
\exp\big( \mi \bfomega_{0, \frac{M_2}{2}}^{\top} \bfu_{i,j} \big) U_{0, \frac{M_2}{2}}+  
\exp\big( \mi \bfomega_{\frac{M_1}{2}, \frac{M_2}{2}}^{\top} \bfu_{i,j} \big) U_{\frac{M_1}{2}, \frac{M_2}{2}} ; 
\end{align*}
the dependence of $T_{a}(\bfu_{i, j}), \ldots, T_{g}(\bfu_{i, j})$ on $M_1, M_2$ is suppressed to simplify notation. 
A graphical illustration of the groupings when $M_1 = M_2 = 6$ is shown in Figure \ref{fi:spec-design-group}. 
For instance, the terms in $T_{M_1, M_2}(\bfu_{i, j})$ involving the spectral points on the $x$ axis 
in the second quadrant labeled as `a' are matched with the terms involving the spectral points 
on the $x$ axis in the first quadrant also labeled as `a'. 
The same goes for all the other terms involving the spectral points in Figure \ref{fi:spec-design-group}, 
where the matched frequencies are letter--coded.

Now, note that for $m_1 = 1,\ldots,M_1/2 - 1$ it holds that $\bfomega_{-m_1, 0} = -\bfomega_{m_1, 0}$  
so if we set $U_{-m_1, 0} \coloneqq \bar{U}_{m_1, 0}$, the term $T_{a}(\bfu_{i, j})$ can be written as
\begin{align}
T_a(\bfu_{i, j}) & = 
\sum_{m_1 = 1}^{\frac{M_1}{2}-1} \Big(\exp\big( \mi \bfomega_{m_1, 0}^{\top} \bfu_{i,j} \big) U_{m_1, 0} + 
\exp\big(- \mi \bfomega_{m_1, 0}^{\top} \bfu_{i,j} \big) U_{-m_1, 0} \Big) \nonumber \\
& = \sum_{m_1 = 1}^{\frac{M_1}{2}-1} \Big( \exp\big( \mi \bfomega_{m_1, 0}^{\top} \bfu_{i,j}\big) U_{m_1, 0} + 
\compconj{\exp\big( \mi \bfomega_{m_1, 0}^{\top} \bfu_{i,j} \big) U_{m_1, 0}} \Big)  \nonumber \\
& = 2  \sum_{m_1 = 1}^{\frac{M_1}{2}-1}  {\rm Re} \left( \exp\big( \mi \bfomega_{m_1, 0}^{\top} \bfu_{i,j}\big) U_{m_1, 0}\right) \nonumber \\
& = 2 \sum_{m_1 = 1}^{\frac{M_1}{2}-1} 
\left( A_{m_1, 0} \cos\big(\bfomega_{m_1, 0}^{\top} \bfu_{i,j}\big) - 
B_{m_1, 0} \sin (\bfomega_{m_1, 0}^{\top} \bfu_{i,j}) \right), 
\label{Ta}
\end{align}
which is real--valued, where ${\rm Re}(z)$ denotes the real part of the complex number $z$. 
Likewise we have that, if for $m_1 = 1,\ldots,M_1/2 - 1$ and $m_2 = 1,\ldots,M_2/2 - 1$
we set $U_{0,-m_2} \coloneqq \compconj{U}_{0, m_2}$ and $U_{-m_1,-m_2} \coloneqq \compconj{U}_{m_1, m_2}$, 
we obtain that
\begin{align}
T_b(\bfu_{i,j}) &= 2 \sum_{m_2 = 1}^{\frac{M_2}{2}-1} 
\left(A_{0,m_2} \cos\big(\bfomega_{0,m_2}^{\top} \bfu_{i,j}\big) - 
B_{0,m_2} \sin \big(\bfomega_{0, m_2}^{\top} \bfu_{i,j}\big) \right) 
\label{Tb} \\
T_c(\bfu_{i,j}) &= 2 \sum_{m_1 = 1}^{\frac{M_1}{2}-1}  \sum_{m_2 = 1}^{\frac{M_2}{2}-1} 
\left( A_{m_1,m_2} \cos\big(\bfomega_{m_1,m_2}^{\top} \bfu_{i,j}\big) - 
B_{m_1,m_2} \sin \big(\bfomega_{m_1, m_2}^{\top} \bfu_{i,j}\big) \right) , 
\label{Tc}
\end{align}
are also real--valued. 
And if for $m_1 = 1,\ldots,M_1/2 - 1$ and $m_2 = -M_2/2 + 1,\ldots,-1$
we set $U_{-m_1,-m_2} 
\coloneqq \compconj{U}_{m_1, m_2}$, we have
\begin{equation}
T_d(\bfu_{i,j}) = 2 \sum_{m_1 = 1}^{\frac{M_1}{2}-1}  \sum_{m_2 = -\frac{M_2}{2}+1}^{-1}
\left( A_{m_1,m_2} \cos\big(\bfomega_{m_1,m_2}^{\top} \bfu_{i,j}\big) - 
B_{m_1,m_2} \sin \big(\bfomega_{m_1, m_2}^{\top} \bfu_{i,j}\big) \right) . 
\label{Td}
\end{equation} 
By noting that for any $m_1, i, j \in {\mathbb Z}$ and $x \in {\mathbb R}$ it holds that
$\bfomega_{m_1, \frac{M_2}{2}}^{\top} \bfu_{i,j}  = \frac{2\pi m_1 i}{M_1} + \pi j$,
$\cos(x + j\pi) = \cos(x - j\pi)$ and $\sin(x + j\pi) = \sin(x - j\pi)$, we have
\begin{align*}
\exp\big(\mi  \bfomega_{-m_1, \frac{M_2}{2}}^{\top} \bfu_{i,j}  \big) 
& =   \cos\big(\bfomega_{-m_1,\frac{M_2}{2}}^{\top} \bfu_{i,j}\big) + 
\mi \sin \big(\bfomega_{-m_1,\frac{M_2}{2}}^{\top}\bfu_{i,j}\big) \\
& =  \cos\left(-\frac{2\pi m_1 i }{M_1} + \pi j \right) + \mi \sin \left(-\frac{2\pi m_1 i }{M_1} + \pi j\right) \\
& =  \cos\left(-\frac{2\pi m_1 i }{M_1} - \pi j \right) + \mi \sin \left(-\frac{2\pi m_1 i }{M_1} - \pi j\right) \\
& =  \cos\left(\frac{2\pi m_1 i }{M_1} + \pi j \right) - \mi \sin \left(\frac{2\pi m_1 i }{M_1} + \pi j\right) \\
& = \cos\left(\bfomega_{m_1, \frac{M_2}{2}}^{\top} \bfu_{i,j}\right) - \mi \sin \left(\bfomega_{m_1, \frac{M_2}{2}}^{\top} \bfu_{i,j}\right)\\
& = \exp\left( -\mi  \bfomega_{m_1, \frac{M_2}{2}}^{\top} \bfu_{i,j} \right) .
\end{align*}
Then, if for $m_1 = 1,\ldots,M_1/2 - 1$ we set $U_{-m_1, \frac{M_2}{2}} \coloneqq \bar{U}_{m_1, \frac{M_2}{2}}$,
we have 
\begin{align}
T_e(\bfu_{i,j}) & = 
\sum_{m_1 = 1}^{\frac{M_1}{2}-1} \Big( \exp\big( \mi \bfomega_{m_1, \frac{M_2}{2}}^{\top} \bfu_{i,j} \big) 
U_{m_1, \frac{M_2}{2}} + 
\exp\big(\mi \bfomega_{-m_1, \frac{M_2}{2}}^{\top} \bfu_{i,j} \big) U_{-m_1, \frac{M_2}{2}} \Big) \nonumber \\
& = 2 \sum_{m_1 = 1}^{\frac{M_1}{2}-1} 
\left( A_{m_1, \frac{M_2}{2}} \cos\big(\bfomega_{m_1, \frac{M_2}{2}}^{\top} \bfu_{i,j}\big) - 
B_{m_1, \frac{M_2}{2}} \sin \big(\bfomega_{m_1, \frac{M_2}{2}}^{\top} \bfu_{i,j}\big) \right).
\label{Te}
\end{align}
By a similar argument, if for $m_2 = 1,\ldots,M_2/2 - 1$ we set 
$U_{\frac{M_1}{2},-m_2} \coloneqq \bar{U}_{\frac{M_1}{2}, m_2}$,
\begin{equation}
T_f(\bfu_{i,j}) = 2 \sum_{m_2 = 1}^{\frac{M_2}{2}-1} 
\left( A_{ \frac{M_1}{2}, m_2} \cos\big(\bfomega_{ \frac{M_1}{2}, m_2}^{\top} \bfu_{i,j}\big) - 
B_{\frac{M_1}{2}, m_2} \sin \big(\bfomega_{\frac{M_1}{2}, m_2}^{\top} \bfu_{i,j}\big) \right) ,
\label{Tf}
\end{equation}
so the last two terms are also real--valued.
Finally, by setting $B_{0, 0} = B_{\frac{M_1}{2}, 0} = B_{0, \frac{M_2}{2}} = B_{\frac{M_1}{2}, \frac{M_2}{2}} = 0$,
we have 
\begin{equation}
T_{g}(\bfu_{i,j}) = 
A_{0, 0} + \cos\big(\bfomega_{\frac{M_1}{2}, 0}^{\top} \bfu_{i,j}\big)A_{\frac{M_1}{2}, 0} 
+ \cos\big(\bfomega_{0, \frac{M_2}{2}}^{\top} \bfu_{i,j}\big)A_{0, \frac{M_2}{2}}
+ \cos\big(\bfomega_{\frac{M_1}{2}, \frac{M_2}{2}}^{\top} \bfu_{i,j}\big)A_{\frac{M_1}{2}, \frac{M_2}{2}} .
\label{Tg}
\end{equation}
By putting all of the above together we have that the aforementioned restrictions on 
the real and imaginary parts of $U_{m_1, m_2}$ imply that
\begin{align*}
& T_{M_1, M_2} (\bfu_{i,j}) \\
&= 2 \Bigg\{ \sum_{m_1 = 1}^{\frac{M_1}{2}-1} 
\left( A_{m_1, 0} \cos\big(\bfomega_{m_1, 0}^{\top} \bfu_{i,j}\big) - 
B_{m_1, 0} \sin (\bfomega_{m_1, 0}^{\top} \bfu_{i,j}) \right) \\
& \quad + \sum_{m_2 = 1}^{\frac{M_2}{2}-1} 
\left(A_{0,m_2} \cos\big(\bfomega_{0,m_2}^{\top} \bfu_{i,j}\big) - 
B_{0,m_2} \sin \big(\bfomega_{0, m_2}^{\top} \bfu_{i,j}\big) \right) \\
& \quad + \sum_{m_1 = 1}^{\frac{M_1}{2}-1}  \sum_{m_2 = 1}^{\frac{M_2}{2}-1} 
\left( A_{m_1,m_2} \cos\big(\bfomega_{m_1,m_2}^{\top} \bfu_{i,j}\big) - 
B_{m_1,m_2} \sin \big(\bfomega_{m_1, m_2}^{\top} \bfu_{i,j}\big) \right) \\
& \quad + \sum_{m_1 = 1}^{\frac{M_1}{2}-1}  \sum_{m_2 = -\frac{M_2}{2}+1}^{-1}
\left( A_{m_1,m_2} \cos\big(\bfomega_{m_1,m_2}^{\top} \bfu_{i,j}\big) - 
B_{m_1,m_2} \sin \big(\bfomega_{m_1, m_2}^{\top} \bfu_{i,j}\big) \right) \\
& \quad + \sum_{m_1 = 1}^{\frac{M_1}{2}-1} 
\left( A_{m_1, \frac{M_2}{2}} \cos\big(\bfomega_{m_1, \frac{M_2}{2}}^{\top} \bfu_{i,j}\big) - 
B_{m_1, \frac{M_2}{2}} \sin \big(\bfomega_{m_1, \frac{M_2}{2}}^{\top} \bfu_{i,j}\big) \right) \\
& \quad + \sum_{m_2 = 1}^{\frac{M_2}{2}-1} 
\left( A_{ \frac{M_1}{2}, m_2} \cos\big(\bfomega_{ \frac{M_1}{2}, m_2}^{\top} \bfu_{i,j}\big) - 
B_{\frac{M_1}{2}, m_2} \sin \big(\bfomega_{\frac{M_1}{2}, m_2}^{\top} \bfu_{i,j}\big) \right) \Bigg\} \\
& \quad + A_{0, 0} + \cos\big(\bfomega_{\frac{M_1}{2}, 0}^{\top} \bfu_{i,j}\big)A_{\frac{M_1}{2}, 0} 
+ \cos\big(\bfomega_{0, \frac{M_2}{2}}^{\top} \bfu_{i,j}\big)A_{0, \frac{M_2}{2}}
+ \cos\big(\bfomega_{\frac{M_1}{2}, \frac{M_2}{2}}^{\top} \bfu_{i,j}\big)A_{\frac{M_1}{2}, \frac{M_2}{2}} \\
&= A_{0,0} + A_{\frac{M_1}{2},0} \cos\big(\bfomega_{\frac{M_1}{2},0}^{\top} {\bfu}_{i,j}\big)
+ A_{0,\frac{M_2}{2}} \cos\big(\bfomega_{0, \frac{M_2}{2}}^{\top} {\bfu}_{i,j}\big)
+ A_{\frac{M_1}{2},\frac{M_2}{2}} \cos\big(\bfomega_{\frac{M_1}{2}, \frac{M_2}{2}}^{\top} {\bfu}_{i,j}\big) \\
& \quad + \ 2 \sum_{(m_1,m_2) \in I} 
\big( A_{m_1,m_2} \cos(\bfomega_{m_1, m_2}^{\top} {\bfu}_{i,j}) 
- B_{m_1,m_2} \sin(\bfomega_{m_1, m_2}^{\top} {\bfu}_{i,j}) \big) ,
\end{align*}	
is a real--valued random variable.  
Since for each $\bfu_{i,j} \in \mathcal{U}_M$, $T_{M_1, M_2} (\bfu_{i,j})$ is a linear combination  
of the elements of $\bfg$ in (\ref{g-vector}), which has a zero--mean multivariate normal distribution, 
the result (a) follows.

\bigskip

\noindent
(b) 
For any $\bfu_{i, j}, \bfu_{i', j'} \in \mathcal{U}_{M}$, it follows from (\ref{T-split}) that 
\begin{align}
& {\rm cov}\{T_{M_1, M_2} (\bfu_{i,j}), T_{M_1, M_2} (\bfu_{i',j'})\} \nonumber \\	
&= {\rm cov}\{T_{a} (\bfu_{i,j}), T_{a} (\bfu_{i',j'})\} + {\rm cov}\{T_{b} (\bfu_{i,j}), T_{b} (\bfu_{i',j'})\}  
+ {\rm cov}\{T_{c} (\bfu_{i,j}), T_{c} (\bfu_{i',j'})\} \nonumber \\
& \quad	+ {\rm cov}\{T_{d} (\bfu_{i,j}), T_{d} (\bfu_{i',j'})\} 
+ {\rm cov}\{T_{e} (\bfu_{i,j}), T_{e} (\bfu_{i',j'})\} + {\rm cov}\{T_{f} (\bfu_{i,j}), T_{f} (\bfu_{i',j'})\} 
\nonumber \\
& \quad + {\rm cov}\{T_{g} (\bfu_{i,j}), T_{g} (\bfu_{i',j'})\} ,
\label{cov-T-split}
\end{align}
and from the assumed variances of $A_{m_1,m_2}$ and $B_{m_1,m_2}$, (\ref{Ta}) and 
a standard trigonometric identity we have
\begin{align*}
&{\rm cov} \left\{  T_{a}(\bfu_{i,j}),  T_{a}(\bfu_{i',j'})  \right\} \\  
&= 4\sum_{m_1 = 1}^{\frac{M_1}{2}- 1} \left[
\cos(\bfomega_{m_1, 0}^{\top} \bfu_{i,j} ) \cos(\bfomega_{m_1, 0}^{\top} \bfu_{i',j'} ){\rm var}(A_{m_1, 0}) + 
\sin(\bfomega_{m_1, 0}^{\top} \bfu_{i,j} ) \sin(\bfomega_{m_1, 0}^{\top} \bfu_{i',j'} ){\rm var}(B_{m_1, 0}) 
\right] \\ 
&= 2 \frac{c_{\Delta} \sigma^2}{M}
\sum_{m_1 = 1}^{\frac{M_1}{2}- 1} \cos\big( \bfomega_{m_1, 0}^{\top} \bfh \big)
f^{\Delta}_{\vartheta}(\bfomega_{m_1, 0}),
\end{align*}
where $\bfh \coloneqq \bfu_{i,j} -\bfu_{i',j'} = \Delta(i-i', j-j')^{\top}$; 
the dependence of $\bfh$ on $(i-i', j-j')$ is suppressed to simplify the notation.
By similar computations we have from (\ref{cov-T-split}) and (\ref{Tb})--(\ref{Tg}) that 
\begin{align*}
&{\rm cov} \left\{  T_{M_1,M_2}(\bfu_{i,j}),  T_{M_1,M_2}(\bfu_{i',j'})  \right\} \\
&= 2 \frac{c_{\Delta} \sigma^2}{M} \Bigg\{
\sum_{m_1 = 1}^{\frac{M_1}{2}- 1} \cos\big( \bfomega_{m_1, 0}^{\top} \bfh \big)
f^{\Delta}_{\vartheta}(\bfomega_{m_1, 0}) 
+ \sum_{m_2 = 1}^{\frac{M_2}{2}- 1} \cos\big( \bfomega_{0, m_2}^{\top} \bfh \big)
f^{\Delta}_{\vartheta}(\bfomega_{0, m_2}) \\
& \quad +\sum_{m_1= 1}^{\frac{M_1}{2}- 1}\sum_{m_2 = 1}^{\frac{M_2}{2}- 1} 
\cos \big( \bfomega_{m_1, m_2}^{\top} \bfh \big)
f^{\Delta}_{\vartheta}(\bfomega_{m_1, m_2}) 
+ \sum_{m_1= 1}^{\frac{M_1}{2}- 1}\sum_{m_2 = -\frac{M_2}{2}+1}^{- 1} 
\cos \big( \bfomega_{m_1, m_2}^{\top} \bfh \big)
f^{\Delta}_{\vartheta}(\bfomega_{m_1, m_2}) \\
& \quad + \sum_{m_1 = 1}^{\frac{M_1}{2}-1} 
\cos \big( \bfomega_{m_1, \frac{M_2}{2}}^{\top} \bfh \big)
f^{\Delta}_{\vartheta}(\bfomega_{m_1, \frac{M_2}{2}}) 
+ \sum_{m_2 = 1}^{\frac{M_2}{2}-1} 
\cos\big( \bfomega_{\frac{M_1}{2}, m_2}^{\top} \bfh \big)
f^{\Delta}_{\vartheta}(\bfomega_{\frac{M_1}{2}, m_2})  \\
& \quad + \frac{1}{2} \bigg( 
\cos \big(\bfomega_{0,0}^{\top} \bfh \big) f^{\Delta}_{\vartheta}(0, 0)
+  \cos \big(\bfomega_{\frac{M_1}{2}, 0}^{\top} \bfh \big)
f^{\Delta}_{\vartheta}\left({\frac{\pi}{\Delta}, 0}\right)  
+  \cos \big(\bfomega_{0,\frac{M_2}{2}}^{\top} \bfh \big)
f^{\Delta}_{\vartheta}\left( 0, \frac{\pi}{\Delta} \right)	\\ 
& \quad + \cos \big(\bfomega_{\frac{M_1}{2},\frac{M_2}{2}}^{\top} \bfh \big) 
f^{\Delta}_{\vartheta}\left(\frac{\pi}{\Delta}, \frac{\pi}{\Delta} \right) \bigg) \Bigg\} .
\end{align*}
Now, by splitting in two halves of all the terms in the above expression other than the ones in 
the double sums, and recalling what the spatial frequencies in $\mathcal{W}_M$ are, 
the above covariance can be written as
\begin{equation}
{\rm cov} \left\{  T_{M_1,M_2}(\bfu_{i,j}),  T_{M_1,M_2}(\bfu_{i',j'})  \right\}	
= \sigma^2 (V^{(1)}_{M_1,M_2} + V^{(2)}_{M_1,M_2}) , 
\label{cov-T}
\end{equation}
where
\begin{align*}
& V^{(1)}_{M_1,M_2} \\
&= \frac{1}{2} \frac{c_{\Delta}}{M}  \Bigg[  
\cos \big((0,0) \bfh \big) f^{\Delta}_{\vartheta}(0, 0)
	+  \cos \Big(\big(\frac{\pi}{\Delta},0\big) \bfh \Big)
f^{\Delta}_{\vartheta}\left({\frac{\pi}{\Delta}, 0}\right)  \\ 
& \quad +  \cos \Big(\big(0,\frac{\pi}{\Delta}\big) \bfh \Big)
f^{\Delta}_{\vartheta}\left( 0, \frac{\pi}{\Delta} \right)
+ \cos \Big(\big(\frac{\pi}{\Delta},\frac{\pi}{\Delta}\big) \bfh \Big) 
f^{\Delta}_{\vartheta}\left(\frac{\pi}{\Delta}, \frac{\pi}{\Delta} \right) \\	
& \quad + 2 \sum_{m_1 = 1}^{\frac{M_1}{2}- 1} \cos\Big( \Big(\frac{2\pi m_1}{\Delta M_1},0\Big) \bfh \Big) 
f^{\Delta}_{\vartheta}\Big(\frac{2\pi m_1}{\Delta M_1}, 0\Big) 
 + 2 \sum_{m_2 = 1}^{\frac{M_2}{2}- 1} \cos\Big( \Big(0,\frac{2\pi m_2}{\Delta M_2}\Big) \bfh \Big) 
f^{\Delta}_{\vartheta}\Big(0,\frac{2\pi m_2}{\Delta M_2}\Big) \\
& \quad + 2 
\sum_{m_1 = 1}^{\frac{M_1}{2}- 1} \cos\Big( \Big(\frac{2\pi m_1}{\Delta M_1},\frac{\pi}{\Delta}\Big) \bfh \Big) 
f^{\Delta}_{\vartheta}\Big(\frac{2\pi m_1}{\Delta M_1}, \frac{\pi}{\Delta}\Big) 
+ 2 \sum_{m_2 = 1}^{\frac{M_2}{2}- 1} \cos\Big( \Big(\frac{\pi}{\Delta},\frac{2\pi m_2}{\Delta M_2}\Big) \bfh \Big) 
f^{\Delta}_{\vartheta}\Big(\frac{\pi}{\Delta},\frac{2\pi m_2}{\Delta M_2}\Big) \\
& \quad + 4 \sum_{m_1= 1}^{\frac{M_1}{2}- 1}\sum_{m_2 = 1}^{\frac{M_2}{2}-1} 
\cos\Big( \Big(\frac{2\pi m_1}{\Delta M_1},\frac{2\pi m_2}{\Delta M_2}\Big) \bfh \Big) 
f^{\Delta}_{\vartheta}\Big(\frac{2\pi m_1}{\Delta M_1},\frac{2\pi m_2}{\Delta M_2}\Big)  \Bigg] \\
&= 2 \cdot \frac{c_{\Delta}}{M}
\sum_{m_1=0}^{\frac{M_1}{2}}\sum_{m_2=0}^{\frac{M_2}{2}} a_{m_1,m_2}
\cos\Bigg(\left(\frac{2\pi m_1}{\Delta M_1} , \frac{2\pi m_2}{\Delta M_2} \right)
\bfh \Bigg)
f^{\Delta}_{\vartheta}\left(\frac{2\pi m_1}{\Delta M_1} , \frac{2\pi m_2}{\Delta M_2} \right) \\
&= 2 \cdot Q^{(1)}_{M_1,M_2}, \ {\rm say} ,
\end{align*}
with
\[
a_{m_1,m_2} = \left\{ \begin{array}{ll} 
\frac{1}{4} & \mbox{if $(m_1,m_2) \in I_C$ \ (a `corner' frequency)} \\
	\frac{1}{2} & \mbox{if $(m_1,m_2) \in I_B$ \ (a `boundary' frequency)} \\
1 & \mbox{if $(m_1,m_2) \in I_I$ \ (an `interior' frequency)}
\end{array} \right. ,
\]
and  
\begin{align*}
& V^{(2)}_{M_1,M_2} \\
&= \frac{1}{2} \frac{c_{\Delta}}{M}  \Bigg[  
\cos \big((0,0) \bfh \big) f^{\Delta}_{\vartheta}(0, 0)
+  \cos \Big(\big(\frac{\pi}{\Delta},0\big) \bfh \Big)
f^{\Delta}_{\vartheta}\left({\frac{\pi}{\Delta}, 0}\right)  \\ 
& \quad +  \cos \Big(\big(0,\frac{\pi}{\Delta}\big) \bfh \Big)
f^{\Delta}_{\vartheta}\left( 0, \frac{\pi}{\Delta} \right)
+ \cos \Big(\big(\frac{\pi}{\Delta},\frac{\pi}{\Delta}\big) \bfh \Big) 
f^{\Delta}_{\vartheta}\left(\frac{\pi}{\Delta}, \frac{\pi}{\Delta} \right) \\	
& \quad + 2 \sum_{m_1 = 1}^{\frac{M_1}{2}- 1} \cos\Big( \Big(\frac{2\pi m_1}{\Delta M_1},0\Big) \bfh \Big) 
f^{\Delta}_{\vartheta}\Big(\frac{2\pi m_1}{\Delta M_1}, 0\Big) 
 + 2 \sum_{m_2 = 1}^{\frac{M_2}{2}- 1} \cos\Big( \Big(0,\frac{2\pi m_2}{\Delta M_2}\Big) \bfh \Big) 
f^{\Delta}_{\vartheta}\Big(0,\frac{2\pi m_2}{\Delta M_2}\Big) \\
& \quad + 2 
\sum_{m_1 = 1}^{\frac{M_1}{2}- 1} \cos\Big( \Big(\frac{2\pi m_1}{\Delta M_1},\frac{\pi}{\Delta}\Big) \bfh \Big) 
f^{\Delta}_{\vartheta}\Big(\frac{2\pi m_1}{\Delta M_1}, \frac{\pi}{\Delta}\Big) 
+ 2 \sum_{m_2 = 1}^{\frac{M_2}{2}- 1} \cos\Big( \Big(\frac{\pi}{\Delta},\frac{2\pi m_2}{\Delta M_2}\Big) \bfh \Big) 
f^{\Delta}_{\vartheta}\Big(\frac{\pi}{\Delta},\frac{2\pi m_2}{\Delta M_2}\Big) \\
& \quad + 4 \sum_{m_1= 1}^{\frac{M_1}{2}- 1}\sum_{m_2 = -\frac{M_2}{2}+1}^{-1} 
\cos\Big( \Big(\frac{2\pi m_1}{\Delta M_1},\frac{2\pi m_2}{\Delta M_2}\Big) \bfh \Big) 
f^{\Delta}_{\vartheta}\Big(\frac{2\pi m_1}{\Delta M_1},\frac{2\pi m_2}{\Delta M_2}\Big)  \Bigg] \\
   &= \frac{1}{2} \frac{c_{\Delta}}{M}  \Bigg[  
\cos \big((0,0) \bfh \big) f^{\Delta}_{\vartheta}(0, 0)
+  \cos \Big(\big(\frac{\pi}{\Delta},0\big) \bfh \Big)
f^{\Delta}_{\vartheta}\left({\frac{\pi}{\Delta}, 0}\right)  \\ 
& \quad +  \cos \Big(\big(0,-\frac{\pi}{\Delta}\big) \bfh \Big)
f^{\Delta}_{\vartheta}\left(0, -\frac{\pi}{\Delta} \right)
+ \cos \Big(\big(\frac{\pi}{\Delta},-\frac{\pi}{\Delta}\big) \bfh \Big) 
f^{\Delta}_{\vartheta}\left(\frac{\pi}{\Delta}, -\frac{\pi}{\Delta} \right) \\	
& \quad + 2 \sum_{m_1 = 1}^{\frac{M_1}{2}- 1} \cos\Big( \Big(\frac{2\pi m_1}{\Delta M_1},0\Big) \bfh \Big) 
f^{\Delta}_{\vartheta}\Big(\frac{2\pi m_1}{\Delta M_1}, 0\Big) 
 + 2 \sum_{m_2 = -\frac{M_2}{2}+1}^{- 1} \cos\Big( \Big(0,\frac{2\pi m_2}{\Delta M_2}\Big) \bfh \Big) 
f^{\Delta}_{\vartheta}\Big(0,\frac{2\pi m_2}{\Delta M_2}\Big) \\
& \quad + 2 
\sum_{m_1 = 1}^{\frac{M_1}{2}- 1} \cos\Big( \Big(\frac{2\pi m_1}{\Delta M_1},-\frac{\pi}{\Delta}\Big) \bfh \Big) 
f^{\Delta}_{\vartheta}\Big(\frac{2\pi m_1}{\Delta M_1}, -\frac{\pi}{\Delta}\Big) 
+ 2 \sum_{m_2 = -\frac{M_2}{2}+1}^{-1} \cos\Big( \Big(\frac{\pi}{\Delta},\frac{2\pi m_2}{\Delta M_2}\Big) \bfh \Big) 
f^{\Delta}_{\vartheta}\Big(\frac{\pi}{\Delta},\frac{2\pi m_2}{\Delta M_2}\Big) \\
& \quad + 4 \sum_{m_1= 1}^{\frac{M_1}{2}- 1}\sum_{m_2 = -\frac{M_2}{2}+1}^{-1} 
\cos\Big( \Big(\frac{2\pi m_1}{\Delta M_1},\frac{2\pi m_2}{\Delta M_2}\Big) \bfh \Big) 
f^{\Delta}_{\vartheta}\Big(\frac{2\pi m_1}{\Delta M_1},\frac{2\pi m_2}{\Delta M_2}\Big)  \Bigg] \\
&= 2 \cdot \frac{c_{\Delta}}{M}
\sum_{m_1=0}^{\frac{M_1}{2}}\sum_{m_2=-\frac{M_2}{2}}^{0} b_{m_1,m_2}
\cos\Bigg(\left(\frac{2\pi m_1}{\Delta M_1} , \frac{2\pi m_2}{\Delta M_2} \right) \bfh \Bigg)
f^{\Delta}_{\vartheta}\left(\frac{2\pi m_1}{\Delta M_1} , \frac{2\pi m_2}{\Delta M_2} \right) \\
&= 2 \cdot Q^{(2)}_{M_1,M_2}, \ {\rm say} ,
\end{align*}
where $b_{m_1,m_2} \coloneqq a_{m_1,-m_2}$, and for the second identity it was used that 
$f^{\Delta}_{\vartheta}(\cdot)$ is an isotropic (radial) function,
$\cos\Big( \Big(\frac{2\pi m_1}{\Delta M_1},-\frac{\pi}{\Delta}\Big) \bfh \Big) 
= \cos\Big( \Big(\frac{2\pi m_1}{\Delta M_1},\frac{\pi}{\Delta}\Big) \bfh \Big)$
and $\cos\Big( \Big(\frac{\pi}{\Delta},-\frac{2\pi m_2}{\Delta M_2}\Big) \bfh \Big)
= \cos\Big( \Big(\frac{\pi}{\Delta},\frac{2\pi m_2}{\Delta M_2}\Big) \bfh \Big)$. 
Based on the trapezoidal product rule with   
$2\pi/\Delta M_1$ and $2\pi/\Delta M_2$ grid spacings, $Q^{(1)}_{M_1,M_2}$ approximates 
$\int_{0}^{\frac{\pi}{\Delta}} \int_{0}^{\frac{\pi}{\Delta}} 
\cos \big(\bfomega^{\top} \bfh \big) f^{\Delta}_{\vartheta}(\bfomega) d\bfomega$
(Dahlquist and Bj\"{o}rck, 2008). 
Since $f^{\Delta}_{\vartheta}(\bfomega)$ is a continuous function, it holds that
as $M_1, M_2 \rightarrow \infty$
 \[
Q^{(1)}_{M_1,M_2} \rightarrow 
\int_{0}^{\frac{\pi}{\Delta}} \int_{0}^{\frac{\pi}{\Delta}} 
\cos \big(\bfomega^{\top} \bfh \big) f^{\Delta}_{\vartheta}(\bfomega) d\bfomega ,
\]
and hence
\begin{equation}
V^{(1)}_{M_1,M_2} \rightarrow 
\int_{0}^{\frac{\pi}{\Delta}} \int_{0}^{\frac{\pi}{\Delta}} 
\cos \big(\bfomega^{\top} \bfh \big) f^{\Delta}_{\vartheta}(\bfomega) d\bfomega 	
+ \int_{-\frac{\pi}{\Delta}}^{0} \int_{-\frac{\pi}{\Delta}}^{0} 
\cos \big(\bfomega^{\top} \bfh \big) f^{\Delta}_{\vartheta}(\bfomega) d\bfomega ,	
\label{v1}
\end{equation}
since $\cos \big(\bfomega^{\top} \bfh \big) f^{\Delta}_{\vartheta}(\bfomega)$
is an even function of $\bfomega$.
Likewise, $Q^{(2)}_{M_1,M_2}$ is the trapezoidal product rule to approximate
$\int_{-\frac{\pi}{\Delta}}^{0} \int_{0}^{\frac{\pi}{\Delta}} 
\cos \big(\bfomega^{\top} \bfh \big) f^{\Delta}_{\vartheta}(\bfomega) d\bfomega$,
so by the same argument it holds that  
\begin{equation}
V^{(2)}_{M_1,M_2} \rightarrow 
\int_{-\frac{\pi}{\Delta}}^{0} \int_{0}^{\frac{\pi}{\Delta}} 
\cos \big(\bfomega^{\top} \bfh \big) f^{\Delta}_{\vartheta}(\bfomega) d\bfomega	
+ \int_{0}^{\frac{\pi}{\Delta}} \int_{-\frac{\pi}{\Delta}}^{0} 
\cos \big(\bfomega^{\top} \bfh \big) f^{\Delta}_{\vartheta}(\bfomega) d\bfomega	.
\label{v2}
\end{equation}
Finally, from (\ref{cov-T}), (\ref{v1}) and (\ref{v2}) follow that 
as $M_1,M_2 \rightarrow \infty$
\begin{eqnarray*}
{\rm cov}\left\{T_{M_1, M_2} (\bfu_{i,j}), T_{M_1, M_2} (\bfu_{i',j'})\right\} & \rightarrow &
\sigma^2 \int_{-\frac{\pi}{\Delta}}^{\frac{\pi}{\Delta}} \int_{-\frac{\pi}{\Delta}}^{\frac{\pi}{\Delta}} 
\cos \big(\bfomega^{\top} \bfh \big) f^{\Delta}_{\vartheta}(\bfomega) d\bfomega \\
& = & \sigma^2 K_{\vartheta}(||\bfu_{i,j} - \bfu_{i',j'}||),
\end{eqnarray*}
where the last identity follows from the spectral representation of the covariance function 
of the discrete index random field $Z_{\Delta}(\cdot) - E\{Z_{\Delta}(\cdot)\}$.
This shows result (b).

\bigskip
 
\noindent{\bf S2. Details of the Construction of the Matrix $\boldsymbol{H}_1$} 

\medskip

\noindent
The columns of the $M \times M$ matrix $\boldsymbol{H}_1$ are constructed as follows.
The first four columns are formed by the vectors
\[
\big( \cos(\bfomega_{m_1,m_2}^{\top}\bfu_{1,1}),  \cos(\bfomega_{m_1,m_2}^{\top}\bfu_{1,2}),  
 \ldots, \cos(\bfomega_{m_1,m_2}^{\top}\bfu_{M_1,M_2}) \big)^{\top} ,
\]
obtained for $(m_1,m_2) \in I_C$.
Since $\bfomega_{0,0} = (0,0)^\top$, the first column of $\boldsymbol{H}_1$ is ${\bf 1}_M$.
The next $M/2 - 2$ columns are formed by the vectors 
\[
2\big( \cos(\bfomega_{m_1,m_2}^{\top}\bfu_{1,1}),  \cos(\bfomega_{m_1,m_2}^{\top}\bfu_{1,2}),  \ldots, 
\cos(\bfomega_{m_1,m_2}^{\top}\bfu_{M_1,M_2}) \big)^{\top}, 
\]
obtained for $(m_1,m_2) \in I$, and the last $M/2 - 2$ columns are formed by the vectors
\[
-2 \big( \sin(\bfomega_{m_1,m_2}^{\top}\bfu_{1,1}),  \sin(\bfomega_{m_1,m_2}^{\top}\bfu_{1,2}),  \ldots, 
\sin(\bfomega_{m_1,m_2}^{\top}\bfu_{M_1,M_2}) \big)^{\top},
\]
obtained for $(m_1,m_2) \in I$.
This assures that $\boldsymbol{H}_1 \bfg$ results in the vector that collects all
the right hand sides of (\ref{eq:spectral-approx2}) for $\bfu_{i,j} \in \mathcal{U}_M$.

\bigskip

\noindent
{\bf S3. Numerical Comparison of the Tail Behaviours of Exact and Approximate Reference Priors} 

\medskip

\noindent
To better compare the tail behaviours of the exact and approximate reference priors of $\vartheta$, 
we consider a subset of the set up in Figures \ref{fi:prior-rho-nc} and \ref{fi:prior-rho-c}, but 
we now compare $\log \pi^{\rm R}(\vartheta)$ and $\log \pi^{\rm AR}(\vartheta)$ in the interval $[0.01, 5]$.
Figure \ref{fi:prior-logrho} displays these log prior densities for a $10 \times 10$ sampling design
in $[0,1]^2$, where the left panels correspond to models with constant mean and the right panels to 
models with non--constant mean.
For most scenarios the approximate reference priors have lighter tails, and this is more so in models
with constant mean. 
For the Mat\'ern model, $\pi^{\rm AR}(\vartheta) = O(\vartheta^{-3})$ as $\vartheta \rightarrow \infty$
(Corollary \ref{cor:propriety-matern}), while 
in general for $\nu \geq 1$, $\pi^{\rm R}(\vartheta) = O(\vartheta^{-1})$ \citep[Appendix B]{Mure2021}. 
The latter rate is not tight for some models though, 
as the tight rate varies with the mean function and degree of smoothness. This partially explains the 
different degree of discrepancy between 
$\log \pi^{\rm R}(\vartheta)$ and $\log \pi^{\rm AR}(\vartheta)$ in different models. 

\begin{figure}[t!]
\begin{center}
\psfig{figure=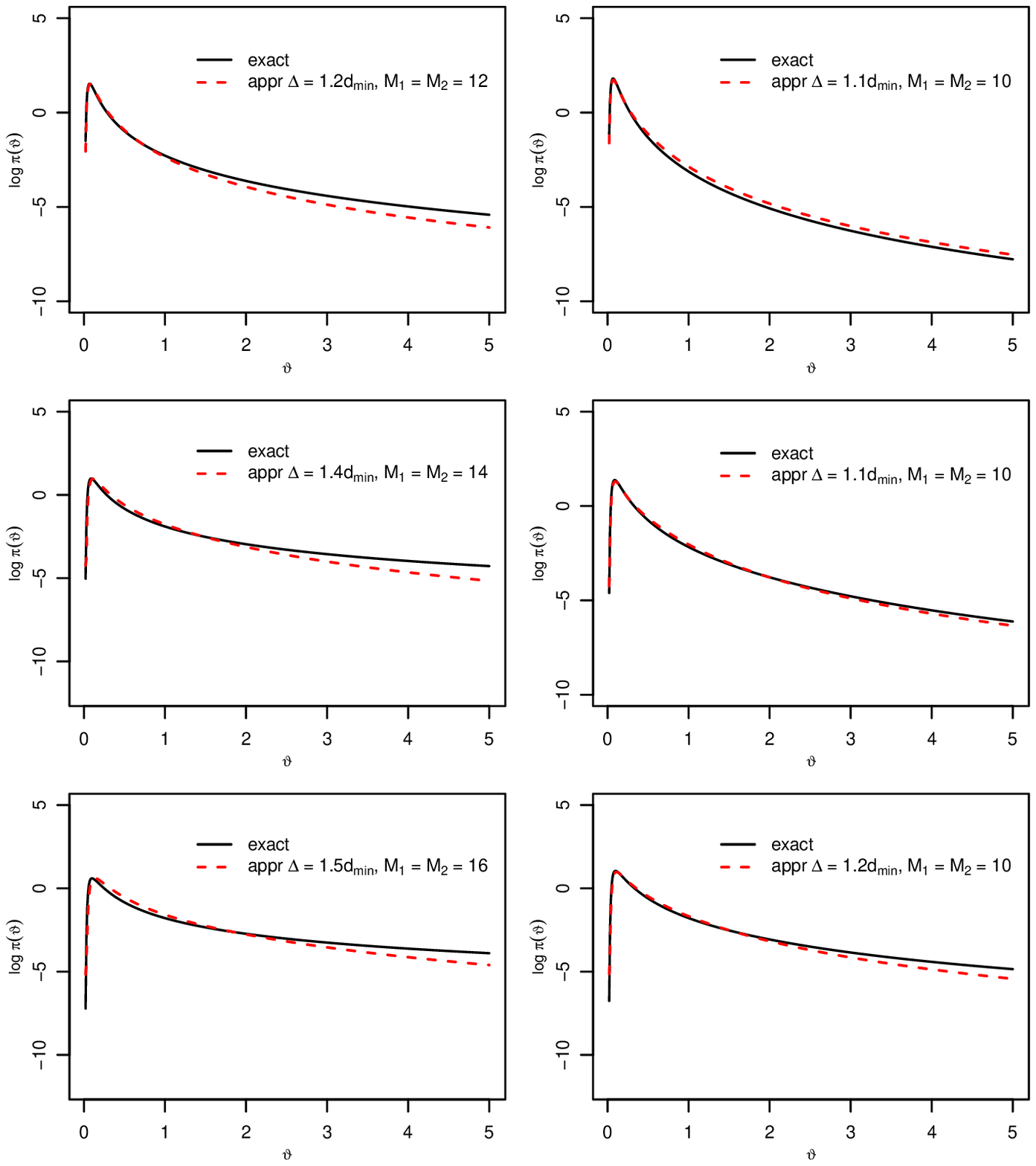, width=12cm, height=12cm} 
\end{center}
\vspace{-0.5 cm}
\caption{Exact and approximate log prior densities of $\vartheta$ for the Mat\'ern covariance function
based on the $10 \times 10$ sampling design in $[0,1]^2$.
The left panels are for models with constant mean, the right panels for models with non--constant mean,
and from top to bottom $\nu = 0.5, 1.5$ and $2.5$.}
\label{fi:prior-logrho}
\end{figure} 

\bigskip

\noindent 
{\bf S4. Sensitivity of the Approximate Reference Priors to the Tuning Constants} 

\medskip

\noindent
To illustrate how sensitive approximate reference priors are to the tuning constants $\Delta$ and $M_1$ 
(for simplicity $M_2 = M_1$), consider the model with mean zero and Mat\'ern correlation function with $\nu = 0.5$.
Figure \ref{fi:sen-deltaM} displays approximate reference priors of $\vartheta$ for different values of $\Delta$ 
and $M_1 = 12$ fixed (left), as well as approximate reference priors for different values of $M_1$ and 
$\Delta = 0.09$ fixed (right).
These show that the approximations are more sensitive to $\Delta$ than to $M_1$, so the tuning of the former is
more important. Section \ref{sec:numerical studies} provides some guidelines for the selection of these tuning constants.
It may also be noted that these constants could be viewed as hyperparameters and estimated using empirical Bayes.

\begin{figure}[ht]
\begin{center}
\psfig{figure=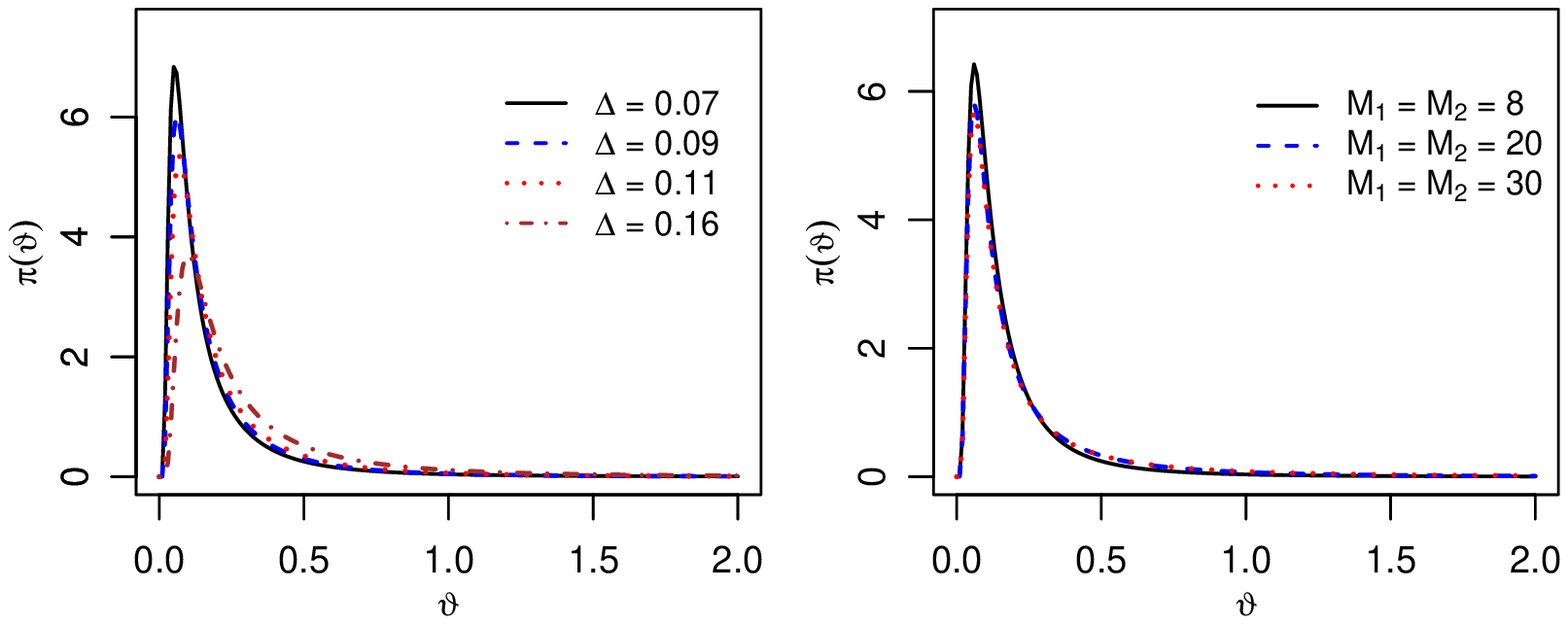, width=13cm, height=6cm} 
\end{center}
\vspace{-0.5 cm}
\caption{Sensitivity assessment of the 
approximate reference priors to tuning constants. 
Left: $M_1 = M_2 = 12$ fixed. Right: $\Delta = 0.09$ fixed.}
\label{fi:sen-deltaM}
\end{figure}

\bigskip

\noindent{\bf S5. Comparison of Frequentist Properties of Bayesian Inferences Based on
Several Default Priors and MLE}

A useful way to evaluate default priors is through the study of frequentist properties  
of the resulting Bayesian inferences (Ghosh and Mukerjee, 1992).
It has been found for a variety of models that reference priors yield credible intervals
with satisfactory frequentist coverage and estimators with competitive mean square errors, 
and this has also been the case for spatial models; 
see \cite{Berger2001,Ren2013,Gu2018} and the references therein.

We use a simulation experiment to compare frequentist properties of Bayesian procedures 
based on the exact and approximate reference priors, as well as those based on a default prior 
suggested by Gu (2019) called ``joint robust prior''.
This was proposed for the (transformed) range parameters of separable correlation functions 
in the context of computer model emulation and calibration; for isotropic correlation functions, it reduces to an inverse gamma prior. 
We use $\pi^{\rm JR}(\vartheta) = {\rm IG}(0.5, \sqrt{2}/100)$, where the hyperparameters 
were set at values recommended by Gu (2019)  
for the design described below. 
In addition, we also compare frequentist properties of purely likelihood--based inferences.

The numerical experiment is based on data in the region $\mcD = [0,1]^2$ simulated from 
Gaussian random fields with several mean and covariance functions at $n=100$ sampling locations.
We consider the $10 \times 10$ regular design and the irregular design displayed at the
bottom left panel of Figure  \ref{fi:prior-aref-ir}.
For the mean function we use $\mu({\bf s}) = 1$ ($p = 1$) and 
$\mu({\bf s}) = 0.15-0.65x-0.1y+0.9x^2-xy+1.2y^2$ ($p = 6$), while for the correlation function  
we use the Mat\'ern model (\ref{eq:matern-cov}) with $\sigma^2 = 1$, 
range parameter $\vartheta = 0.2, 0.4$ and $0.7$ and smoothness parameter $\nu = 0.5$ and $1.5$. 
This setup provides a variety of scenarios in terms of trend, strength of correlation and smoothness.  
For each of these 12 scenarios, $3000$ data sets were simulated, and for each data set 
we generated $10^4$ posterior samples by the Monte Carlo algorithm described at the end of this part.

We compare the following frequentist properties of Bayesian procedures based on the three default priors
and a purely likelihood--based procedure to make inferences about the covariance parameters: 

\medskip

\noindent
(1) Let $[L(\boldsymbol{Z}), U(\boldsymbol{Z})]$ be either a Bayesian highest probability density 
credible interval or a confidence interval for a covariance parameter $\eta$. 
Its frequentist coverage and expected log--length are estimated from the simulated data by
\[
\frac{1}{3000} \sum_{j=1}^{3000} {\bf 1}\{L(\boldsymbol{z}_j) < \eta < U(\boldsymbol{z}_j)\} 
\quad {\rm and} \quad 
\frac{1}{3000} \sum_{j=1}^{3000} \big( \log U(\boldsymbol{z}_j) - \log L(\boldsymbol{z}_j) \big) ,
\]
respectively, where $\eta = \vartheta$ or $\sigma^2$, and 
$\boldsymbol{z}_j$ is the $j{\rm th}$ simulated data set. 
The confidence interval is obtained by evaluating the profile likelihood of $\eta$ and 
inverting a likelihood ratio test (Meeker and Escobar, 1995).

\noindent
(2) Let $\hat{\eta}_j$ be either a Bayesian estimator or the MLE of $\eta$ based on $\boldsymbol{z}_j$.
Its mean absolute error, $E\big(|\hat{\eta} - \eta|\big)$, is estimated by
\[
\frac{1}{3000} \sum_{j=1}^{3000} |\hat{\eta}_j - \eta| .
\]
For Bayesian estimation, $\hat{\eta}_j$ is the mode of $\pi(\eta ~|~ \boldsymbol{z}_j)$
when $\eta = \vartheta$, while
$\hat{\eta}_j$ is the median of $\pi(\eta ~|~ \boldsymbol{z}_j)$ when $\eta = \sigma^2$.

\medskip

\noindent
{\bf Results for the Regular Lattice Design}
To compute the approximate reference prior for models with constant mean, 
we used $M_1 = M_2 = 12$ and $\Delta = 0.133$ when $\nu = 0.5$, and 
$M_1 = M_2 = 14$ and $\Delta = 0.156$ when $\nu = 1.5$.
For models with non--constant mean, we used $M_1 = M_2 = 10$ and $\Delta = 0.122$
for both smoothness parameters. 
These choices were informed by the findings in Section \ref{sec:numerical studies}.
 
\begin{table}
\centering
\caption{Frequentist coverage probability and $[$average log--length$]$ of Bayesian 
95\% highest probability density credible intervals of $\vartheta$ based on the three default priors,
and the 95\% profile likelihood confidence interval of $\vartheta$. Regular design.}
\label{ta:ci-rho-hpd}
\begin{tabular}{lcccccccc} 
\toprule
\multirow{2}{*}{}   &   &       & $p=1$   &       &         &       & $p=6$   &        \\ 
\cline{3-5}\cline{7-9}
 &  & $\vartheta = 0.2$   & $\vartheta = 0.4$  & $\vartheta = 0.7$ & & $\vartheta = 0.2$ & $\vartheta = 0.4$  & $\vartheta = 0.7$  \\ 
\cline{3-5}\cline{7-9}
            &   &       &       &       & $\nu = 0.5$ &       &       &        \\
Inverse Gamma Prior&   & $0.963$   &  $0.968$   & $0.976$   &   & $0.952$  &   $0.980$     &    $0.983$  \\
                              &   & $[2.110]$ & $[3.878]$ & $[4.789]$ &   &  $[5.519]$ & $[5.885]$   &  $[6.030]$ \\

Exact Reference Prior          &   & $0.960$   &  $0.955$   & $0.952$   &   & $0.955$  &   $0.934$     &    $0.921$  \\
                              &   & $[1.632]$ & $[2.541]$ & $[3.066]$ &   &  $[2.079]$ & $[2.335]$   &  $[2.424]$   \\

Appr Reference Prior &   & $0.964$   &  $0.955$   & $0.946$   &   & $0.958$  &   $0.945$     &    $0.935$  \\
                              &   & $[1.535]$ & $[2.008]$ & $[2.313]$ &   &  $[2.181]$ &  $[2.388]$   &  $[2.459]$   \\
                              
MLE &   & $0.933$   &  $0.914$   & $0.885$   &   & $0.623$  &   $0.407$     &    $0.221$  \\
                              &   & $[4.847]$ & $[2.791]$ & $[4.940]$ &   &  $[9.440]$ &  $[5.889]$   &  $[4.335]$   \\ \hline
            &   &       &       &       & $\nu = 1.5$ &       &       &        \\
Inverse Gamma Prior &   & $0.955$   &  $0.952$   & $0.950$   &   & $0.956$  &   $0.967$     &    $0.977$  \\
                              &   & $[0.598]$ & $[0.784]$ & $[1.111]$ &   &  $[0.758]$ & $[1.883]$   &  $[3.926]$ \\
Exact Reference Prior          &   & $0.956$   &  $0.953$   & $0.951$   &   & $0.959$  &   $0.963$     &    $0.957$  \\
                              &   & $[0.596]$ & $[0.781]$ & $[1.106]$ &   &  $[0.742]$ & $[1.227]$   &  $[1.857]$ \\
Appr Reference Prior  &   & $0.960$   &  $0.952$   & $0.944$   &   & $0.960$  &   $0.967$     &    $0.960$  \\
                              &   & $[0.594]$ & $[0.765]$ & $[0.992]$ &   &  $[0.748]$ & $[1.199]$   &  $[1.703]$ \\ 
MLE &   & $0.944$   &  $0.930$   & $0.901$   &   & $0.772$  &   $0.603$     &    $0.395$  \\
                              &   & $[0.684]$ & $[0.714]$ & $[1.021]$ &   &  $[0.755]$ &  $[0.837]$   &  $[0.798]$   \\
\bottomrule
\end{tabular}
\end{table}

Table \ref{ta:ci-rho-hpd} reports for different models the frequentist coverage and [average log--length]  
of the 95\% highest probability density credible intervals (HPDCI) based on the three default priors
and of the profile likelihood confidence interval (PLCI) for $\vartheta$.
In almost all situations the HPDCI based on the three default priors have coverage probabilities 
close to the target $0.95$, with those from the approximate reference priors almost never being inferior 
to those from the exact reference priors.
The coverage probabilities of HPDCI based on the inverse gamma prior are a bit too large in some situations.
On the other hand, the coverage probabilities of PLCI tend to be smaller than the target $0.95$, and 
this is substantially so for models with non--constant mean.
The log--lengths of the HPDCI for $\vartheta$ based on the exact and approximate reference priors are 
similar in most situations, with those from approximate reference priors tending to be slightly shorter
than those from exact reference priors. The log--lengths of the HPDCI based on the inverse gamma prior 
tend to be larger, sometimes substantially so, and the same holds for PLCI.

\begin{table}[t]
\centering
\caption{Frequentist coverage probability and $[$average log--length$]$ of Bayesian 
95\% highest probability density credible intervals of $\sigma^2$ based on the three default priors, 
and the 95\% profile likelihood confidence interval of $\sigma^2$. Regular design.}
\label{ta:ci-sigsq-hpd}
\begin{tabular}{lcccccccc} 
\toprule
\multirow{ 2}{*}{}   &   &       & $p=1$   &       &         &       & $p=6$   &        \\ 
\cline{3-5}\cline{7-9}
            &  & $\vartheta = 0.2$   & $\vartheta = 0.4$  & $\vartheta = 0.7$ & & $\vartheta = 0.2$ & $\vartheta = 0.4$  & $\vartheta = 0.7$  \\ \cline{3-5}\cline{7-9}
            &   &       &       &       & $\nu = 0.5$ &       &       &        \\
Inverse Gamma Prior &   & $0.968$   &  $0.973$   & $0.976$   &   & $0.960$  &   $0.982$     &    $0.984$  \\
                              &   & $[1.577]$ & $[3.565]$ & $[4.541]$ &   &  $[3.723]$ & $[5.061]$   &  $[5.395]$ \\

Exact Reference Prior           &   & $0.968$   &  $0.960$   & $0.954$   &   & $0.964$  &   $0.934$     &    $0.927$  \\
                              &   & $[1.246]$ & $[2.261]$ & $[2.835]$ &   &  $[1.513]$ & $[1.932]$   &  $[2.070]$   \\

Appr Reference Prior &   & $0.971$   &  $0.959$   & $0.943$   &   & $0.963$  &   $0.945$     &    $0.933$  \\
                              &   & $[1.173]$ & $[1.796]$ & $[2.115]$ &   &  $[1.574]$ &  $[1.974]$   &  $[2.095]$   \\
MLE &   & $0.941$   &  $0.914$   & $0.892$   &   & $0.630$  &   $0.390$     &    $0.202$  \\
                              &   & $[1.130]$ & $[1.837]$ & $[3.863]$ &   &  $[1.307]$ &  $[1.238]$   &  $[1.958]$   \\\hline
            &   &       &       &       & $\nu = 1.5$ &       &       &        \\
Inverse Gamma Prior &   & $0.953$   &  $0.953$   & $0.948$   &   & $0.957$  &   $0.961$     &    $0.977$  \\
                              &   & $[1.068]$ & $[1.961]$ & $[3.127]$ &   &  $[1.385]$ & $[4.982]$   &  $[7.962]$ \\
Exact Reference Prior         &   & $0.954$   &  $0.955$   & $0.945$   &   & $0.956$  &   $0.958$     &    $0.954$  \\
                              &   & $[1.075]$ & $[1.954]$ & $[3.108]$ &   &  $[1.366]$ & $[3.142]$   &  $[5.184]$ \\
Appr Reference Prior &   & $0.957$   &  $0.956$   & $0.941$   &   & $0.957$  &   $0.961$     &    $0.955$  \\
                              &   & $[1.076]$ & $[1.911]$ & $[2.801]$ &   &  $[1.382]$ & $[3.080]$   &  $[4.768]$ \\ 
MLE &   & $0.940$   &  $0.932$   & $0.908$   &   & $0.746$  &   $0.577$     &    $0.376$  \\
                              &   & $[1.095]$ & $[1.781]$ & $[2.939]$ &   &  $[1.128]$ & $[1.872]$   &  $[2.962]$ \\ 
\bottomrule
\end{tabular}
\end{table}

\begin{table}[htbp!]
\centering
\caption{Mean absolute error of the posterior mode of $\vartheta$ based on the three default priors, 
and of the MLE of $\vartheta$ for the regular design.}
\label{ta:mae-rho}
\begin{tabular}{lcccccccc} 
\toprule
\multirow{2}{*}{}   &   &       & $p=1$   &       &         &       & $p=6$   &        \\ 
\cline{3-5}\cline{7-9}
            &  & $\vartheta = 0.2$   & $\vartheta = 0.4$  & $\vartheta = 0.7$ & & $\vartheta = 0.2$ & $\vartheta = 0.4$  & $\vartheta = 0.7$  \\ \cline{3-5}\cline{7-9}
            &   &       &       &       & $\nu = 0.5$ &       &       &        \\
Inverse Gamma Prior &   & $0.046$ & $0.121$ & $0.287$ &         & $0.059$ & $0.162$ & $0.404$  \\
Exact Reference Prior          &   & $0.044$ & $0.120$ & $0.296$ &         & $0.055$ & $0.175$ & $0.440$  \\
Appr Reference Prior &   & $0.043$ & $0.117$ & $0.294$ &         & $0.056$ & $0.169$ & $0.431$  \\  
MLE &   & $0.064$ & $0.131$ & $0.412$ & & $0.082$ & $0.195$ & $0.560$  \\  \hline
            &   &       &       &       & $\nu = 1.5$ &       &       &        \\
Inverse Gamma Prior &   & $0.023$ & $0.056$ & $0.128$ &         & $0.028$ & $0.073$ & $0.183$  \\
Exact Reference Prior         &   & $0.022$ & $0.055$ & $0.127$ &         & $0.027$ & $0.072$ & $0.181$  \\
Appr Reference Prior &   & $0.022$ & $0.055$ & $0.126$ &         & $0.027$ & $0.071$ & $0.176$  \\ 
MLE &   & $0.029$ & $0.057$ & $0.136$ &  & $0.036$ & $0.087$ & $0.287$  \\  
\bottomrule
\end{tabular}
\end{table}

\begin{table}[htbp!]
\centering
\caption{Mean absolute error of the posterior median of $\sigma^2$ based on the three default priors, 
and of the MLE of $\sigma^2$ for the regular design.}
\label{ta:mae-sigsq}
\begin{tabular}{lcccccccc} \toprule
\multirow{ 2}{*}{}   &   &       & $p=1$   &       &         &       & $p=6$   &        \\ 
\cline{3-5}\cline{7-9}
            &  & $\vartheta = 0.2$   & $\vartheta = 0.4$  & $\vartheta = 0.7$ & & $\vartheta = 0.2$ & $\vartheta = 0.4$  & $\vartheta = 0.7$  \\ \cline{3-5}\cline{7-9}
            &   &       &       &       & $\nu = 0.5$ &       &       &        \\
Inverse Gamma Prior &   & $0.240$ & $0.618$ & $0.843$ &         & $0.650$ & $0.988$ & $0.690$  \\
Exact Reference Prior         &   & $0.207$ & $0.356$ & $0.413$ &         & $0.238$ & $0.283$ & $0.431$  \\
Appr Reference Prior &   & $0.198$ & $0.273$ & $0.307$ &         & $0.256$ & $0.279$ & $0.414$  \\  
MLE &   & $0.205$ & $0.289$ & $0.442$ & & $0.333$ & $0.506$ & $0.613$  \\ \hline
            &   &       &       &       & $\nu = 1.5$ &       &       &        \\
Inverse Gamma Prior&   & $0.212$ & $0.389$ & $0.656$ &         & $0.287$ & $1.263$ & $3.013$  \\
Exact Reference  Prior  &   & $0.213$ & $0.386$ & $0.645$ &         & $0.282$ & $0.639$ & $0.892$  \\
Appr Reference Prior &   & $0.213$ & $0.373$ & $0.536$ &         & $0.287$ & $0.632$ & $0.863$  \\ 
MLE &   & $0.209$ & $0.348$ & $0.503$ & & $0.287$ & $0.524$ & $0.720$  \\ 
\bottomrule
\end{tabular}
\end{table}

Table \ref{ta:ci-sigsq-hpd} reports the frequentist coverage and [average log--length]  
of the 95\% HPDCI based on the three default priors and of the PLCI for $\sigma^2$.
The findings are very similar to those in Table \ref{ta:ci-rho-hpd} for $\vartheta$. 
In particular, the coverage probabilities of HPDCI based on the exact and approximate reference priors are
close to the target, while those of PLCI are substantially smaller than the target when the mean is non-constant.
Also, the log--lengths of the HPDCI for $\sigma^2$ based on the inverse gamma prior tend to be the largest 
and those based on the approximate reference priors tend to be the smallest.

Tables \ref{ta:mae-rho} and \ref{ta:mae-sigsq} report the mean absolute errors (MAE) 
of the Bayesian estimators based on the three default priors and the MAE of the MLE. 
The MAE of the three Bayesian estimates of $\vartheta$ are very close to each other in all scenarios,
while that of the MLE is larger.
The MAE of the Bayesian estimates of $\sigma^2$ based on the exact and approximate reference priors
are similar in most scenarios.
On the other hand, the MAE of the Bayesian estimates based on the inverse gamma prior tend to be larger than
the other two, sometimes substantially so, due to the tendency to overestimate $\sigma^2$. 
The same holds for the MLE, but due to the tendency 
to underestimate $\sigma^2$.

\medskip

\noindent
{\bf Results for the Irregular Design}

To compute the approximate reference priors for models with constant mean, we used 
$M_1 = M_2 = 12$ and $\Delta = 0.09$ when $\nu = 0.5$, and 
$M_1 = M_2 = 12$ with $\Delta = 0.1$ when $\nu = 1.5$.
For models with non--constant mean, we used $M_1 = M_2 = 12$ and $\Delta = 0.09$ 
for both smoothness parameters. The sampling design of 
this study is based on the complete random design 
shown in the bottom left of Figure \ref{fi:prior-aref-ir}. 

Tables \ref{ta:ci-rho-hpd-rand} and \ref{ta:ci-sigsq-hpd-rand} report the frequentist coverage and 
[average log--length] of the 95\% HPDCI for $\vartheta$ and $\sigma^2$, respectively.
For most scenarios the results are about the same as those for the regular design. 
The coverage probabilities based on the three default priors are satisfactory as they are close to the target 0.95. 
They all tend to be slightly lower than nominal though, when the correlation is strong ($\vartheta = 0.7$), 
and the coverage probabilities based on the inverse gamma prior are a bit too large in some situations.
Also, similar to the findings in the regular design, the inverse gamma prior can yield substantially wider 
confidence intervals. 
Tables \ref{ta:mae-rho-rand} and \ref{ta:mae-sigsq-rand} report the MAE  of the Bayesian estimators of 
$\vartheta$ and $\sigma^2$, respectively. 
The results are similar to the results shown  in Tables \ref{ta:mae-rho} and \ref{ta:mae-sigsq} for the 
regular lattice design. 
Again, the MAE of $\hat{\vartheta}$ based on the three default priors are about the same, but 
the MAE of $\hat{\sigma}^2$ based on the inverse gamma prior tends to be larger than the ones based on 
the reference priors due to the tendency of overestimation.

\begin{table}[htbp!]
\centering
\caption{Frequentist coverage probability and $[$average log--length$]$ of Bayesian 
95\% highest probability density credible intervals of $\vartheta$ based on 
the three default priors for the irregular design.}
\label{ta:ci-rho-hpd-rand}
\begin{tabular}{lcccccccc} 
\toprule
\multirow{2}{*}{Prior}   &   &       & $p=1$   &       &         &       & $p=6$   &        \\ 
\cline{3-5}\cline{7-9}
 &  & $\vartheta = 0.2$   & $\vartheta = 0.4$  & $\vartheta = 0.7$ & & $\vartheta = 0.2$ & $\vartheta = 0.4$  & $\vartheta = 0.7$  \\ 
\cline{3-5}\cline{7-9}
            &   &       &       &       & $\nu = 0.5$ &       &       &        \\
Inverse Gamma &   & $0.954$   &  $0.968$   & $0.980$   &   & $0.969$  &   $0.980$     &    $0.971$  \\
                              &   & $[2.316]$ & $[4.076]$ & $[4.853]$ &   &  $[4.995]$ & $[5.613]$   &  $[5.872]$ \\

Exact Reference            &   & $0.956$   &  $0.947$   & $0.956$   &   & $0.957$  &   $0.929$     &    $0.899$  \\
                              &   & $[1.777]$ & $[2.672]$ & $[3.155]$ &   &  $[2.133]$ & $[2.362]$   &  $[2.427]$   \\

Appr Reference  &   & $0.952$   &  $0.935$   & $0.920$   &   & $0.957$  &   $0.931$     &    $0.911$  \\
                              &   & $[1.524]$ & $[1.999]$ & $[2.282]$ &   &  $[2.167]$ &  $[2.385]$   &  $[2.461]$   \\  \hline
            &   &       &       &       & $\nu = 1.5$ &       &       &        \\
Inverse Gamma &   & $0.949$   &  $0.956$   & $0.964$   &   & $0.952$  &   $0.971$     &    $0.977$  \\
                              &   & $[0.585]$ & $[0.819]$ & $[1.185]$ &   &  $[0.755]$ & $[2.146]$   &  $[3.784]$ \\
Exact Reference           &   & $0.954$   &  $0.955$   & $0.964$   &   & $0.953$  &   $0.963$     &    $0.953$  \\
                              &   & $[0.584]$ & $[0.813]$ & $[1.183]$ &   &  $[0.738]$ & $[1.329]$   &  $[1.937]$ \\
Appr Reference  &   & $0.951$   &  $0.953$   & $0.951$   &   & $0.948$  &   $0.963$     &    $0.953$  \\
                              &   & $[0.581]$ & $[0.784]$ & $[1.013]$ &   &  $[0.741]$ & $[1.253]$   &  $[1.733]$ \\ 
\bottomrule
\end{tabular}
\end{table}

\begin{table}[htbp!]
\centering
\caption{Frequentist coverage probability and $[$average log--length$]$ of Bayesian 
95\% highest probability density credible intervals of $\sigma^2$ based on
the three default priors for the irregular design.}
\label{ta:ci-sigsq-hpd-rand}
\begin{tabular}{lcccccccc} 
\toprule
\multirow{ 2}{*}{Prior}   &   &       & $p=1$   &       &         &       & $p=6$   &        \\ 
\cline{3-5}\cline{7-9}
            &  & $\vartheta = 0.2$   & $\vartheta = 0.4$  & $\vartheta = 0.7$ & & $\vartheta = 0.2$ & $\vartheta = 0.4$  & $\vartheta = 0.7$  \\ \cline{3-5}\cline{7-9}
            &   &       &       &       & $\nu = 0.5$ &       &       &        \\
Inverse Gamma &   & $0.967$   &  $0.967$   & $0.980$   &   & $0.973$  &   $0.981$     &    $0.990$  \\
                              &   & $[1.946]$ & $[3.804]$ & $[4.634]$ &   &  $[4.074]$ & $[5.165]$   &  $[5.482]$ \\

Exact Reference            &   & $0.966$   &  $0.957$   & $0.958$   &   & $0.960$  &   $0.923$     &    $0.907$  \\
                              &   & $[1.454]$ & $[2.329]$ & $[2.949]$ &   &  $[1.683]$ & $[2.017]$   &  $[2.118]$   \\

Appr Reference  &   & $0.965$   &  $0.940$   & $0.917$   &   & $0.963$  &   $0.932$     &    $0.921$  \\
                              &   & $[1.230]$ & $[1.869]$ & $[2.106]$ &   &  $[1.718]$ &  $[2.038]$   &  $[2.130]$   \\  \hline
            &   &       &       &       & $\nu = 1.5$ &       &       &        \\
Inverse Gamma &   & $0.950$   &  $0.951$   & $0.960$   &   & $0.953$  &   $0.971$     &    $0.977$  \\
                              &   & $[1.198]$ & $[2.139]$ & $[3.395]$ &   &  $[1.593]$ & $[5.870]$   &  $[9.847]$ \\
Exact Reference           &   & $0.955$   &  $0.951$   & $0.958$   &   & $0.953$  &   $0.963$     &    $0.950$  \\
                              &   & $[1.201]$ & $[2.125]$ & $[3.390]$ &   &  $[1.556]$ & $[3.557]$   &  $[5.493]$ \\
Appr Reference  &   & $0.954$   &  $0.947$   & $0.947$   &   & $0.955$  &   $0.964$     &    $0.945$  \\
                              &   & $[1.197]$ & $[2.042]$ & $[2.897]$ &   &  $[1.561]$ & $[3.356]$   &  $[4.927]$ \\ 
\bottomrule
\end{tabular}
\end{table}

\begin{table}[htbp!]
\centering
\caption{ Mean absolute error of the posterior mode of $\vartheta$ based on the three default priors 
for the irregular design.}
\label{ta:mae-rho-rand}
\begin{tabular}{lcccccccc} 
\toprule
\multirow{ 2}{*}{Prior}   &   &       & $p=1$   &       &         &       & $p=6$   &        \\ 
\cline{3-5}\cline{7-9}
            &  & $\vartheta = 0.2$   & $\vartheta = 0.4$  & $\vartheta = 0.7$ & & $\vartheta = 0.2$ & $\vartheta = 0.4$  & $\vartheta = 0.7$  \\ \cline{3-5}\cline{7-9}
            &   &       &       &       & $\nu = 0.5$ &       &       &        \\
Inverse Gamma &   & $0.048$ & $0.129$ & $0.301$ &         & $0.061$ & $0.173$ & $0.432$  \\
Exact Reference           &   & $0.047$ & $0.131$ & $0.313$ &         & $0.058$ & $0.190$ & $0.466$  \\
Appr Reference &   & $0.046$ & $0.129$ & $0.326$ &         & $0.060$ & $0.186$ & $0.460$  \\  \hline
            &   &       &       &       & $\nu = 1.5$ &       &       &        \\
Inverse Gamma &   & $0.022$ & $0.059$ & $0.131$ &         & $0.028$ & $0.079$ & $0.195$  \\
Exact Reference           &   & $0.023$ & $0.059$ & $0.132$ &         & $0.028$ & $0.077$ & $0.192$  \\
Appr Reference &   & $0.022$ & $0.058$ & $0.130$ &         & $0.027$ & $0.076$ & $0.190$  \\ 
\bottomrule
\end{tabular}
\end{table}

\begin{table}[htbp!]
\centering
 \caption{Mean absolute error of the posterior median of $\sigma^2$ based on the three default priors 
for the irregular design.}
\label{ta:mae-sigsq-rand}
\begin{tabular}{lcccccccc} \toprule
\multirow{ 2}{*}{Prior}   &   &       & $p=1$   &       &         &       & $p=6$   &        \\ 
\cline{3-5}\cline{7-9}
            &  & $\vartheta = 0.2$   & $\vartheta = 0.4$  & $\vartheta = 0.7$ & & $\vartheta = 0.2$ & $\vartheta = 0.4$  & $\vartheta = 0.7$  \\ \cline{3-5}\cline{7-9}
            &   &       &       &       & $\nu = 0.5$ &       &       &        \\
Inverse Gamma &   & $0.299$ & $0.680$ & $0.859$ &         & $0.824$ & $0.975$ & $0.614$  \\
Exact Reference           &   & $0.244$ & $0.377$ & $0.426$ &         & $0.267$ & $0.303$ & $0.484$  \\
Appr Reference &   & $0.218$ & $0.269$ & $0.337$ &         & $0.281$ & $0.298$ & $0.473$  \\  \hline
            &   &       &       &       & $\nu = 1.5$ &       &       &        \\
Inverse Gamma &   & $0.240$ & $0.443$ & $0.717$ &         & $0.326$ & $1.403$ & $3.339$  \\
Exact Reference           &   & $0.240$ & $0.437$ & $0.699$ &         & $0.314$ & $0.769$ & $0.935$  \\
Appr Reference &   & $0.239$ & $0.415$ & $0.557$ &         & $0.317$ & $0.741$ & $0.892$  \\ 
\bottomrule
\end{tabular}
\end{table}

\medskip

\noindent
{\bf Algorithm to Sample from the Posterior Distribution}.
We describe a non--iterative Monte Carlo algorithm to draw independent samples from the 
exact or approximate reference posterior distribution of $(\bfbeta,\sigma^2,\vartheta)$, 
which by its nature is more efficient than alternative iterative Markov Chain Monte Carlo algorithms.
It is based on factoring the posterior as
\[
\pi(\bfbeta, \sigma^2, \vartheta ~|~ \boldsymbol{z}) = \pi(\bfbeta ~|~ \sigma^2, \vartheta, \boldsymbol{z}) 
\pi(\sigma^2 ~|~ \vartheta, \boldsymbol{z}) \pi(\vartheta ~|~ \boldsymbol{z}) ,
\]
where 
\begin{align*}
\label{eq:marg-post-rho}
	\pi(\bfbeta ~|~ \sigma^2, \vartheta, \boldsymbol{z}) &= {\rm N}_{p}\big(\hat{\bfbeta}_{\vartheta} , 
\sigma^2 (\boldsymbol{X}^{\top} \boldsymbol{\Sigma}^{-1}_{\vartheta}\boldsymbol{X})^{-1}\big) \nonumber \\
\pi(\sigma^2 ~|~ \vartheta, \boldsymbol{z}) &= {\rm IG} \left( \frac{n-p}{2},  \frac{S^2_{\vartheta}}{2} \right) \nonumber \\
\pi(\vartheta ~|~ \boldsymbol{z}) & \propto \pi(\vartheta)
|\boldsymbol{\Sigma}_{\vartheta}|^{-\frac{1}{2}} 
|\boldsymbol{X}^{\top}\boldsymbol{\Sigma}_{\vartheta}^{-1}\boldsymbol{X}|^{- \frac{1}{2}} (S^{2}_{\vartheta})^{-\frac{n-p}{2}} ,
\end{align*} 
and $\pi(\vartheta)$ is $\pi^{\rm R}(\vartheta)$, $\pi^{\rm AR}(\vartheta)$ or the 
inverse gamma prior; 
see Section (\ref{sec:derivations}) for the definition of the terms involved.
So a draw from $\pi(\bfbeta, \sigma^2, \vartheta ~|~ \boldsymbol{z})$ is obtained by sampling in turn
from $\pi(\vartheta ~|~ \boldsymbol{z})$, $\pi(\sigma^2 ~|~ \vartheta, \boldsymbol{z})$ and 
$\pi(\bfbeta ~|~ \sigma^2, \vartheta, \boldsymbol{z})$.
The only non--standard distribution is $\pi(\vartheta ~|~ \boldsymbol{z})$, which is one--dimensional
since the correlation function depends on a single range parameter.
To sample from it, we use the generalized ratio--of--uniforms algorithm, which is efficiently implemented 
in the {\tt R} package {\tt rust}. 
Details can be found in Sun (2006) and Northrop (2020).

\bigskip

\noindent
{\bf S6. Analysis of a Data Set Simulated on an Irregular Sampling Design}

\begin{figure}[b!]
\begin{center}
\psfig{figure=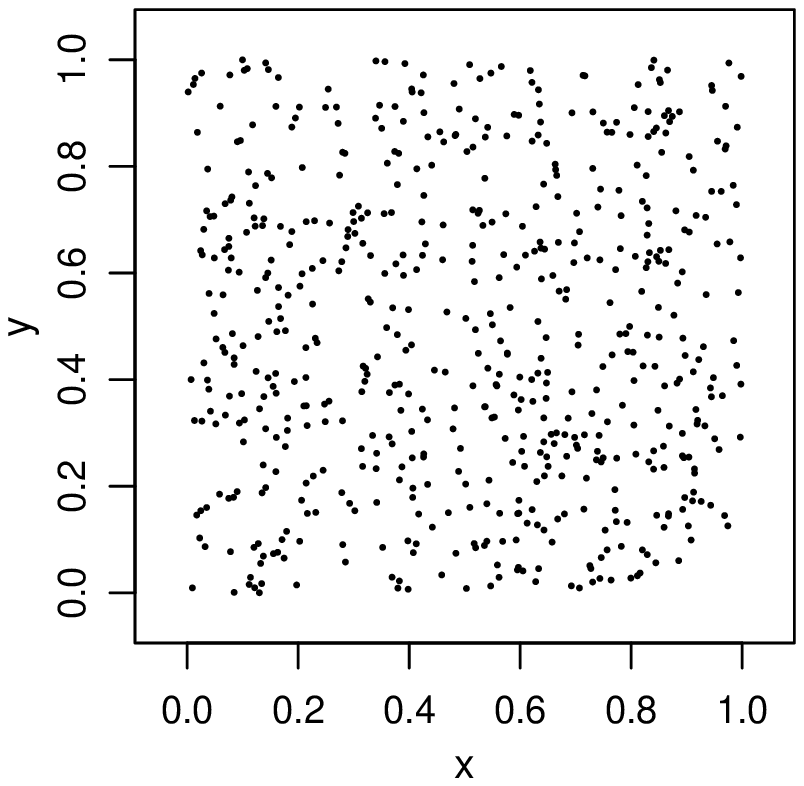, width=7cm,height=6cm} 
\psfig{figure=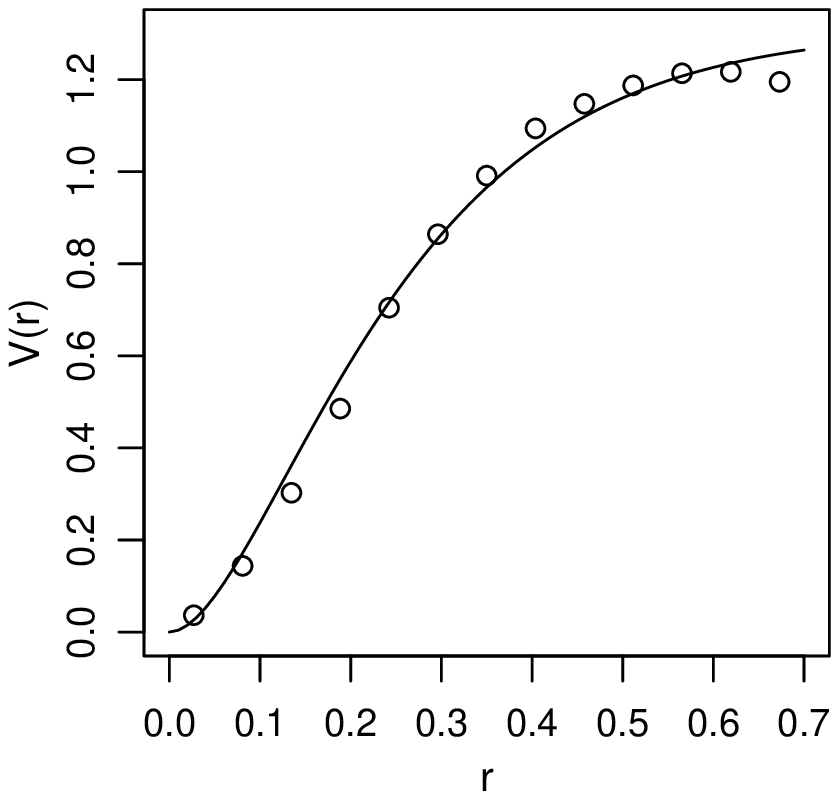, width=7cm,height=6cm} 
\end{center}
\vspace{-0.5 cm}
\caption{ Left: Sampling locations of the simulated data set. 
Right: Integrated likelihood of $\nu$ for the simulated same data.}
\label{fi:data2-fig1}
\end{figure}

Here we carry out data analyses that is parallel to 
those done in Section 6, but based on a 
(simulated) data set with different features from those of the lead concentration data.
Specifically, for the sampling design consisting of $n = 600$ locations forming a random sample from
the ${\rm unif}((0, 1)^2)$ distribution, we simulated a Gaussian random field with mean 0 and Mat\'ern
covariance function with parameters $\sigma^2 = 1$, $\vartheta = 0.3$ and $\nu = 1.5$.
For the analysis below it is assumed that $\hat{\nu} = 1.37$, obtained by maximizing the integrated likelihood of $\nu$ 
displayed in Figure \ref{fi:data2-fig2} (left).
Figure \ref{fi:data2-fig1} (left) displays the sampling design, while the empirical semivariogram of 
the data and the semivariogram function fitted by least squares ($\hat{\sigma}^2 = 1.31$ and $\hat{\vartheta} = 0.33$) 
are displayed in Figure \ref{fi:data2-fig1} (right).

To compute the approximate reference prior we set $M_1 = M_2 = 26$ and $\Delta = 0.04$
(following the guidelines described in Section \ref{sec:numerical studies}), and 
$\tilde{f}^{\Delta}_{\vartheta}(\bfomega_j)$ was obtained by setting $\max\{|l_1|, |l_2|\} \leq 5$.
Then, two Bayesian analyses were carried out based on the exact and approximate reference priors,
where samples of sizes $10^4$ from the corresponding posteriors of $(\beta_1,\sigma^2,\vartheta)$
were simulated using again the Monte Carlo algorithm described in part 5 of the Supplementary Materials.

\begin{figure}[t!]
\begin{center}
\psfig{figure=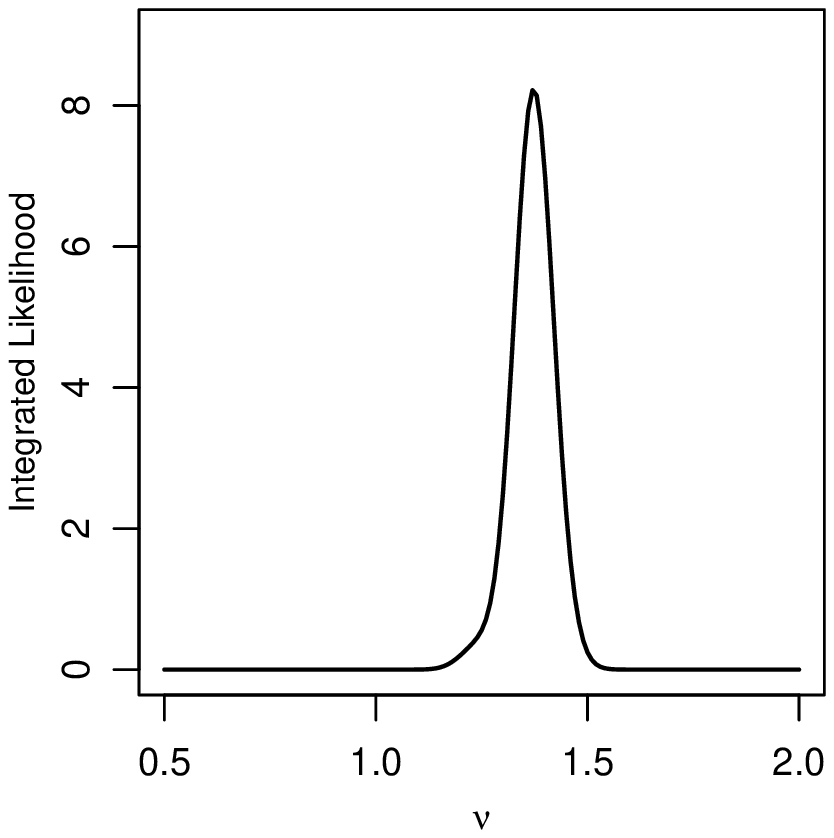, width=7cm,height=6cm} 
\psfig{figure=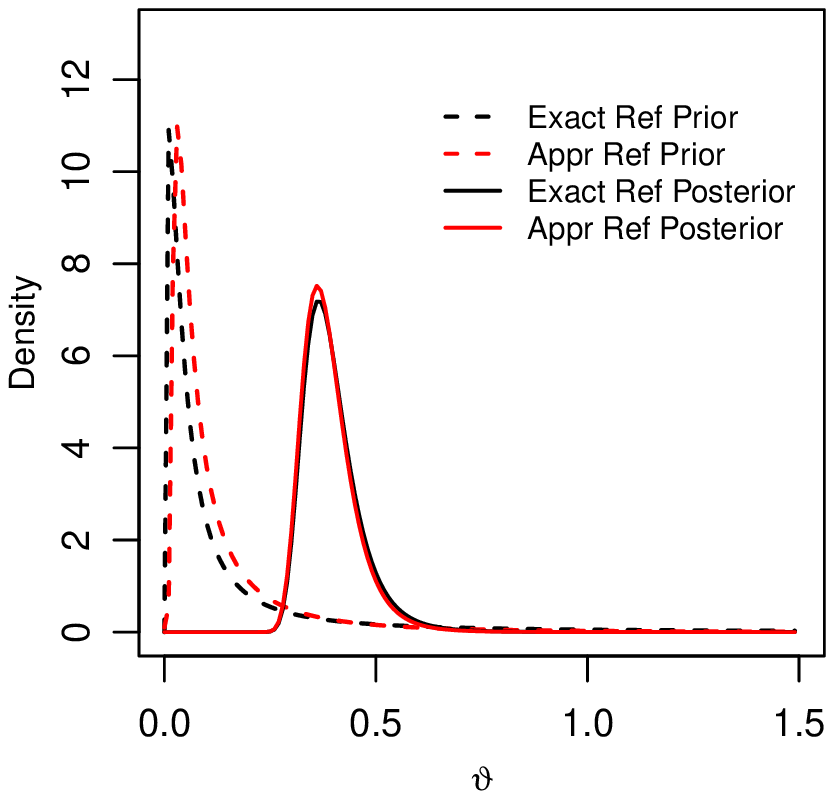, width=7cm,height=6cm}
\end{center}
\vspace{-0.5 cm}
\caption{Left: Empirical semivariogram of the simulated data set and  
its least squares fit. 
Right: Densities of exact and approximate marginal reference priors and 
posteriors of $\vartheta$ for the simulated data set.}
\label{fi:data2-fig2}
\end{figure}

Figure \ref{fi:data2-fig2} (right) displays the (normalized) exact and approximate 
reference priors of $\vartheta$, as well as their corresponding posteriors. 
The prior densities exhibit a slightly larger discrepancy than those for the lead concentration data, 
which is attributed to the sampling design being quite  irregular.
Nevertheless, the difference is not large, the shape of the two priors are similar, with both 
placing large and small probability masses in the same regions of the parameter space, and
the posterior densities are very close to each other.
Table \ref{ta:data2-tab} reports the Bayesian estimators of the model parameters and their 
corresponding $95\%$ highest posterior density (HPD) credible intervals based on both posteriors, 
showing that both inferences are, for practical purposes, equivalent. 

The computation time to draw $10^4$ posterior samples based on the 
approximate reference prior was about $2240$ seconds, while the time to do the same task 
based on exact reference prior was $5502$ seconds.
In both the exact likelihood was used so the time difference is due to prior evaluations.
Since the assumed smoothness is not equal to $m + 1/2$ for some non--negative integer $m$,
the evaluation of the covariance function and its derivative w.r.t. $\vartheta$ involved in 
the computation of the exact reference prior requires $O(n^2)$ evaluations of Bessel functions. 

\begin{table}[t]
\begin{center}
\caption{Parameter estimates of the 
 data with an irregular sampling design   
using exact and approximate reference priors. 
The estimate $\hat{\vartheta}$ is the posterior mode,  $\hat{\sigma}^2$ is the posterior median 
and $\hat{\beta}_1$ is the posterior mean. The 95\% credible intervals are the HPD.}
\label{ta:data2-tab}
\begin{tabular}{lccc}
\toprule
\multirow{2}{*}{Prior} & $\hat{\beta}_1$ & $\hat{\sigma}^2$ & $\hat{\vartheta}$ \\ 
 &  $(95\%~{\rm CI})$  & $(95\%~{\rm CI})$   & $(95\%~{\rm CI})$ \\
 \midrule
\multirow{2}{*}{Exact Reference} 
& $0.589$ & $1.299$ & $0.365$  \\ 
 & $(-0.410, 1.706)$ & $(0.547, 2.950)$ & $(0.286, 0.535)$ \\ \midrule
\multirow{2}{*}{Approximate Reference} & $0.582$   & $1.261$ &  $0.361$ \\ 
 & $(-0.394, 1.681)$ &  $(0.564, 2.863)$ & $(0.287, 0.522)$ \\ \bottomrule
\end{tabular}
\end{center}
\end{table}

\newpage

\noindent{\bf S7. Additional References}

\medskip

\noindent Chipman, J.S. (1964), 
On Least Squares with Insufficient Observations, 
{\it Journal of the American Statistical Association}, 59, 1078-1111.

\medskip
\noindent Dahlquist, G. and Bj\"{o}rck, {\AA} (2008), {\it Numerical Methods in Scientific Computing}, 
Volume I. SIAM.

\medskip
\noindent Ghosh, J.K. and Mukerjee, R. (1992), Non--Informative Priors. 
In: {\it Bayesian Statistics 4}, J.M. Bernardo, J.O. Berger, A.P. Dawid and A.F.M. Smith (eds.), 
Oxford University Press, pp 195-210.

\medskip
\noindent Mohammadi, M. (2016), 
On the Bounds for Diagonal and Off-diagonal Elements of the Hat Matrix in the Linear  
Regression Model, {\it REVSTAT - Statistical Journal}, 14, 75-87. 

\medskip
\noindent
Meeker, W.Q. and L.A. Escobar. (1995), Teaching About Approximate Confidence Regions Based
on Maximum Likelihood Estimation, {\it The American Statistician}, 49, 48-53.

\medskip
\noindent Northrop, P. (2020), rust: Ratio--of--Uniforms Simulation with Transformation, 
{\it R package version 1.3.10},  \url{https://github.com/paulnorthrop/rust}.

\medskip
\noindent Sedrakyan, N. (1997), About the Applications of One Useful Inequality. 
{\it Kvant Journal}, 97, 42-44. 

\medskip
\noindent Sun, X. (2006), Bayesian Spatial Data Analysis with Application to the Missouri Ozark Forest 
Ecosystem Project, Ph.D. Dissertation, University of Missouri.

\end{document}